 \newtheorem{theorem}{Theorem}[section]
 \newtheorem{claim}[theorem]{Claim}
\newtheorem{corollary}[theorem]{Corollary}
\newtheorem{lemma}[theorem]{Lemma}
\newtheorem{fact}[theorem]{Fact}
\newtheorem{observation}[theorem]{Observation}
\newtheorem*{rep@theorem}{\rep@title}
\newcommand{\newreptheorem}[2]{%
\newenvironment{rep#1}[1]{%
 \def\rep@title{#2 \ref{##1}}%
 \begin{rep@theorem}}%
 {\end{rep@theorem}}}
\DeclareMathOperator{\lcp}{lcp}
\DeclareMathOperator{\polylog}{polylog}
\DeclareMathOperator{\polyloglog}{polyloglog}
\newcommand{\Oh}{\mathcal{O}}
\newcommand{\floor}[1]{\left\lfloor #1 \right\rfloor}
\newcommand{\eps}{\varepsilon}
\newcommand{\sub}{\subseteq}
\newcommand{\rank}{\mathrm{rank}}
\newcommand{\select}{\mathrm{select}}
\newcommand{\parent}{\mathrm{parent}}
\newcommand{\rchild}{\mathrm{rchild}}
\newcommand{\lchild}{\mathrm{lchild}}
\renewcommand{\S}{\mathcal{S}}
\newcommand{\mayqed}{}
\begin{document}
\title{Wavelet Trees Meet Suffix Trees\thanks{The third author is supported by Polish budget funds for science in 2013-2017 as a research project under the `Diamond Grant' program. The fourth author is partly supported by Dynasty Foundation.}}

\author[1]{Maxim Babenko}
\author[2]{Paweł Gawrychowski}
\author[3]{Tomasz Kociumaka}
\author[1]{Tatiana Starikovskaya}

\affil[1]{National Research University Higher School of Economics (HSE)}
\affil[ ]{\texttt{mbabenko@hse.ru, tstarikovskaya@hse.ru}}
\affil[2]{Max-Planck-Institut f\"{u}r Informatik}
\affil[ ]{\texttt{gawry@cs.uni.wroc.pl}}
\affil[3]{Institute of Informatics, University of Warsaw}
\affil[ ]{\texttt{kociumaka@mimuw.edu.pl}}

\date{\empty}
\maketitle

\begin{abstract}
We present an improved wavelet tree construction algorithm
and discuss its applications to a number of rank/select problems for
integer keys and strings.

Given a string of length $n$ over an alphabet of size $\sigma\leq n$,
our method builds the wavelet tree in $\Oh(n \log \sigma/ \sqrt{\log{n}})$ time,
improving upon the state-of-the-art algorithm by a factor of $\sqrt{\log n}$.
As a consequence, given an array of $n$ integers we can construct in $\Oh(n \sqrt{\log n})$ time
a data structure consisting of $\Oh(n)$ machine words and capable of answering rank/select queries
for the subranges of the array in $\Oh(\log n / \log \log n)$ time. This is
a $\log \log n$-factor improvement in query time compared to Chan and P\u{a}tra\c{s}cu (SODA 2010)
and a $\sqrt{\log n}$-factor improvement in construction time compared to Brodal et al. (Theor. Comput. Sci. 2011).
We also design an offline algorithm for range successor queries and improve construction time for $o(\log n)$-time range successor queries in the online setting.

Next, we switch to stringological context and propose a novel notion of \emph{wavelet suffix trees}.
For a string $w$ of length $n$, this data structure occupies $\Oh(n)$ words, takes $\Oh(n \sqrt{\log n})$ time
to construct, and simultaneously captures the combinatorial structure of substrings of $w$
while enabling efficient top-down traversal and binary search. In particular, with a wavelet suffix tree we can answer in $\Oh(\log |x|)$ time two analogues of rank/select queries for suffixes of substrings:
\begin{enumerate}[1)] \compact
    \item For substrings $x$ and $y$ of $w$
    (given by their endpoints) count the number of suffixes of $x$ that
    are lexicographically smaller than $y$;

    \item For a substring $x$ of $w$ (given by its endpoints) and an integer $k$,
    find the $k$-th lexicographically smallest suffix of $x$.
\end{enumerate}
We further show that wavelet suffix trees allow to compute a run-length-encoded Burrows-Wheeler transform of a substring $x$ of $w$ (again, given by its endpoints) in $\Oh(s \log |x|)$ time,
where $s$ denotes the length of the resulting run-length encoding. This answers a question
by Cormode and Muthukrishnan (SODA 2005), who considered an analogous problem for Lempel-Ziv compression.

All our algorithms, except for the construction of wavelet suffix trees,
which additionally requires $\Oh(n)$ time in expectation, are deterministic and operate in the word RAM model.

\end{abstract}

\section{Introduction}
Let $\Sigma$ be a finite ordered non-empty set which we refer to as an \emph{alphabet}. The elements of~$\Sigma$ are called \emph{characters}. characters are treated as integers in a range $[0, \sigma-1]$; and we assume that a pair of characters can be compared in $\Oh(1)$ time. A finite ordered sequence of characters (possibly empty) is called a \emph{string}.
characters in a string are enumerated starting from~$1$, that is, a string $w$ of \emph{length} $n$ consists of characters $w[1], w[2], \ldots, w[n]$. 

\paragraph{Wavelet trees.}
The wavelet tree, invented by Grossi, Gupta, and Vitter \cite{Grossi:2003:HET:644108.644250}, is an important data structure with a vast number of applications to stringology, computational geometry, and others (see \cite{DBLP:journals/jda/Navarro14} for an excellent survey). Despite this, the problem of wavelet tree construction has not received much attention in the literature. For a string $w$ of length $n$, one can derive a construction algorithm with $\Oh(n \log \sigma)$ running time directly from the definition. Apart from this, two recent works~\cite{WaveletTreeConstruction1,WaveletTreeConstruction2} present construction algorithms in the setting when limited extra space is allowed. Running time of the algorithms is higher than that of the naive algorithm without these space restrictions.

The first result of our paper is a novel deterministic algorithm for constructing a wavelet tree in $\Oh(n \log \sigma/ \sqrt{\log{n}})$ time. 
Our algorithm is capable of building arbitrary-shaped (binary) wavelet trees of $\Oh(\log \sigma)$ height in asymptotically the same time.
Hence, it can be used to construct wavelet trees which do not form a perfect binary tree, for example wavelet trees for Huffman encoding~\cite{DBLP:conf/soda/GrossiGV04}, or wavelet tries~\cite{DBLP:conf/pods/GrossiO12} that have the shape of a trie for a given set of strings.
Similar results on wavelet trees were obtained independently by Munro, Nekrich, and Vitter~\cite{DBLP:conf/spire/MunroNV14}.

Our contribution also lies in applying the wavelet tree construction algorithm and the underlying ideas to derive some improved bounds for range queries. Namely, given an array $A[1..n]$ of integers, one could ask to compute the number of integers in $A[i..j]$ that
are smaller than a given $A[k]$ (\emph{range rank} query);
or to find, for given $i$, $j$, and~$k$,
the $k$-th smallest integer in $A[i..j]$ (\emph{range select} query);
or to determine the smallest value in $A[i..j]$ which is at least as large as a given $A[k]$ (\emph{range successor} query).

These problems have been widely studied; see e.g.
Chan and P\u{a}tra\c{s}cu~\cite{DBLP:conf/soda/ChanP10} and
Brodal et al.~\cite{BrodalMedian} for range rank/select queries and \cite{SortedRangeReporting,DBLP:journals/corr/Zhou13b} for range successor queries.
By slightly tweaking our construction of wavelet trees, we can build in $\Oh(n \sqrt{\log n})$ deterministic time
an $\Oh(n)$ size data structure for answering range rank/select queries in $\Oh(\frac{\log n}{\log \log n})$
time. 
Our approach yields
a $\sqrt{\log n}$-factor improvement to the construction time upon
Brodal et al. \cite{BrodalMedian} and
a $\log \log n$-factor improvement to the query time upon
Chan and P\u{a}tra\c{s}cu \cite{DBLP:conf/soda/ChanP10}.
To answer range successor queries online, we show a reduction to range rank and select queries, which yields the same bounds.
Additionally, we show how to answer $q$ range successor queries in an offline manner
in $\Oh((n+q)\sqrt{\log n})$ total time using a wavelet-tree-based algorithm.
In particular, this implies $\Oh(n\sqrt{\log n})$ time for $q=\Theta(n)$ queries, which improves by a factor of $\sqrt{\log n}$ over previously known solutions.
To the best of our knowledge, no prior algorithm could answer $q=\Theta(n)$ queries in $o(n\log n)$ time.
The previously existing solutions have better query time, but require $\Omega(n\log n)$-time
preprocessing. More precesiely, $\Oh(\log\log n)$ query time can be achieved using $\Oh(n\log\log n)$ space~\cite{DBLP:journals/corr/Zhou13b}
and $\Oh(\log^{\eps} n)$ query time (for any $\eps>0$) with $\Oh(n)$ space~\cite{SortedRangeReporting}.

\paragraph{Wavelet suffix trees.}
Then we switch to stringological context and extend our approach to the so-called
\emph{internal} string problems (see~\cite{DBLP:journals/corr/KociumakaRRW13}).
This type of problems involves construction of a compact data structure for a given fixed string $w$
capable of answering certain queries for substrings of $w$. This line of development was originally inspired by suffix trees,
which can answer some basic internal string queries (e.g. equality testing and longest common extension
computation) in constant time and linear space.
Lately a number of studies emerged addressing compressibility \cite{CormodeMuthukrishnan05,Keller201442},
range longest common prefixes (range LCP) \cite{RangeLCP,FasterRangeLCP},
periodicity \cite{FactorPeriodicity2012,Crochemore:2010:EPP:1928328.1928362},
minimal/maximal suffixes \cite{Minmaxsuf,MinmaxsufRevisited}, substring hashing~\cite{PerfectHashing,WeightedAncestors}, and fragmented pattern matching~\cite{FragmentedPatternMatching,WeightedAncestors}.

Our work focuses on rank/select problems for suffixes of a given substring.
Given a fixed string~$w$ of length $n$, the goal is to preprocess it into a compact data structure
for answering the following two types of queries efficiently:
\begin{enumerate}[1)]
    \item \emph{Substring suffix rank}: For substrings $x$ and $y$ of $w$
    (given by their endpoints) count the number of suffixes of $x$ that
    are lexicographically smaller than~$y$;

    \item \emph{Substring suffix select}: For a substring $x$ of $w$ (given by its endpoints) and an integer $k$,
    find the $k$-th lexicographically smallest suffix of $x$.
\end{enumerate}

Note that for $k = 1$ and $k = |x|$ substring suffix select queries reduce to computing
the lexicographically minimal and the lexicographically maximal suffixes of a given substring.
Study of this problem was started by Duval~\cite{Duval}. He showed that the maximal and the minimal suffixes of \emph{all prefixes} of a string can be found in linear time and constant additional space. Later this problem was addressed in~\cite{Minmaxsuf,MinmaxsufRevisited}. In~\cite{MinmaxsufRevisited} it was shown that the minimal suffix of any substring can be computed in $\Oh(\tau)$ time by a linear space data structure with $\Oh(n \log n / \tau)$ construction time for any parameter $\tau$, $1 \le \tau \le \log n$. As an application of this result it was shown how to construct the Lyndon decomposition of any substring in $\Oh(\tau s)$ time, where $s$ stands for the number of distinct factors in the decomposition. 
For the maximal suffixes an optimal linear-space data structure with $\Oh(1)$ query and $\Oh(n)$ construction time was presented. We also note that~\cite{MakinenRankSelect} considered a problem with a similar name, namely substring rank and select. However, the goal there is to preprocess a string, so that given any pattern, we can count its occurrences in a specified prefix of the string, and select the $k$-th occurrence in the whole string. One can easily see that this problem is substantially different than the one we are interested in.

Here, we both generalize the problem to an arbitrary $k$ (thus enabling to answer general \emph{substring suffix select} queries) and also consider \emph{substring suffix rank} queries. We devise a linear-space data structure with $\Oh(n \sqrt{\log n})$ expected construction time supporting both types of the queries in $\Oh(\log |x|)$ time.

Our approach to substring suffix rank/select problems is based on wavelet trees and attracts a number of additional combinatorial and algorithmic ideas. Combining wavelet trees with suffix trees we introduce a novel concept of \emph{wavelet suffix trees}, which forms a foundation of our data structure. Like usual suffix trees, wavelet suffixes trees provide a compact encoding for all substrings of a given text; like wavelet trees they maintain logarithmic height. 
Our hope is that these properties will make wavelet suffix trees an attractive tool for a large variety of stringological problems.

We conclude with an application of wavelet suffix trees to substring compression,
a class of problems introduced by Cormode and Muthukrishnan~\cite{CormodeMuthukrishnan05}.
Queries of this type ask for a compressed representation of a given substring.
The original paper, as well as a more recent work by Keller et al.~\cite{Keller201442},
considered Lempel-Ziv compression schemes. Another family of methods, based on Burrows-Wheeler transform~\cite{BWT}
was left open for further research. We show that wavelet suffix trees allow to compute a run-length-encoded Burrows-Wheeler transform of an arbitrary substring $x$ of $w$ (again, given by its endpoints) in $\Oh(s \log |x|)$ time, where $s$ denotes the length of the resulting run-length encoding.

\section{Wavelet Trees}\label{sec:wt}
Given a string $s$ of length $n$ over an alphabet $\Sigma = [0,\sigma-1]$, where $\sigma\leq n$,
we define the wavelet tree of $s$ as follows. First, we create the root node $r$ and construct its
bitmask $B_r$ of length $n$. To build the bitmask, we think of every
$s[i]$ as of a binary number consisting of exactly $\log\sigma$ bits (to make the
description easier to follow, we assume that $\sigma$ is a power of $2$), and put
the most significant bit of $s[i]$ in $B_{r}[i]$. Then we partition $s$ into $s_{0}$ and $s_{1}$
by scanning through $s$ and appending $s[i]$ with the most significant bit removed to
either $s_{1}$ or $s_{0}$, depending on whether the removed bit
of $s[i]$ was set or not, respectively. Finally, we recursively define the wavelet trees
for $s_{0}$ and $s_{1}$, which are strings over the alphabet $[0,\sigma/2-1]$,
and attach these trees to the root. We stop when the alphabet is unary. The final result
is a perfect binary tree on $\sigma$ leaves with a bitmask attached at every non-leaf node.
Assuming that the edges are labelled by {\bf 0} or {\bf 1} depending
on whether they go to the left or to the right respectively, we can define a \emph{label} of a node 
to be the concatenation of the labels of the edges on the path from the root to this node.
Then leaf labels are the binary representations of characters in $[0,\sigma-1]$; see Figure~\ref{fig:wavelet example1} for an example.
 
In virtually all applications,
each bitmask $B_{r}$ is augmented with a rank/select structure. For a bitmask $B[1..N]$ this structure implements operations $\rank_{b}(i)$, which counts
the occurrences of a bit $b \in \{0,1\}$ in $B[1..i]$, and $\select_{b}(i)$, which selects the
$i$-th occurrence of $b$ in the whole $B[1..n]$, for any $b\in\{0,1\}$,
both in $\Oh(1)$ time.

\begin{figure}[tbp]
\begin{center}
\begin{tikzpicture}[scale=0.88, every node/.style={transform shape},level/.style={level distance=10mm,sibling distance=48mm/#1},level 2/.style={sibling distance=24mm},level 3/.style={sibling distance=12mm}, level 4/.style={sibling distance=6mm},minimum width=15pt]
\node [draw] {1011100000101011}
child {
 node [draw] {11100001}
 child {
  node [draw] {0011}
  child {
   node [draw] {01}
   child {
    node [ellipse,draw,inner sep=0pt] {$0$}
   }
   child {
    node [ellipse,draw,inner sep=0pt] {$1$}
   }
  }
  child {
   node [draw] {01}
   child {
    node [ellipse,draw,inner sep=0pt] {$2$}
   }
   child {
    node [ellipse,draw,inner sep=0pt] {$3$}
   }
  }
 }
 child {
  node [draw] {1100}
  child {
   node [draw] {01}
   child {
    node [ellipse,draw,inner sep=0pt] {$4$}
   }
   child {
    node [ellipse,draw,inner sep=0pt] {$5$}
   }
  }
  child {
   node [draw] {10}
   child {
    node [ellipse,draw,inner sep=0pt] {$6$}
   }
   child {
    node [ellipse,draw,inner sep=0pt] {$7$}
   }
  }
 }
}
child {
 node [draw] {10100101}
 child {
  node [draw] {1010}
  child {
   node [draw] {10}
   child {
    node [ellipse,draw,inner sep=0pt] {$8$}
   }
   child {
    node [ellipse,draw,inner sep=0pt] {$9$}
   }
  }
  child {
   node [draw] {10}
   child {
    node [ellipse,draw,inner sep=0pt] {$10$}
   }
   child {
    node [ellipse,draw,inner sep=0pt] {$11$}
   }
  }
 }
 child {
  node [draw] {0101}
  child {
   node [draw] {01}
   child {
    node [ellipse,draw,inner sep=0pt] {$12$}
   }
   child {
    node [ellipse,draw,inner sep=0pt] {$13$}
   }
  }
  child {
   node [draw] {10}
   child {
    node [ellipse,draw,inner sep=0pt] {$14$}
   }
   child {
    node [ellipse,draw,inner sep=0pt] {$15$}
   }
  }
 }
};
\end{tikzpicture}
\end{center}
\caption{Wavelet tree for a string $1100_2$ $0111_2$  $1011_2$  $1111_2$  $1001_2$  $0110_2$  $0100_2$  $0000_2$  $0001_2$  $0010_2$  $1010_2$  $0011_2$  $1101_2$  $0101_2$  $1000_2$  $1110_2$, the leaves are labelled with their corresponding characters.}
\label{fig:wavelet example1}
\end{figure}

The bitmasks and their corresponding rank/select structures are stored one after another, each
starting at a new machine word. The total space occupied by the bitmasks alone
is $\Oh(n \log \sigma)$ bits, because there are $\log\sigma$ levels, and the lengths of
the bitmasks for all nodes at one level sum up to $n$. A rank/select structure built for a bit
string $B[1..N]$ takes $o(N)$ bits~\cite{Clark,Jacobson}, so the space taken by all of them is
$o(n\log \sigma)$ bits. Additionally, we might lose one machine word per node
because of the word alignment, which sums up to $\Oh(\sigma)$.
For efficient navigation, we number the nodes in a heap-like fashion and, using $\Oh(\sigma)$ space,
store for every node the offset where its bitmasks and the corresponding rank/select
structure begin. Thus, the total size of a wavelet tree
is $\Oh(\sigma + n \log \sigma / \log n)$, which for $\sigma \le n$ is
$\Oh(n \log \sigma / \log n)$.

Directly from the recursive definition, we get a construction algorithm taking $\Oh(n\log\sigma)$ time,
but we can hope to achieve a better time complexity as the size of the output is just $\Oh(n \log \sigma / \log n)$
words. In Section~\ref{sec:WTconstruction} we show that one can construct all bitmasks augmented
with the rank/select structures in total time $\Oh(n \log \sigma/ \sqrt{\log{n}})$.

\paragraph{Arbitrarily-shaped wavelet trees.}
A generalized view on a wavelet tree is that apart from a string $s$ we are given
an arbitrary full binary tree $T$ (i.e., a rooted tree whose nodes have 0 or 2 children) on $\sigma$ leaves, together 
with a bijective mapping between the leaves and the characters in $\Sigma$.
Then, while defining the bitmasks~$B_v$, we do not remove the most significant bit of each character,
and instead of partitioning the values based on this bit,
we make a decision based on whether the leaf corresponding to the character lies in the left or in the right subtree of $v$.
Our construction algorithm generalizes to such arbitrarily-shaped wavelet trees without increasing the time complexity
provided that the height of $T$ is $\Oh(\log\sigma)$.

\paragraph{Wavelet trees of larger degree.}
Binary wavelet trees can be generalized to higher degree trees in a natural way as follows.
We think of every $s[i]$ as of a number in base $d$. The degree-$d$ wavelet
tree of $s$ is a tree on $\sigma$ leaves, where every inner node is of degree $d$, except for the
last level, where the nodes may have smaller degrees. First, we create
its root node $r$ and construct its string $D_{r}$ of length $n$ setting as $D_r[i]$ the most significant digit of $s[i]$. We partition
$s$ into $d$ strings $s_{0},s_{1},\ldots,s_{d-1}$ by scanning through $s$ and appending
$s[i]$ with the most significant digit removed to $s_{a}$, where $a$ is the removed
digit of $s[i]$. We recursively repeat the construction for every $s_{a}$ and attach
the resulting tree as the $a$-th child of the root. All strings $D_{u}$ take
$\Oh(n\log \sigma)$ bits in total, and every $D_{u}$ is augmented with a generalized rank/select
structure.

We consider $d=\log^{\eps}n$, for a small constant $\eps>0$. Remember that we assume
$\sigma$ to be a power of $2$, and because of similar technical reasons we assume $d$
to be a power of two as well. Under such assumptions, our construction algorithm for
binary wavelet trees can be used to construct a higher degree wavelet tree. More precisely,
in Section~\ref{sec:higher degree construction} we show how to construct such
a higher degree tree in $\Oh(n\log\log n)$, assuming that we are already given the binary
wavelet tree, making the total construction time $\Oh(n\sqrt{\log n})$ if $\sigma\leq n$, and allowing
us to derive improved bounds on range selection, as explained in Section~\ref{sec:higher degree construction}.

\subsection{Wavelet Tree Construction}
\label{sec:WTconstruction}
Given a string $s$ of length $n$ over an alphabet $\Sigma$, we want to construct its wavelet tree, which 
requires constructing the bitmasks and their rank/select structures, in $\Oh(n \log \sigma / \sqrt{\log n})$ time.

\paragraph{Input representation.}
The input string $s$ can be considered as a list of $\log \sigma$-bit integers. We assume that it is represented in a packed form, as described below.

A single machine word of length $\log n$ can accommodate $\frac{\log n}{b}$ $b$-bit integers.
Therefore a list of $N$ such integers can be stored in $\frac{Nb}{\log n}$ machine words. (For $s$, $b = \log \sigma$, but later we will also consider other values of $b$.)
We call such a representation a \emph{packed list}.

We assume that packed lists are implemented as resizable bitmasks (with each block of $b$ bits
representing a single entry) equipped with the size of the packed list.
This way a packed list of length $N$ can be appended to another packed list in $\Oh(1+\frac{Nb}{\log n})$ time,
since our model supports bit-shifts in $\Oh(1)$ time.
Similarly, splitting into lists of length at most $k$ takes
$\Oh(\frac{Nb}{\log n}+\frac{N}{k})$ time.

\paragraph{Overview.}
Let $\tau$ be a parameter
to be chosen later to minimize the total running time.
We call a node $u$ \emph{big} if its depth is a multiple of $\tau$, and \emph{small} otherwise. 
The construction conceptually proceeds in two phases.

First, for every big node $u$ we build a list $S_u$. Remember that the label $\ell_{u}$ of a node $u$ at distance $\alpha$
from the root consists of $\alpha$ bits, and the characters corresponding to the leaves below $u$ are exactly the
characters whose  binary representation starts with $\ell_u$. $S_u$ is defined to
be a subsequence of $s$ consisting of these characters. 

Secondly, we construct the bitmasks $B_{v}$ for every node $v$ using $S_u$ of the nearest big ancestor $u$ of $v$ (possibly $v$ itself).

\paragraph{First phase.} Assume that for a big node $u$ at distance $\alpha \tau$ from the root we are given the list~$S_u$.
We want to construct the lists $S_v$ for all big nodes $v$ whose deepest (proper) big ancestor is~$u$.
 There are exactly $2^{\tau}$ such nodes $v$, call them $v_{0},\ldots,v_{2^{\tau}-1}$.
To construct their lists $S_{v_i}$, we scan through $S_u$ and append $S_u[j]$
to the list of $v_{t}$, where $t$ is a bit string of length $\tau$ occurring in
the binary representation of $S_u[j]$ at position $\alpha\tau+1$. 
We can extract $t$ in constant time, so the total complexity is linear in the total
number of elements on all lists, i.e., $\Oh(n \log \sigma / \tau)$ if $\tau\le \log \sigma$.
Otherwise the root is the only big node, and the first phase is void.

\paragraph{Second phase.} Assume that we are given the list $S_u$ for a big node
 $u$ at distance $\alpha \tau$ from the root. We would like to
construct the bitmask $B_{v}$ for every node $v$ such that $u$ is the nearest big
ancestor of $v$. First, we observe that to construct all these bitmasks we only need
to know $\tau$ bits of every $S_u[j]$ starting from the $(\alpha \tau+1)$-th one. Therefore,
we will operate on \emph{short lists} $L_v$ consisting of $\tau$-bit integers instead of
the lists $S_{v}$ of whole $\log\sigma$-bit integers. Each short list is stored as a packed list.

We start with extracting the appropriate part of each $S_u[j]$ and appending  it to $L_u$.
The running time of this step is proportional to the length of $L_u$. (This step is void if $\tau>\log \sigma$; then $L_v=S_v$.)
Next, we process all nodes $v$ such that $u$ is the deepest big ancestor of $v$.
For every such $v$ we want to construct the short list $L_v$. Assuming that we already have $L_v$ for a node $v$
at distance $\alpha \tau + \beta$ from the root, where $\beta\in [0,\tau)$,
and we want to construct
the short lists of its children and the bitmask $B_{v}$. 
This can be (inefficiently) done by scanning $L_v$ and appending the next element to the short list of the right or the left child of $v$, depending on whether its $(\beta+1)$-th bit is set or not, respectively. The bitmask $B_{v}$ simply stores
all these $(\beta+1)$-th most significant bits. In order to compute
these three lists efficiently we apply the following claim.

\begin{claim}\label{clm:pct}
Assuming $\tilde\Oh(\sqrt{n})$ space and preprocessing shared by all instances of the structure, the following operation
can be performed in $\Oh(\frac{Nb}{\log n})$ time:
given a packed list $L$ of $N$ $b$-bit integers, where $b=o(\log n)$, and a position $t\in[0,b-1]$, compute
packed lists $L_0$ and $L_1$ consisting of the elements of $L$ whose $t$-th most significant bit is $0$ or $1$, respectively,
and a bitmask $B$ being a concatenation of the $t$-th most significant bits of the elements of $L$.
\end{claim}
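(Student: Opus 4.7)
The plan is a standard four-Russians / lookup-table argument tailored so that all the work on a chunk of the input is paid for out of the packed representation of that chunk.

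Set a block size $k=\lfloor \tfrac{\log n}{2b}\rfloor$; because $b=o(\log n)$ we have $k=\omega(1)$, and $kb \le \tfrac12 \log n$ so each block of $k$ consecutive entries of $L$ fits in a single machine word and there are at most $2^{kb}\le \sqrt n$ possible values $c$ for such a block. In the preprocessing I build, once and for all, a table indexed by pairs $(c,t)$ with $c\in[0,2^{kb})$ and $t\in[0,b-1]$; the entry stores, packed into $\Oh(1)$ words, (i)~the packed list $L_0^{c,t}$ of those entries of the block $c$ whose $t$-th bit is $0$, (ii)~the analogous $L_1^{c,t}$, (iii)~the $k$-bit mask $B^{c,t}$ of the $t$-th bits, and (iv)~the cardinalities $m_0^{c,t}$ and $m_1^{c,t}$. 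The table has $2^{kb}\cdot b\le b\sqrt n=\tilde\Oh(\sqrt n)$ entries of $\Oh(1)$ words each, and it is filled by brute force in $\tilde\Oh(\sqrt n)$ time, charged to the shared preprocessing.

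To answer a query with parameters $(L,t)$, I scan $L$ in order, one block at a time. Block $i$ occupies bit range $[ikb,(i{+}1)kb)$ of the packed representation of $L$, which spans at most two consecutive machine words; its value $c$ is extracted in $\Oh(1)$ time with shifts and masks. With $(c,t)$ I fetch the precomputed $(L_0^{c,t},L_1^{c,t},B^{c,t},m_0^{c,t},m_1^{c,t})$ in $\Oh(1)$ time and append $L_b^{c,t}$ to the running packed list $L_b$ (for $b\in\{0,1\}$) and $B^{c,t}$ to the running bitmask $B$. Since each appended piece fits in $\Oh(1)$ machine words, the append itself is $\Oh(1)$: bit-shift the new piece by the current bit-offset of the target list (which I maintain explicitly, updating it by $m_b^{c,t}\cdot b$ or by $k$), OR it into the last word of the target, and possibly spill into one new word. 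Over $N/k=\Oh(Nb/\log n)$ blocks this gives the claimed $\Oh(Nb/\log n)$ time bound; the final lengths of $L_0,L_1,B$ are the accumulated offsets.

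The only subtleties are bookkeeping ones: handling the last block when $k\nmid N$ (pad the chunk with zeros and use $\min(k,N-ik)$ in place of $k$, or keep a second table indexed by the block length), and making sure that when $ikb$ is not a multiple of the machine word size the two-word read and the one- or two-word write really do run in constant time, which follows from the assumption that the word-RAM supports $\Oh(1)$-time shifts and bitwise operations on $\Oh(\log n)$-bit words. Neither step inflates the time or the shared space, so the bounds in the claim follow.
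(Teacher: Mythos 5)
Your proposal is correct and follows essentially the same route as the paper's proof: precompute, as shared $\tilde\Oh(\sqrt n)$-time preprocessing, the answers for all blocks of $\Theta(\frac{\log n}{b})$ packed entries (and all positions $t$), then split $L$ into such blocks, look each one up, and concatenate the resulting packed pieces of $L_0$, $L_1$, and $B$. The paper states this in three sentences; you have merely filled in the bookkeeping (offsets, boundary blocks, word-alignment), which the paper delegates to its general discussion of packed-list appends.
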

\begin{proof}
As a preprocessing, we precompute all the answers for lists $L$ of length at most $\frac{1}{2}\frac{\log n}{b}$.
This takes $\tilde{\Oh}(\sqrt{n})$ time.
For a query we split $L$ into lists of length $\frac{1}{2}\frac{\log n}{b}$, apply the preprocessed mapping
and merge the results, separately for $L_0,L_1$ and $B$. This takes $\Oh(\frac{Nb}{\log n})$ time in total.
\mayqed \end{proof}

Consequently, we spend $\Oh(|L_v|\tau/\log n)$ per node $v$.
The total number of the elements of all short lists is $\Oh(n \log \sigma)$,
but we also need to take into the account the fact that the lengths of some short lists might
be not divisible by $\log n / \tau$, which adds $\Oh(1)$ time per a node of the
wavelet tree, making the total complexity $\Oh(\sigma + n \log \sigma \tau / \log n)=\Oh(n \log \sigma \tau / \log n)$.

\paragraph{Intermixing the phases.} 
In order to make sure that space usage of the construction algorithm does not exceed the size of the the final structure,
the phases are intermixed in the following way. Assuming that we are given the lists $S_u$
of all big nodes $u$ at distance $\alpha \tau$ from the root, we construct the lists
of all big nodes at distance $(\alpha+1)\tau$ from the root, if any. Then we construct
the bitmasks $B_{u}$ such that the nearest big ancestor of $u$ is at distance
$\alpha\tau$ from the root and increase $\alpha$. To construct the bitmasks,
we compute the short lists for all nodes at distances $\alpha\tau,\ldots,(\alpha+1)\tau-1$ from the
root and keep the short lists only for the nodes at the current distance.
Then the peak space of the construction process is just
$\Oh(n \log \sigma / \log n)$ words.

The total construction time is $\Oh(n \log \sigma / \tau)$ for the first phase and $\Oh(n \log \sigma \tau / \log n)$ for the second phase. 
The bitmasks, lists and short lists constructed by the algorithm are illustrated in Figure~\ref{fig:wavelet example2}.
Choosing $\tau=\sqrt{\log n}$ as to minimize the total time, we get the following theorem.

\begin{figure}
\begin{center}
\begin{tikzpicture}[level/.style={level distance=7mm,sibling distance=80mm/#1},level 2/.style={sibling distance=35mm},level 3/.style={sibling distance=17mm}, level 4/.style={sibling distance=9mm},minimum width=15pt, every node/.style={font=\small}]
\node [draw] {$\scriptstyle 1100_2,0111_2,1011_2,1111_2,1001_2,0110_2,0100_2,0000_2,0001_2,0010_2,1010_2,0011_2,1101_2,0101_2,1000_2,1110_2$}
child {
  {}
 child {
  node [draw] {$\scriptstyle 0000_2,0001_2,0010_2,0011_2$}
  child {
    {}
   child {
    node [ellipse,draw,inner sep=1pt] {$0$}
   }
   child {
    node [ellipse,draw,inner sep=1pt] {$1$}
   }
  }
  child {
    {}
   child {
    node [ellipse,draw,inner sep=1pt] {$2$}
   }
   child {
    node [ellipse,draw,inner sep=1pt] {$3$}
   }
  }
 }
 child {
  node [draw] {$\scriptstyle 0111_2,0110_2,0100_2,0101_2$}
  child {
    {}
   child {
    node [ellipse,draw,inner sep=1pt] {$4$}
   }
   child {
    node [ellipse,draw,inner sep=1pt] {$5$}
   }
  }
  child {
    {}
   child {
    node [ellipse,draw,inner sep=1pt] {$6$}
   }
   child {
    node [ellipse,draw,inner sep=1pt] {$7$}
   }
  }
 }
}
child {
  {}
 child {
  node [draw] {$\scriptstyle 1011_2,1001_2,1010_2,1000_2$}
  child {
    {}
   child {
    node [ellipse,draw,inner sep=1pt] {$8$}
   }
   child {
    node [ellipse,draw,inner sep=1pt] {$9$}
   }
  }
  child {
    {}
   child {
    node [ellipse,draw,inner sep=1pt] {$10$}
   }
   child {
    node [ellipse,draw,inner sep=1pt] {$11$}
   }
  }
 }
 child {
  node [draw] {$\scriptstyle 1100_2,1111_2,1101_2,1110_2$}
  child {
    {}
   child {
    node [ellipse,draw,inner sep=1pt] {$12$}
   }
   child {
    node [ellipse,draw,inner sep=1pt] {$13$}
   }
  }
  child {
    {}
   child {
    node [ellipse,draw,inner sep=1pt] {$14$}
   }
   child {
    node [ellipse,draw,inner sep=1pt] {$15$}
   }
  }
 }
};
\end{tikzpicture}

\vspace{.5cm}

\begin{tikzpicture}[level/.style={level distance=10mm,sibling distance=64mm/#1},level 2/.style={sibling distance=32mm},level 3/.style={sibling distance=16mm}, level 4/.style={sibling distance=8mm},minimum width=15pt, every node/.style={font=\small}]
\node [draw] {$\scriptstyle 11_2,01_2,10_2,11_2,10_2,01_2,01_2,00_2,00_2,00_2,10_2,00_2,11_2,01_2,10_2,11_2$}
child {
 node [draw] {$\scriptstyle 11_2,11_2,10_2,00_2,00_2,01_2,01_2,10_2$}
 child {
  node [draw] {$\scriptstyle 00_2,00_2,01_2,01_2$}
  child {
   node [draw] {$\scriptstyle 00_2,01_2$}
   child {
    node [ellipse,draw,inner sep=1pt] {$0$}
   }
   child {
    node [ellipse,draw,inner sep=1pt] {$1$}
   }
  }
  child {
   node [draw] {$\scriptstyle 10_2,11_2$}
   child {
    node [ellipse,draw,inner sep=1pt] {$2$}
   }
   child {
    node [ellipse,draw,inner sep=1pt] {$3$}
   }
  }
 }
 child {
  node [draw] {$\scriptstyle 11_2,11_2,10_2,10_2$}
  child {
   node [draw] {$\scriptstyle 00_2,01_2$}
   child {
    node [ellipse,draw,inner sep=1pt] {$4$}
   }
   child {
    node [ellipse,draw,inner sep=1pt] {$5$}
   }
  }
  child {
   node [draw] {$\scriptstyle 11_2,10_2$}
   child {
    node [ellipse,draw,inner sep=1pt] {$6$}
   }
   child {
    node [ellipse,draw,inner sep=1pt] {$7$}
   }
  }
 }
}
child {
 node [draw] {$\scriptstyle 10_2,01_2,11_2,00_2,01_2,10_2,00_2,11_2$}
 child {
  node [draw] {$\scriptstyle 01_2,00_2,01_2,00_2$}
  child {
   node [draw] {$\scriptstyle 01_2,00_2$}
   child {
    node [ellipse,draw,inner sep=1pt] {$8$}
   }
   child {
    node [ellipse,draw,inner sep=1pt] {$9$}
   }
  }
  child {
   node [draw] {$\scriptstyle 11_2,10_2$}
   child {
    node [ellipse,draw,inner sep=1pt] {$10$}
   }
   child {
    node [ellipse,draw,inner sep=1pt] {$11$}
   }
  }
 }
 child {
  node [draw] {$\scriptstyle 10_2,11_2,10_2,11_2$}
  child {
   node [draw] {$\scriptstyle 00_2,01_2$}
   child {
    node [ellipse,draw,inner sep=1pt] {$12$}
   }
   child {
    node [ellipse,draw,inner sep=1pt] {$13$}
   }
  }
  child {
   node [draw] {$\scriptstyle 11_2,10_2$}
   child {
    node [ellipse,draw,inner sep=1pt] {$14$}
   }
   child {
    node [ellipse,draw,inner sep=1pt] {$15$}
   }
  }
 }
};
\end{tikzpicture}
\vspace{-.7cm}
\end{center}
\caption{Elements of the wavelet tree construction algorithm for the string from Figure~\ref{fig:wavelet example1}. The first figure shows the lists $S_{u}$ of all big nodes $u$ when $\tau=2$, and the third shows the short lists $L_{u}$ of all nodes $u$, as defined in the proof of Theorem~\ref{th:wavelet_tree}.}
\label{fig:wavelet example2}
\end{figure}

\begin{theorem}\label{th:wavelet_tree}
Given a string $s$ of length $n$ over an alphabet $\Sigma$, we can construct all
bitmasks $B_{u}$ of its wavelet tree in $\Oh(n\log \sigma / \sqrt{\log n})$ time.
\end{theorem}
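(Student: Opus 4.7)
The plan is to execute the two-phase construction exactly as outlined in the overview preceding the theorem, and then balance the parameter $\tau$. First, I would formally set up the terminology: fix $\tau$ to be chosen later, call a node \emph{big} when its depth is a multiple of $\tau$, and for each big node $u$ at depth $\alpha\tau$ let $S_u$ denote the subsequence of $s$ whose characters have binary representations beginning with $\ell_u$. The root $S_r = s$ serves as the base case.

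For the \textbf{first phase}, I would argue inductively on $\alpha$: given $S_u$ for a big node $u$ at depth $\alpha\tau$, the lists $S_{v_0},\ldots,S_{v_{2^\tau-1}}$ of its $2^\tau$ big descendants at depth $(\alpha+1)\tau$ are obtained by a single scan of $S_u$, dispatching each $S_u[j]$ to the list indexed by the $\tau$-bit window starting at position $\alpha\tau+1$ of its binary representation. Since $\tau \le \log\sigma$ (otherwise the root is the only big node and this phase does nothing), this window fits in a machine word and the dispatch is constant time per element. Summing over all big depths and using that each level contributes $n$ elements gives total time $\Oh(n\log\sigma / \tau)$.

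For the \textbf{second phase}, the key step is to turn each $S_u$ into short lists holding only the $\tau$ relevant bits of each entry, and then descend $\tau$ levels using Claim~\ref{clm:pct}. Given a big node $u$ at depth $\alpha\tau$, I would extract from each $S_u[j]$ the $\tau$ bits starting at position $\alpha\tau+1$ and append them to $L_u$, a packed list of $\tau$-bit integers; this costs $\Oh(|S_u|)$. Then for any descendant $v$ at depth $\alpha\tau+\beta$ with $\beta\in[0,\tau)$, Claim~\ref{clm:pct} applied to $L_v$ with parameter $t=\beta+1$ simultaneously produces $B_v$ together with the short lists $L_{v_0},L_{v_1}$ of its two children, in time $\Oh(|L_v|\tau/\log n)$ since $b=\tau=o(\log n)$. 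Summed over all nodes in all $\tau$-level blocks, the total cost is $\Oh(n\log\sigma \cdot \tau / \log n)$, modulo an additive $\Oh(1)$ per tree node for non-divisibility rounding, which is absorbed into $\Oh(\sigma)=\Oh(n)$.

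I would then combine the two bounds as $\Oh(n\log\sigma/\tau + n\log\sigma\cdot\tau/\log n)$ and choose $\tau=\sqrt{\log n}$ to balance them, giving $\Oh(n\log\sigma/\sqrt{\log n})$. Finally, to justify the claim that this holds under the stated space budget and is correctly orchestrated, I would describe the intermixing: process big depths $\alpha\tau$ in increasing order, first producing the next layer of $S_{\cdot}$ lists, then running the Phase-2 descent from depth $\alpha\tau$ to $(\alpha+1)\tau-1$ while keeping only the short lists at the current descent depth, then discarding $S_u$ before moving on. The main obstacle, which Claim~\ref{clm:pct} has already resolved, is performing the single-bit split of a packed list in time proportional to the packed size rather than the number of entries; once that tool is in hand, everything else is bookkeeping over the $\tau$-block decomposition.
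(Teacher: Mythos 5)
Your proposal is correct and follows essentially the same two-phase construction as the paper: building the lists $S_u$ for big nodes at depths divisible by $\tau$, descending $\tau$ levels via packed short lists and Claim~\ref{clm:pct}, and balancing with $\tau=\sqrt{\log n}$. The intermixing for space and the handling of the degenerate case $\tau>\log\sigma$ also match the paper's argument.
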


Additionally, we want to build a rank/select structure for every $B_{u}$. While it
is well-known that given a bit string of length $N$ one can construct
an additional structure of size $o(N)$ allowing executing both rank
and select in $\Oh(1)$ time~\cite{Clark,Jacobson}, we must verify that the construction
time is not too high. The following lemma is proved in Appendix~\ref{app:wt}.

\begin{lemma}\label{lem:rank_select}
Given a bit string $B[1..N]$ packed in $\frac{N}{\log n}$ machine
words, we can extend it in $\Oh(\frac{N}{\log n})$ time with a rank/select
structure occupying additional $o(\frac{N}{\log n})$ space, assuming
an $\tilde\Oh(\sqrt{n})$ time and space preprocessing shared by all instances
of the structure.
\end{lemma}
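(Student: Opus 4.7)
The plan is to implement the classical Jacobson–Clark $o(N)$-bit rank/select structures, but to arrange every step of the construction so that each packed machine word of $B$ is touched only $\Oh(1)$ times. This, combined with shared precomputed tables of total size $\tilde\Oh(\sqrt{n})$ that allow us to evaluate popcount and in-word select in constant time, will give the $\Oh(N/\log n)$ construction bound with $o(N/\log n)$ additional space.

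For rank, I would partition $B$ into superblocks of size $s_1=\lfloor\log^2 n\rfloor$ and blocks of size $s_2=\tfrac{1}{2}\log n$. For every superblock we store the cumulative rank at its beginning as a $\log n$-bit value, and for every block the rank relative to its superblock as an $\Oh(\log\log n)$-bit value; the total space is $\Oh(N/\log^2 n) + \Oh(N \log\log n / \log^2 n) = o(N/\log n)$ machine words. Since $s_2 = \tfrac{1}{2}\log n$, there are only $\Oh(\sqrt n)$ possible block contents, and a table giving the popcount of every prefix of every block can be built in $\tilde\Oh(\sqrt n)$ time as part of the shared preprocessing. Construction then streams over the $N/\log n$ packed words of $B$: each word holds exactly two blocks, so a constant number of table lookups per word updates the running totals and writes the superblock/block arrays, giving $\Oh(N/\log n)$ time.

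For select, I would use Clark's two-level decomposition. Group the 1-bits into \emph{superblocks} of $r = \log n\log\log n$ consecutive ones. If a superblock spans more than $r^2$ bits (the \emph{sparse} case), store the positions of all $r$ ones explicitly; otherwise (the \emph{dense} case), recurse with a second layer that records every $\Theta(\log\log n)$-th one and falls back, inside a range of $\Oh(\log^2 n)$ bits, to an in-word select table of size $\tilde\Oh(\sqrt n)$. Standard accounting shows the total space is $o(N)$ bits, i.e., $o(N/\log n)$ words. The construction again proceeds by streaming through $B$ word by word: we maintain the running popcount using the shared tables, and whenever the count crosses a multiple of $r$ (or of the inner threshold in a dense region) we invoke the in-word select table to extract the absolute position in $\Oh(1)$.

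The main delicate point will be bounding the construction time of the select structure, because a priori we might have to write down many explicit positions. The key observation is that every sparse superblock spans more than $r^2$ bits, so the number of sparse superblocks is at most $N/r^2$, and hence the total number of explicitly stored positions across all sparse superblocks is $\Oh(N/r) = \Oh(N/(\log n\log\log n))$; the same bound holds for the dense case after one level of recursion. Charging each recorded position to the word of $B$ in which it is discovered, together with the $\Oh(1)$ work per word needed to maintain the running popcount, keeps the total work within $\Oh(N/\log n)$. Combining the rank and select constructions yields the lemma.
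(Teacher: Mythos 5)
Your rank construction and your handling of sparse select-superblocks coincide with the paper's proof and are fine. The genuine gap is in the dense case of the select structure. Your charging argument --- ``charging each recorded position to the word of $B$ in which it is discovered'' --- breaks down there: a single machine word of $\log n$ bits can contain up to $\log n$ ones, hence up to $\log n/s'$ inner block boundaries, where $s'$ is the inner threshold. On the input $B=1^N$ every superblock is dense and the total number of inner block boundaries is $\Theta(N/s')$, so paying $\Oh(1)$ per boundary costs $\Theta(N/(\log\log n)^2)$ time (or $\Theta(N/\log\log n)$ with your threshold of $\Theta(\log\log n)$), which is $\omega(N/\log n)$; the same overcount applies to writing out the explicit positions inside sparse inner blocks, of which there are also $\Theta(N/s')$ in the worst case. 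This is exactly the difficulty the paper flags (``otherwise we would have to pay $\Oh(1)$ time for roughly every $(\log\log n)^2$-th occurrence of {\bf 1} in $B$, which would be too much'') and resolves with a tabulation idea your proposal is missing: the block-splitting process is modelled as a deterministic automaton with $\Oh(\polyloglog(n))$ bits of state and three output tapes (block offsets, sparse/dense flags, descriptions of sparse blocks), and its behaviour on every possible initial state and every chunk of $\frac12\log n$ input bits is precomputed in $\tilde\Oh(\sqrt n)$ time, so that all boundaries and descriptions generated inside one word are emitted in $\Oh(1)$ time as packed appends (with a rank structure over the sparse/dense flags to index the variable-length descriptions). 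Without this, or an equivalent batching mechanism, the claimed $\Oh(N/\log n)$ construction time does not follow.

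A secondary, fixable issue: with an inner threshold of $\Theta(\log\log n)$ the relative block offsets alone occupy $\Theta\bigl(\frac{N}{\log\log n}\cdot\log\log n\bigr)=\Theta(N)$ bits, which is not $o(N)$; you need $s'=(\log\log n)^2$ as in the paper (and in Clark's original structure).
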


\subsection{Arbitrarily-Shaped Wavelet Trees}\label{ssec:aswt}
To generalize the algorithm to arbitrarily-shaped wavelet trees of degree $\Oh(\log\sigma)$,
instead of operating on the characters we work with the labels of the root-to-leaf paths, appended with
{\bf 0}s so that they all have the same length. The heap-like numbering of the nodes 
is not enough in such setting, as we cannot guarantee that all leaves have the same depth, so
the first step is to show how to efficiently return a node given the length of its label
and the label itself stored in a single machine word.
If $\log \sigma = o(\log n)$ we can afford storing nodes in a simple array, otherwise we use the
deterministic dictionary of Ru\v{z}i\'{c}~\cite{DBLP:conf/icalp/Ruzic08}, which can be constructed
in $\Oh(\sigma(\log\log \sigma)^{2})=\Oh(n(\log\log n)^{2})$ time. In either case, we can
return in $\Oh(1)$ time the corresponding pointer, which might be null if the node
does not exist. Then the construction algorithm works as described previously:
first we construct the list $S_{u}$ of every big node $u$, and then we construct
every $B_{v}$ using the $S_{u}$ of the nearest big ancestor of $v$. The only difference
is that when splitting the list $S_{u}$ into the lists $S_{v}$ of all big nodes $v$ such that
$u$ is the first proper big ancestor of $v$, it might happen that the retrieved pointer
to $v$ is null. In such case we simply continue without appending anything to the
non-existing $S_{v}$. The running time stays the same.

\begin{theorem}\label{thm:aswt}
Let $s$ be a string of length $n$ over~$\Sigma$ and
$T$ be a full binary tree of height $\Oh(\log \sigma)$ with $\sigma$ leaves, each assigned a distinct character in $\Sigma$.
Then the $T$-shaped wavelet tree of $s$ can be constructed in $\Oh(n\log\sigma/\sqrt{\log n})$ time.
\end{theorem}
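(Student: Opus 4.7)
The plan is to reduce the arbitrarily-shaped case to the algorithmic skeleton of Theorem~\ref{th:wavelet_tree}. For each character $c \in \Sigma$, let $\ell(c)$ denote the label of the corresponding leaf of $T$, that is, the bit string read along the root-to-leaf path. Since $T$ has height $h = \Oh(\log\sigma)$, each $\ell(c)$ has at most $h$ bits. I would pad every $\ell(c)$ with trailing {\bf 0}s to length exactly $h$ and transform $s$ into a sequence of $h$-bit integers. Each node $v$ at depth $d$ is then identified by the pair $(d,\lambda(v))$, where $\lambda(v)$ is the $d$-bit prefix shared by the padded labels of all leaves below~$v$, and routing a padded label towards a child of $v$ reduces to inspecting its $(d+1)$-th bit, exactly as in Section~\ref{sec:WTconstruction}.

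The first thing to address is node navigation: unlike the perfect binary case, leaves of $T$ may sit at different depths, so a heap-like numbering no longer locates $B_v$ in $\Oh(1)$ given $(d,\lambda(v))$. If $\log\sigma = o(\log n)$, the number of potential labels is subpolynomial and I can simply tabulate all pairs $(d,\lambda)$ in an array. Otherwise, I would plug in the deterministic dictionary of Ru\v{z}i\'{c}~\cite{DBLP:conf/icalp/Ruzic08}, built in $\Oh(\sigma (\log\log\sigma)^2) = \Oh(n(\log\log n)^2)$ time and supporting $\Oh(1)$ lookups that return either a node pointer or null. This preprocessing fits comfortably within the target $\Oh(n\log\sigma/\sqrt{\log n})$ budget.

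With navigation in place, the two-phase construction of Section~\ref{sec:WTconstruction} is reused essentially verbatim. In the first phase, a big list $S_u$ at depth $\alpha\tau$ is split by reading the $\tau$ bits starting at position $\alpha\tau+1$ of each padded label and dispatching the entry to $S_v$ for the big descendant $v$ located via the dictionary; if the lookup returns null, the entry is simply discarded. In the second phase, Claim~\ref{clm:pct} is applied along each short list $L_v$ to produce $B_v$ together with the short lists of $v$'s children, again ignoring null children at no extra cost. The zero-padding is harmless because every dictionary lookup truncates to the current depth, so padded bits never steer an element into a spurious existing node.

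The main thing to verify is that the running-time analysis survives these modifications. The total length of the lists $S_u$ and $L_v$ remains $\Oh(n\log\sigma)$, since each input character contributes one entry per level of $T$ and $T$ has height $\Oh(\log\sigma)$. Consequently, the first phase runs in $\Oh(n\log\sigma/\tau)$ time, the second in $\Oh(n\log\sigma \cdot \tau/\log n)$ time, and the choice $\tau = \sqrt{\log n}$ yields the claimed $\Oh(n\log\sigma/\sqrt{\log n})$ bound, which dominates the $\Oh(n(\log\log n)^2)$ dictionary preprocessing. Finally, the rank/select structures are attached via Lemma~\ref{lem:rank_select} at the same cost as in the perfect binary case.
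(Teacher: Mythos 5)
Your proposal matches the paper's own proof essentially step for step: replace characters by root-to-leaf labels padded with trailing \textbf{0}s, locate nodes by (depth, label) via a plain array when $\log\sigma=o(\log n)$ and via Ru\v{z}i\'{c}'s deterministic dictionary otherwise, rerun the two-phase algorithm of Section~\ref{sec:WTconstruction} discarding entries whose lookup returns null, and observe that the time analysis is unchanged. The argument is correct and is the same as the paper's.
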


\subsection{Wavelet Trees of Larger Degree}
\label{sec:higher degree construction}

We move to constructing a wavelet tree of degree $d=\log^{\eps}n$, where $d$ is a power of two. Such higher degree tree can
be also defined through the binary wavelet tree for $s$ as follows.
We remove all inner nodes whose depth is not a multiple of $\log d$. For each surviving node we set its nearest preserved ancestor as a parent. Then each inner node
has $d$ children (the lowest-level nodes may have fewer children), and we order them consistently with the left-to-right order in the original tree.

Recall that for each node $u$ of the binary wavelet tree we define the string $S_u$ as a subsequence of $s$
consisting of its characters whose binary representation starts with the label $\ell_u$ of $u$. Then we create the bitmask
storing, for each character of $S_u$, the bit following its label $\ell_u$. Instead of $B_u$ a node $u$ of the wavelet tree of
degree $d$ now stores a string $D_u$, which contains the next $\log d$ bits following $\ell_u$.

The following lemma allows to use binary wavelet tree construction as a black box, and consequently gives an $\Oh(n\sqrt{\log n})$-time construction
algorithm for wavelet trees of degree~$d$.

\begin{lemma}
\label{lm:rebuilding}
Given all bitmasks $B_{u}$, we can construct all strings $D_{u}$ in $\Oh(n\log\log n)$
time.
\end{lemma}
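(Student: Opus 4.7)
The plan is to compute every $D_u$ by a single bottom-up pass inside each block of $\log d$ consecutive levels of the binary wavelet tree. For every node $v$ at depth $d' \in [k\log d, (k+1)\log d]$, let $u$ be the surviving ancestor of $v$ at depth $k\log d$ and let $\pi_v$ be the $(d' - k\log d)$-bit label from $u$ down to $v$. I define a packed list $P_v$ of $|S_v|$ elements of $\log d$ bits each, in which $P_v[j]$ stores the $\log d$-bit element that $S_v[j]$ contributes to $D_u$, i.e., the bits at positions $k\log d+1, \ldots, (k+1)\log d$ of $S_v[j]$. Then $P_u = D_u$, whereas for $v$ at depth $(k+1)\log d$ every entry of $P_v$ equals the constant $\pi_v$. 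The key identity, inherited from the decomposition of $S_v$ into $S_{v_0}$ and $S_{v_1}$ governed by $B_v$, is $P_v[j] = P_{v_{B_v[j]}}[\rank_{B_v[j]}(B_v,j)]$.

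The first step will be to establish a dual of Claim~\ref{clm:pct}: given packed lists $X_0, X_1$ of $b$-bit elements with $b = o(\log n)$ and a bitmask $B$ whose zeros and ones match the lengths of $X_0$ and $X_1$, produce in $\Oh(\frac{|B|b}{\log n}+1)$ time the packed list $X$ with $X[j] = X_{B[j]}[\rank_{B[j]}(B,j)]$. After a $\tilde\Oh(\sqrt n)$-time shared preprocessing that tabulates this operation for all triples whose combined bit length is at most $\frac{1}{2}\log n$ (of which there are $\Oh(\sqrt n)$), a single table lookup handles one chunk of $B$ together with the corresponding chunks of $X_0$ and $X_1$, and the algorithm proceeds word by word exactly as in Claim~\ref{clm:pct}.

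Given this primitive, I process each block $k$ independently. For every $v$ at depth $(k+1)\log d$ I initialize $P_v$ by replicating the $\log d$-bit value $\pi_v$ exactly $|S_v|$ times in $\Oh(\frac{|S_v|\log d}{\log n} + 1)$ time using word-wide bit-pattern duplication. Then for $d' = (k+1)\log d - 1$ down to $k\log d$ and every $v$ at depth $d'$, I apply the interleaving primitive to $P_{v_0}$, $P_{v_1}$, $B_v$ to obtain $P_v$. Upon termination I output $D_u = P_u$ for every surviving $u$ at depth $k\log d$ and free the intermediate lists, which keeps working memory within $\Oh(\frac{n\log\sigma}{\log n})$ words.

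Summing one level's cost $\sum_v \Oh(\frac{|S_v|\log d}{\log n}) = \Oh(\frac{n\log d}{\log n})$ across $\log d$ levels per block and $\frac{\log\sigma}{\log d}$ blocks yields $\Oh(\frac{n\log\sigma\log d}{\log n}) = \Oh(n\log d) = \Oh(n\log\log n)$ for $\sigma\le n$. The additive $+1$ per call and the initialization step contribute only $\Oh(\sigma)$ and $\Oh(\frac{n\log\sigma}{\log n}) = \Oh(n)$, both absorbed. The main obstacle I anticipate is tuning the parameters of the dual of Claim~\ref{clm:pct} so that its $\tilde\Oh(\sqrt n)$ preprocessing suffices even though every word-wide operation must simultaneously consume the bitmask and the two wide-element lists; this is resolved by processing chunks of total input width $\frac{1}{2}\log n$ bits, which keeps the number of tabulated configurations below $\sqrt n$.
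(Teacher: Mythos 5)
Your proposal is correct and follows essentially the same route as the paper: both compute $D_u$ bottom-up within each band of $\log d$ levels by interleaving the children's packed lists according to $B_v$, with a $\tilde\Oh(\sqrt{n})$-preprocessed table making each word-sized step take $\Oh(1)$ time. The only differences are cosmetic --- you carry the full $\log d$-bit digit from the bottom of the band (initialized to the constant label $\pi_v$) rather than prepending one bit per level, and you tabulate fixed-width input chunks rather than simulating the paper's buffer-reload automaton --- and neither changes the argument or the $\Oh(n\log\log n)$ bound.
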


\begin{proof}
Consider a node $u$ at depth $k$ of the wavelet tree of degree $d$. Its children
correspond to all descendants of $u$ in the original wavelet tree at depth $(k+1)d$.
For each descendant $v$ of $u$ at depth $(k+1)\log d-\delta$, where $\delta\in [0,\log d]$,
we construct a temporary string $D'_{v}$ over the alphabet $[0,2^{\delta}-1]$. Every character
of this temporary string corresponds to a leaf in the subtree of $v$. The characters are arranged
in order of the identifiers of the corresponding leaves and describe prefixes of length $\delta$
of paths from $u$ to the leaves. Clearly $D_{u}=D'_{u}$. Furthermore, if the children of $v$ are
$v_{1}$ and $v_{2}$, then $D'_{v}$ can be easily defined by looking at $D'_{v_{1}}$, $D'_{v_{2}}$,
and $B_{v}$ as follows. We prepend {\bf 0} to all characters in $D'_{v_{1}}$, and {\bf 1} to all characters in $D'_{v_{2}}$.
Then we construct $D'_{v}$ by appropriately interleaving $D'_{v_{1}}$ and $D'_{v_{2}}$ according
to the order defined by $B_{v}$. We consider the bits of $B_{v}$ one-by-one. If the $i$-th bit is
{\bf 0}, we append the next character of $D'_{v_{1}}$ to the current $D'_{v}$, and otherwise we append
the next character of $D'_{v_{2}}$. Now we only have to show to implement this process efficiently.

We pack every $\frac{1}{4}\frac{\log n}{\log d}$ consecutive characters of $D'_{v}$ into a single machine word.
To simplify the implementation, we reserve $\log d$ bits for every character irrespective of the value of $\delta$.
This makes prepending {\bf 0}s or {\bf 1}s to all characters in any $D'_{v}$ trivial, because the result
can be preprocessed in $\tilde\Oh(d^{\frac{1}{4}\frac{\log n}{\log d}})=o(n)$ time and space. 
Interleaving $D'_{v_{1}}$ and $D'_{v_{2}}$ is more complicated. Instead of accessing $D'_{v_{1}}$ and $D'_{v_{2}}$ directly,
we keep two buffers, each containing at most next $\frac{1}{4}\frac{\log n}{\log d}$ characters from the
corresponding string. Similarly, instead of accessing $B_{v}$ directly, we keep
a buffer of at most $\frac{1}{4}\log n$ next bits there.
Using the buffers, we can keep processing bits from $B_{v}$ as long as there are enough characters
in the input buffers. The input buffers for $D'_{v_{1}}$ and $D'_{v_{2}}$ become
empty after generating $\frac{1}{4}\frac{\log n}{\log d}$ characters, and the input buffer for $B_{v}$ becomes empty after
generating $\frac{1}{4}\log n$ characters. Hence the total number of times we need to reload one of the 
input buffers is $\Oh(|D'_{v}|/ \frac{\log n}{\log d})$.

We preprocess all possible scenarios between two reloads by simulating, for every possible initial
content of the input buffers, processing the bits until one
of the buffers becomes empty.
We store the generated data (which is at most $\frac{1}{2}\log n$ bits) and the final
content of the input buffers. The whole preprocessing takes $\tilde\Oh(2^{\frac{3}{4}\log n})=o(n)$ time and space,
and then the number of operations required to generate packed
$D'_{v}$ is proportional to the number of times we need to reload the buffers, so by summing over
all~$v$ the total complexity is $\Oh(n\log\log n)$.
\mayqed \end{proof}

Then we extend every $D_{u}$ with a generalized rank/select data structure.
Such a structure for a string $D[1..N]$ implements operations $\rank_{c}(i)$, which counts
positions $k \in [1,i]$ such that $D[k] \leq c$, and $\select_{c}(i)$, which selects the
$i$-th occurrence of $c$ in the whole $D[1..n]$, for any $c\in\Sigma$, both in $\Oh(1)$ time.
Again, it is well-known that such a structure can be implemented using just $o(n\log \sigma)$
additional bits if $\sigma=\Oh(\polylog(n))$~\cite{Ferragina}, but its construction time
is not explicitly stated in the literature. Therefore, we prove the following lemma in Appendix~\ref{app:wt}.

\begin{lemma}
\label{lm:rank/select construction 2}
Let $d\le \log^{\eps} n$ for $\eps<\frac13$.
Given a string $D[1..N]$ over the alphabet $[0, d-1]$ packed in $\frac{N\log d}{\log n}$ machine
words, we can extend it in $\Oh(\frac{N\log d}{\log n})$ time with a generalized rank/select data
structure occupying additional $o(\frac{N}{\log n})$ space, assuming $\tilde\Oh(\sqrt{n})$ time and space preprocessing shared by all instances
of the structure.
\end{lemma}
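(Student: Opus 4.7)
The plan is to build a three-level rank structure and a Clark-style select structure, both driven by shared look-up tables that fit in $\tilde\Oh(\sqrt{n})$ bits, and to verify that each component can be constructed in time proportional to the packed input size. Throughout, the key invariant is that a single machine word accommodates $\frac{\log n}{\log d}$ characters, while the $d = \log^{\eps}n$ characters with $\eps<\tfrac13$ make any table indexed by a $\tfrac12\log n$-bit block content and by characters fit in $\tilde\Oh(\sqrt{n})$ bits.

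First I would handle \emph{rank}. Partition $D$ into \emph{superblocks} of $s_1 = \log^2 n$ characters and each superblock into \emph{blocks} of $s_2 = \tfrac12\log n/\log d$ characters, so that a block occupies half a machine word. For every superblock boundary and every $c\in[0,d-1]$, store the prefix count of characters $\le c$ using $\Oh(\log N)$ bits; the total is $\Oh(Nd\log N/s_1) = \Oh(Nd/\log n)$ bits, which is $o(N/\log n)$ words since $d=\log^{\eps}n$. For every block and every $c$, store the analogous count \emph{relative} to the enclosing superblock; each is bounded by $s_1$ and thus fits in $\Oh(\log\log n)$ bits, yielding $\tilde\Oh(Nd\log d/\log n)$ bits, still $o(N/\log n)$ words. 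Because there are only $2^{\tfrac12\log n}=\sqrt{n}$ possible block contents, a shared table indexed by (content, position, character) of $\tilde\Oh(\sqrt n)$ bits answers intra-block rank in $\Oh(1)$. Summing across blocks with this table also produces all block- and superblock-level counts in a single word-parallel sweep, in $\Oh(N\log d/\log n)$ total time.

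For \emph{select} I would use the classical two-level design of Clark adapted per character. For each $c\in[0,d-1]$, store the positions of every $s_1$-th occurrence of $c$ in an array $A_c$; summed over characters this uses $\Oh(Nd\log N/s_1) = o(N/\log n)$ words. Call a bucket of $s_1$ consecutive occurrences of $c$ \emph{sparse} if it spans more than $s_1^2=\log^4 n$ positions; there are at most $N/s_1^2$ sparse buckets in total, so storing all $s_1$ positions explicitly uses $\Oh(N/\log n)$ bits. Inside a \emph{dense} bucket, positions fit in $\Oh(\log\log n)$-bit relative offsets; one further round of sampling plus a shared $\tilde\Oh(\sqrt n)$-bit table indexed by the bit pattern of a packed chunk (whose size stays within the preprocessing budget because $\eps<\tfrac13$) reduces the intra-bucket search to $\Oh(1)$. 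Construction of the $A_c$'s is carried out by sweeping $D$ one packed $\tfrac12\log n$-bit chunk at a time: a shared table lists, for each chunk and each $c$, the positions of $c$ inside the chunk, so positions are emitted in $\Oh(1)$ per chunk, giving $\Oh(N\log d/\log n)$ total time; the sparse/dense classification is done in a second sweep using the rank arrays already built.

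The main obstacle is to reconcile the three budgets \emph{simultaneously}: $o(N/\log n)$ extra words, $\Oh(N\log d/\log n)$ construction time, and only $\tilde\Oh(\sqrt n)$ shared preprocessing, while still supporting $\Oh(1)$ select. The hypothesis $d\le\log^{\eps}n$ with $\eps<\tfrac13$ is what keeps each auxiliary table, whose size is of order $\sqrt n\cdot\mathrm{poly}(d,\log n)$ bits, inside the preprocessing budget; it also ensures that the per-character overhead ($d$ copies of a superblock-level array and a sampling array) remains $o(N/\log n)$ words and that every per-chunk operation during construction can be realized by a single $\Oh(1)$ look-up in a shared table indexed by a block of $\tfrac12\log n$ bits.
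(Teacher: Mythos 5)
Your data structure is essentially the one the paper builds (multi-level generalized rank with superblock/block counts and a shared lookup table; per-character Clark-style select with a sparse/dense case distinction at two levels), and your space accounting is sound. The difficulty of this lemma, however, is not the structure but the construction time, and that is where your argument has a real gap. Several of your construction steps are phrased as ``for each chunk and each $c$'' (computing the $d$ block-relative counts for rank, emitting sampled positions for select). Performed literally, that is $d$ table lookups per packed chunk, i.e.\ $\Oh(\frac{Nd\log d}{\log n})$ time --- a factor $d=\log^{\eps}n$ over the budget. To stay within $\Oh(\frac{N\log d}{\log n})$ you must process all $d$ characters simultaneously: the paper packs the $d$ cumulative counters (each $\Oh(\log(d\log^2 n))$ bits) into a single word and updates the whole vector with one table lookup, and --- crucially for select --- it cannot afford to touch $d$ separate per-character output streams on every input word, so it runs a combined automaton with $3d$ output tapes whose output buffers of $\frac{1}{5}\frac{\log n}{d}$ bits each are packed into one machine word and flushed only when full. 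This buffering machinery is the actual content of the proof and is absent from your proposal; without it the select construction does not meet the stated time bound. (Relatedly, if the construction table is indexed by a block's content together with the packed counter vector, your $\frac12\log n$-bit blocks leave no slack: the index becomes $\frac12\log n + \Theta(\log^{\eps}n\log\log n)$ bits and the table exceeds $\tilde\Oh(\sqrt n)$; the paper uses blocks of $\frac13\frac{\log n}{\log d}$ characters precisely to absorb this.)

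The second gap is the dense-block case of select. You assert that ``one further round of sampling plus a shared table\dots reduces the intra-bucket search to $\Oh(1)$'' and that $\eps<\frac13$ keeps the table within budget, but you never fix the second-level sampling rate or the dense threshold, and this is exactly the one place the hypothesis $\eps<\frac13$ is used quantitatively. In the paper's parameters, blocks are formed by every $s'$-th occurrence of $c$ with $s'=d(\log\log n)^2$, and a block is dense if it spans at most $d\cdot s'^2$ characters; a dense block then occupies $\Oh(d^3(\log\log n)^4\log d)=\Oh(\log^{3\eps}n(\log\log n)^5)$ bits, which is $o(\log n)$ exactly when $3\eps<1$, so its bit pattern can index a shared $\tilde\Oh(\sqrt n)$-size table. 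You need to exhibit such parameters and carry out this calculation; as written the claim is only asserted.
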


\subsection{Range Selection}
\label{sec:range selection}
A classic application of wavelet trees is that, given an array $A[1..n]$ of integers,
we can construct a structure of
size $\Oh(n)$, which allows answering any range rank/select query in $\Oh(\log n)$
time. A range select query is to return the $k$-th smallest
element in $A[i..j]$, while a range rank query is to count
how many of $A[i..j]$ are smaller than given $x = A[k]$.
Given the wavelet tree for $A$, 
any range rank/select query can be answered by traversing a root-to-leaf path of the tree using the rank/select data structures for bitmasks $B_u$ at subsequent nodes.

With $\Oh(n\sqrt{\log n})$ construction algorithm this matches the bounds of Chan and
P\u{a}tra\c{s}cu~\cite{DBLP:conf/soda/ChanP10} for range select queries, but is slower by a factor of $\log\log n$
than their solution for range rank queries.
We will show that one can in fact answer
any range rank/select query in $\Oh(\frac{\log n}{\log\log n})$ time with an $\Oh(n)$
size structure, which can be constructed in $\Oh(n\sqrt{\log n})$ time.
For range rank queries this is not a new result, but we feel that our proof gives more
insight into the structure of the problem. For range select queries, Brodal et al.~\cite{BrodalMedian}
showed that one can answer a query in $\Oh(\frac{\log n}{\log\log n})$ time
with an $\Oh(n)$ size structure, but with $\Oh(n\log n)$ construction time.
Chan and P\u{a}tra\c{s}cu asked if methods of \cite{BWT} can be combined with the
efficient construction. We answer this question affirmatively. 

A range rank query is easily implemented in $\Oh(\frac{\log n}{\log\log n})$ time using
wavelet tree of degree $\log^{\eps}n$ described in the previous section.
To compute the rank of $x = A[k]$ in $A[i..j]$, we traverse the path from the root to the leaf
corresponding to $A[k]$. At every node
we use the generalized rank structure to update the answer and the current interval $[i..j]$ before
we descend. 

Implementing the range select queries in $\Oh(\frac{\log n}{\log \log n})$ time is more complicated. Similar to the rank queries, we start the traverse at the root and descend along a root-to-leaf path. At each node we select its child we will descend to, and update the current interval $[i..j]$ using the generalized rank data structure. To make this query algorithm fast, we preprocess each string $D_u$ and store extracted information in matrix form. As shown by Brodal et al.~\cite{BrodalMedian}, constant time access to such information is enough to implement range select queries in $\Oh(\log n/\log\log n)$ time.

The matrices for a string $D_u$ are defined as follows. We partition $D_u$ into superblocks of length $d\log^2 n$.\footnote{In the original paper, superblocks are of length $d\log n$, but this does not change the query algorithm.} For each superblock we store the
cumulative generalized rank of every character, i.e., for every character $c$ we store the number positions
where characters $c'\le c$ occur in the prefix of the string up to the beginning of the superblock. We think of this as a $d\times \log n$
matrix $M$. 
The matrix is stored in two different ways. In the first copy, every row is stored as a single word. In the second copy, we divide the matrix into sections of $\log n / d$ columns, and store every section in a single word. We make sure there is an overlap of four columns between the sections, meaning that the first section contains columns $1,\ldots,\log n / d$, the second section contains columns $\log n / d -3,\ldots, 2\log n / d-4$, and so on. 

Each superblock is then partitioned into blocks of length $\log n/\log d$ each. 
For every block, we store the cumulative generalized rank within the superblock
of every character, i.e., for every
character $c$ we store the number of positions where characters $c'\le c$ occur in the prefix
of the superblock up to the end beginning of the block. 
We represent this information in a \emph{small matrix}~$M'$,
which can be packed in a single word, as it requires only $\Oh(d \log (d\log^2 n))$ bits.

\begin{lemma}
\label{lm:Brodal construction}
Given a string $D[1..N]$ over the alphabet $[0, d-1]$ packed in $\frac{N\log d}{\log n}$
machine words, we can extend it in $\Oh(\frac{N\log d}{\log n})$ time and space with the following information:
\begin{enumerate}[1)]\compact
\item The first copy of the matrix $M$ for each superblock (multiple of $d\log^2 n$);
\item The second copy of the matrix $M$ for each superblock (multiple of $d\log^2 n$);
\item The small matrix $M'$ for each block (multiple of $\log n/d$);
\end{enumerate}
occupying additional $o(\frac{N\log d}{\log n})$ space, assuming
an $\tilde\Oh(\sqrt{n})$ time and space preprocessing shared by all instances
of the structure and $d =\log^{\eps}n$.
\end{lemma}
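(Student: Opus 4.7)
The plan is to make a single left-to-right pass over the packed representation of $D$ doing $\Oh(1)$ work per machine word. Observe that a block of $\tfrac{\log n}{\log d}$ characters occupies exactly one machine word ($\tfrac{\log n}{\log d}\cdot \log d=\log n$ bits), and a superblock of $d\log^2 n$ characters consists of $d\log d\log n$ consecutive blocks, so superblock boundaries are block-aligned and we never need to split a word.

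As shared preprocessing we build one tabulation $T$ indexed by all $\tfrac{1}{2}\log n$-bit strings, interpreted as half-blocks of $\tfrac{\log n}{2\log d}$ characters. $T[x]$ is a single machine word packing $d$ counts of $\Oh(\log\log n)$ bits each, where the $c$-th field stores the number of characters $c'\le c$ occurring in $x$. Each count is at most $\tfrac{\log n}{2\log d}$, and since $d=\log^{\eps}n$ with $\eps<\tfrac13$ the $d$ fields together with guard bits comfortably fit in one word. We build $T$ bottom-up by extending $x$ one character at a time, which is a single packed-word addition against a precomputed per-character mask; together with the auxiliary masks this uses $\tilde\Oh(\sqrt n)$ time and space.

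During the main pass we maintain a \emph{within-superblock vector} $r$ in the same packed format as entries of $T$ and a \emph{global vector} $R$ stored as $d$ full-word counts giving cumulative counts from the start of $D$ up to the start of the current superblock. Processing a block reads its word, looks up both halves in $T$, sums them by one packed addition to form the block count vector $V_b$, writes $M'_b \leftarrow r$ (item~3), and updates $r\leftarrow r+V_b$. This is $\Oh(1)$ per block and $\Oh(\tfrac{N\log d}{\log n})$ in total. At a superblock boundary we emit items~1 and~2 directly from $R$: item~1 is $d$ consecutive word-copies of $R$; for item~2 each of the $\Oh(d)$ sections (respecting the $4$-column overlap) is assembled in $\Oh(d)$ bit operations by extracting the appropriate $\tfrac{\log n}{d}$-bit slice from each word of $R$ and OR-ing the slices into a single output word, giving $\Oh(d^2)$ per superblock. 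We then unpack $r$ and add it entrywise into $R$ in $\Oh(d)$ time and reset $r$ to zero. Summed over the $\tfrac{N}{d\log^2 n}$ superblocks, this boundary work is $\Oh(\tfrac{Nd}{\log^2 n})=o(\tfrac{N\log d}{\log n})$ whenever $d=\log^{\eps}n$ with $\eps<\tfrac13$, and hence is absorbed by the main pass.

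The main obstacle is arranging $T$ and the packed-counter format so that each per-block step (two half-word lookups, one packed sum, one word store for $M'$) is literally constant time under the $\tilde\Oh(\sqrt n)$ shared-preprocessing budget, and so that the packed additions inside $r$ and $V_b$ do not overflow between fields — this is what forces the $\eps<\tfrac13$ bit-budget for the $d$ fields with their $\Oh(\log\log n)$-bit counts and guards. Once the field widths and masks are chosen correctly, the rest of the construction is a straightforward bookkeeping of running sums.
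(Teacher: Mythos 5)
Your construction is correct and follows essentially the same route as the paper: a single $\Oh(1)$-per-word scan driven by an $\tilde\Oh(\sqrt{n})$-size tabulation that accumulates packed vectors of $d$ counters of $\Oh(\log\log n)$ bits each (the paper obtains the block and superblock vectors by invoking the rank-structure construction of Lemma~\ref{lm:rank/select construction 2} with blocks of $\frac13\frac{\log n}{\log d}$ characters and keeping every third block vector, and it assembles the second copy of $M$ from the first in $\Oh(d^2)$ per superblock exactly as you do). The one detail to adjust is the storage of the small matrices: dedicating a full machine word to each $M'$ uses $\Theta(\frac{N\log d}{\log n})$ words, so to meet the stated $o(\frac{N\log d}{\log n})$ additional-space bound you should append the $\Oh(d\log(d\log^2 n))$-bit matrices to a packed output tape rather than one word each.
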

\begin{proof}
To compute the small matrices $M'$, i.e., the cumulative generalized ranks for blocks,
and the first copies of the matrices $M$, i.e., the cumulative generalized ranks for superblocks,
we notice that the standard solution for generalized rank queries in $\Oh(1)$ time is to split
the string into superblocks and blocks. Then, for every superblock we store the cumulative generalized
rank of every character, and for every block we store the cumulative generalized rank within
the superblock for every character. As explained in the proof of Lemma~\ref{lm:rank/select construction 2}
presented in the appendix, such data can be computed in $\Oh(\frac{N\log d}{\log n})$ time
if the size of the blocks and the superblocks are chosen to be $\frac{1}{3}\frac{\log n}{\log d}$
and $d\log^2n$, respectively. Therefore, we obtain in the same complexity every
first copy of the matrix $M$, and a small matrix every $\frac{1}{3}\frac{\log n}{\log d}$ characters.
We can simply extract and save every third such small matrix, also in the same complexity.

The second copies of the matrices are constructed from the first copies in $\Oh(d^{2})$ time each;
 we simply partition each row into (overlapping) sections and append each part to the appropriate machine words.
 This takes $\Oh(\frac{N d^{2}}{d^{2}\log n})=\Oh(\frac{N}{\log n})$ in total.
 \mayqed \end{proof}

As follows from the lemma, all strings $D_u$ at a single level of the tree can be extended in $\Oh(n \log \log n / \log n)$ time,
which, together with constructing the tree itself, sums up to $\Oh(n\sqrt{\log n})$ time in total. 

\subsection{Range Successor Queries}
\label{sec:orpq}
In this section we show how wavelet trees can be used to answer range successor queries.
In our setting, these queries can be interpreted as follows: Given a range $R=[i,j]$ of positions of a string $s$ and a character $c$ compute $c'=\min\{s[k] : i\in R, s[k]\ge c\}$, the successor of $c$ in $R$.

\subsubsection{Online algorithm}
We first show how to answer successor queries online by a straightforward reduction to range rank and select queries.

\begin{theorem}\label{thm:rsq:on}
Given an array $A[1..n]$ of integers, in $\Oh(n\sqrt{\log n})$ time we can construct a data structure of size $\Oh(n)$,
which can answer range successor queries in $\Oh(\log n / \log \log n)$ time.
\end{theorem}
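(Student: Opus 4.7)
The plan is to reduce a range successor query to one range rank query followed by one range select query, both supported by the wavelet tree of degree $d = \log^{\eps} n$ constructed in Section~\ref{sec:higher degree construction} and augmented with the rank/select machinery of Section~\ref{sec:range selection}. By Lemma~\ref{lm:rank/select construction 2} and Lemma~\ref{lm:Brodal construction}, this preprocessing takes $\Oh(n\sqrt{\log n})$ time and $\Oh(n)$ space, and supports range rank and range select queries in $\Oh(\log n/\log\log n)$ time each.

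Given a query $(i,j,c)$, I would first compute $r$, the number of indices $k \in [i..j]$ with $A[k] < c$. While the definition of a range rank query in Section~\ref{sec:range selection} is stated for $c = A[k]$, the standard wavelet-tree traversal works verbatim for any value $c$ expressible in $\log\sigma$ bits: at each node we use the bits of $c$ to decide whether to descend left or right, using the generalized rank structure on $D_u$ to update the current interval $[i..j]$ and accumulate the count. If $r = j-i+1$, every element of $A[i..j]$ is strictly smaller than $c$, and I report that no successor exists. Otherwise, I issue a range select query for the $(r+1)$-th smallest element of $A[i..j]$ and return it.

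Correctness is immediate: by definition of $r$, the $(r+1)$-th smallest element of $A[i..j]$ is the smallest value in the range that is not strictly smaller than $c$, which is precisely the desired successor. If $c$ itself occurs in $A[i..j]$, the returned value equals $c$; otherwise it is the smallest element of $A[i..j]$ that strictly exceeds $c$.

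The only mild subtleties are verifying that the rank/select data structures of Section~\ref{sec:range selection} accept an arbitrary threshold rather than one of the form $A[k]$ (which they do, since they only depend on the bit representation of the threshold at each level), and handling the boundary case $r = j-i+1$. Summing the costs gives $\Oh(\log n/\log\log n)$ query time with $\Oh(n\sqrt{\log n})$ construction time and $\Oh(n)$ space, as claimed.
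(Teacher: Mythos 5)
Your proposal is correct and follows essentially the same route as the paper: the paper's proof is exactly the reduction of a range successor query to a range rank query followed by a range select query for the $(r+1)$-th smallest element, relying on the $\Oh(\log n/\log\log n)$-time rank/select structures of Section~\ref{sec:range selection}. Your additional remarks about arbitrary thresholds and the no-successor boundary case are sensible refinements but do not change the argument.
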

\begin{proof}
To compute the successor of $A[k]$ in $A[i..j]$, we determine the rank $r$ of $A[k]$ in $A[i..j]$ and use a range selection query
to find the $(r+1)$-th smallest element in $A[i..j]$. This element is the successor of $A[k]$.
\end{proof}

\subsubsection{Offline Algorithm}
We begin with a simple (though inefficient) online algorithm which uses just the wavelet tree.
Later we will show how to speed it up by making it offline.
Recall that for each node $v$ of the wavelet tree we defined $S_v$ as the subsequence of $s$ containing the characters corresponding to the leaves in the subtree rooted at $v$.
For a fixed range $R$ let $R_v$ be the \emph{inherited} range of positions, i.e., a range of positions in $S_v$ such that the corresponding positions of $s$ belong to $R$.
Note that if $v$ is the root of the wavelet tree, then $R_v = R$.
Moreover, if $w$ is the parent of $v$, we can use $R_w$ and the rank structure on the bitmask $B_w$ to compute $R_v$ in constant time.
The problem of efficiently determining $R_v$ for a given $R$ and $v$, known as \emph{ball inheritance problem}~\cite{DBLP:journals/siamcomp/Chazelle88,DBLP:conf/compgeom/ChanLP11}, is the heart of state-of-the-art online algorithms for range successor queries~\cite{SortedRangeReporting,DBLP:journals/corr/Zhou13b}.
Our solutions do not rely this tool as a black box but computing and maintaining inherited ranges 
is still their important ingredient.

We identify leaves of the wavelet trees with characters in $\Sigma$. 
Note that $c'$ is the leftmost leaf to the right of $c$ with a non-empty inherited range $R_{c'}$.
While computing $c'$, as an intermediate step we determine the lowest common ancestor $w$ of $c$ and $c'$.
If $c'\ne c$,  then $w$ is the deepest ancestor of $c$ whose left subtree contains $c$ and right child
$w_{r}$ satisfies $R_{w_r}\ne \emptyset$.
Moreover, $c'$ is then the leftmost leaf in the subtree of $w_r$ for which the inherited range is non-empty.

This leads to the following algorithm: we traverse the path from the root to $c$, maintaining the inherited ranges.
Whenever we visit a node $w$ such that $c$ is in its left subtree, we check whether
$R_{w_r}=\emptyset$ where $w_r$ is the right child of $w$. During the algorithm we remember the last node $w$ satisfying this property and the inherited range $R_w$.
If $R_c\ne \emptyset$, then $c'=c$ is the successor of $c$ in $R$. Otherwise, we find the leftmost leaf in the subtree of $w_r$ with a non-empty inherited range.
For this we descend the tree from $w_r$ going to the left child if its inherited range is non-empty and to the right child otherwise.
In both phases of the algorithm we follow a single path down the tree, so the running time is $\Oh(\log \sigma)$.

To speed up the algorithm, we reuse the concept of big nodes, introduced in Section~\ref{sec:WTconstruction}.
Recall that a node is big if its depth is a multiple of $\tau = \sqrt{\log n}$ and small otherwise.
The construction algorithm explicitly generates subsequences $S_v$ for all big nodes $v$.
To answer range successor queries, we store these subsequences and augment them with data structures
for range minimum and range maximum queries~\cite{DBLP:journals/siamcomp/HarelT84,Bender:2000:LPR:646388.690192}.

Improving the running time of the second part of the algorithm is easy: 
once we visit a big node $v$, instead of proceeding down to the leaf, we ask a range minimum query for the range $R_v$ of $S_v$
to determine the final answer. Consequently, this part takes $\Oh(\tau)$ time only.

On the other hand, range maximum queries let us easily test if a given big ancestor $v$ of $c$ is also an ancestor of $c'$:
we check whether the maximum in the range $R_v$ of $S_v$ is at least $c$. Thus, if we knew inherited ranges for all
$\Oh(\frac{\log \sigma}{\tau})$ big ancestors of $c$, we could determine the lowest of them which is an ancestor of $c'$
and run the first part of the original algorithm starting at that ancestor and terminating at the next big ancestor (or at $c'$, if none).
Hence, we could determine the lowest common ancestor $w$ in $\Oh(\frac{\log \sigma}{\tau}+\tau)$ time.

Computing the inherited ranges of big ancestors is where we benefit from the offline setting.
\begin{lemma}
Given a collection of $q$ queries consisting of a range  $R_i$ and a character $c_i$,
we can compute the inherited ranges of all big ancestors for all $c_i$ in $\Oh((n+q)\frac{\log \sigma}{\tau})$ time.
\end{lemma}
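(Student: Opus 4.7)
The plan is a batched top-down traversal over the $\frac{\log\sigma}{\tau}$ big levels of the wavelet tree. For each query $i$ I maintain its current big ancestor of $c_i$ together with the inherited range of $R_i$ in the corresponding subsequence $S_u$. Initially (big level $0$) every query sits at the root with inherited range $R_i$, since $S_{\text{root}} = s$. To advance one big level at a big node $u$ of depth $k\tau$ handling batch $Q_u$: for each query $i \in Q_u$, its next big ancestor $v_i$ is identified in $\Oh(1)$ time by the $\tau$-bit pattern $t_i$ extracted from $c_i$ at positions $k\tau+1,\ldots,(k+1)\tau$; if $[l_i,r_i]$ is the current range in $S_u$, then the inherited range in $S_{v_i}$ is $[\rho_{t_i}(l_i-1)+1,\ \rho_{t_i}(r_i)]$, where $\rho_t(j)$ counts indices $j'\le j$ such that $S_u[j']$ matches $t$ on those $\tau$ bits.

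The core of the argument is computing all needed $\rho_{t_i}$ values for one big node $u$ in $\Oh(|S_u|+q_u)$ time. I would sort the $2q_u$ relevant endpoints by position via bucket sort and sweep $S_u$ from left to right, extracting the $\tau$-bit pattern of each $S_u[j]$ in $\Oh(1)$ time and maintaining a running count for each encountered pattern in a global array $C[0..2^\tau-1]$. To sidestep the $\Omega(2^\tau)$ cost of resetting $C$ at each big node, I pair it with a version-stamp array $V$ and apply lazy reset: an entry $C[t]$ is treated as $0$ whenever $V[t]$ does not match the current node's version. Since $2^\tau = 2^{\sqrt{\log n}} = o(n)$, the one-time allocation of $C$ and $V$ is absorbed into the $\Oh(n)$ budget per level, while each update and read during the sweep costs $\Oh(1)$. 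Whenever the sweep meets an endpoint of query $i$, the stamp-aware value of $C[t_i]$ is the desired $\rho_{t_i}$, which we record. After the sweep, each emitted triple $(v_i,l_i',r_i')$ is routed to the bucket of $v_i$ for the next big level; grouping is cheap since $v_i$ is obtained in $\Oh(1)$ from $u$ and $t_i$.

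Summing over the big nodes at a single level gives $\sum_u (|S_u|+q_u) = n+q$, because each character of $s$ lies in exactly one $S_u$ and each query in exactly one $Q_u$ at that level. Multiplying by the $\frac{\log\sigma}{\tau}$ big levels yields the claimed $\Oh((n+q)\frac{\log\sigma}{\tau})$ bound. The principal obstacle I foresee is the per-node rank subroutine; specifically, avoiding the $2^\tau$ reinitialisation cost is essential, and the lazy version-stamp trick is what makes the per-node cost truly $\Oh(|S_u|+q_u)$ in a deterministic way.
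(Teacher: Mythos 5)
Your proposal is correct and follows essentially the same route as the paper: a level-by-level top-down sweep of each $S_u$ maintaining per-descendant counters, recording the counter value at query endpoints to obtain the inherited ranges, with the $\Oh(|S_u|+q_u)$ per-node cost summing to $\Oh(n+q)$ per big level. The only cosmetic difference is your lazy version-stamp reset of the counter array, which is a valid safeguard but not strictly needed, since reinitialising all $2^\tau$ counters at every big node of one level costs at most the number of big nodes at the next level, i.e.\ $\Oh(\sigma)=\Oh(n)$, and is already absorbed by the stated bound.
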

\begin{proof}
We only show how to find the right endpoints of the ranges. Computing the left endpoints is analogous.
The algorithm resembles the first phase of the wavelet tree construction.

We compute the endpoints in a level-by-level top-down manner.
Let $R_{i,u}$ be the range induced by $R_i$ at a big node $u$.
In a single step we use the endpoints of the ranges $R_{i,u}$ at the node $u$ to find the endpoints of the ranges at nodes whose deepest (proper) big ancestor is $u$. 
There are exactly $2^\tau$ such nodes $v_0,\ldots,v_{2^\tau-1}$.
We process the sequence $S_u$ maintaining a counter for each node $v_t$, $t = 0..2^\tau-1$.
For each $j$ we first increment the counter of the node $v_t$ such that $S_{v_t}$ contains $S_u[j]$.
Then, we process the ranges $R_{i,u}$ whose right endpoint is $j$. We identify the node $v_t$ that is the ancestor of $c_i$ and set the endpoint of $R_{i,v_t}$ to be the current value of the counter for $v_t$.

Such a single step takes $\Oh(|S_u|+q_u)$ time where $q_u$ is the number of queries with $c_i$ in the subtree of $u$.
This sums up to $\Oh(n+q)$ per level and $\Oh((n+q)\frac{\log \sigma}{\tau})$ in total.
\mayqed
\end{proof}
The total running time, including wavelet tree construction, is $\Oh(n\frac{\tau \log \sigma}{\log n} + (n+q)\frac{\log \sigma}{\tau} + q\tau)$.
With $\tau = \sqrt{\log \sigma}$, this gives $\Oh((n+q)\sqrt{\log \sigma} + n\log^{3/2} \sigma/\log n)$ time and yields the following result.

\begin{theorem}
A collection of $q$ range successor queries on a string $s$ of length $n$ over an alphabet~$\Sigma$
can be answered in $\Oh((n+q)\sqrt{\log \sigma})$ time.
\end{theorem}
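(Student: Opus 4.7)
The plan is to combine the three ingredients already developed in this section: the wavelet tree, the notion of big nodes at depths divisible by $\tau$, and the offline procedure of the preceding lemma that computes inherited ranges at all big ancestors of all query characters. The overall algorithm for a batch of $q$ queries $(R_i,c_i)$ is: (1) build the wavelet tree of $s$ using Theorem~\ref{th:wavelet_tree} and, as a by-product of the first phase of that construction, retain the sequences $S_v$ for every big node $v$; (2) augment each $S_v$ with linear-time range minimum and range maximum structures~\cite{DBLP:journals/siamcomp/HarelT84,Bender:2000:LPR:646388.690192}; (3) invoke the preceding lemma to obtain, for every query, the inherited range $R_{i,v}$ at every big ancestor $v$ of the leaf $c_i$; (4) answer each query individually using the big-node-aware two-phase descent sketched above.

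For a single query $(R,c)$, the first phase walks through the $\Oh(\frac{\log\sigma}{\tau})$ big ancestors of $c$, each time testing in $\Oh(1)$ time whether the right sibling subtree of the current ancestor contains a position whose inherited range is non-empty; here I use the fact that a range maximum query on $S_v$ restricted to $R_v$ returns a character $\ge c$ if and only if the subtree of $v$ is an ancestor of $c'$. This pinpoints the lowest big ancestor $v^\star$ that is still an ancestor of $c'$. Starting from $v^\star$, I then walk downwards bit-by-bit in the wavelet tree, maintaining the inherited range via the $\rank$ structures on the bitmasks $B_u$, for at most $\tau$ steps, until I reach the lowest common ancestor $w$ of $c$ and $c'$; this identifies $w_r$. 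The second phase asks a single range minimum query on the inherited range at $w_r$; if $w_r$ is not yet a big node, I walk down at most $\tau$ further levels until I reach the first big descendant of $w_r$ and then query there. Each individual query thus costs $\Oh(\frac{\log\sigma}{\tau}+\tau)$.

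Summing the costs, wavelet tree construction contributes $\Oh(n\log\sigma/\sqrt{\log n})$ by Theorem~\ref{th:wavelet_tree}, the range min/max preprocessing adds only $\Oh(n\log\sigma/\tau)$ (total length of all $S_v$ at big levels), the lemma contributes $\Oh((n+q)\frac{\log\sigma}{\tau})$, and the per-query work sums to $\Oh(q(\frac{\log\sigma}{\tau}+\tau))$. The total is therefore $\Oh\!\left(n\tfrac{\tau\log\sigma}{\log n}+(n+q)\tfrac{\log\sigma}{\tau}+q\tau\right)$. Choosing $\tau=\sqrt{\log\sigma}$ and using $\sigma\le n$ so that $\log\sigma\le\log n$, the first term is absorbed into the others, yielding the claimed $\Oh((n+q)\sqrt{\log\sigma})$ bound.

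The main obstacle in the argument is correctness of the big-ancestor test in phase one: I need to verify that the deepest big ancestor $v^\star$ of $c$ whose inherited range contains a character $\ge c$ in the right-sibling subtree is indeed on the root-to-$c'$ path, and that the in-between descent of at most $\tau$ levels reproduces the behaviour of the naive online algorithm. Once this is established, the analysis is a routine aggregation, and the choice of $\tau=\sqrt{\log\sigma}$ (rather than $\sqrt{\log n}$ as used for construction) arises naturally from balancing the query term $q\tau$ against the inherited-range term $(n+q)\log\sigma/\tau$.
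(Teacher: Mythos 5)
Your proposal is correct and follows essentially the same route as the paper: big nodes at depths divisible by $\tau$, stored subsequences $S_v$ augmented with range min/max structures, the offline lemma for inherited ranges at big ancestors, the two-phase descent costing $\Oh(\frac{\log\sigma}{\tau}+\tau)$ per query, and the same balancing choice $\tau=\sqrt{\log\sigma}$. The one step you flag as needing verification (that $v$ is an ancestor of $c'$ iff the range maximum over $R_v$ in $S_v$ is at least $c$) is asserted without proof in the paper as well; it follows because the leaves below $v$ form a contiguous character interval containing $c$.
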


\begin{corollary}\label{cor:rsq:off}
Given an array $A[1..n]$ of $n$ integers, we can answer $q$ range successor queries in $\Oh((n+q)\sqrt{\log n})$ total time.
\end{corollary}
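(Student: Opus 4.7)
The plan is a straightforward reduction from integer arrays to strings over a bounded alphabet via rank reduction, so that the preceding theorem can be applied directly. Given $A[1..n]$, I would first sort its entries and replace each $A[i]$ by its rank in the sorted order, obtaining a string $s$ of length $n$ over an alphabet $\Sigma$ of size $\sigma\le n$. A range successor query on $A$ specified by indices $(i,j,k)$ asks for the smallest $A[\ell]$ with $\ell\in[i,j]$ and $A[\ell]\ge A[k]$; since rank reduction is order preserving, this is equivalent to the string range successor query on $s$ with range $[i,j]$ and character $c=s[k]$. The rank returned by the string query is then converted back to the corresponding integer of $A$ in $\Oh(1)$ time using the sorted array as an inverse ranking table.

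For the sorting step I would use a word-RAM integer sort: deterministically in $\Oh(n\log\log n)$ time by Han's algorithm, which is $o(n\sqrt{\log n})$ and thus never dominates. Applying the preceding theorem to $s$ answers the $q$ queries in $\Oh((n+q)\sqrt{\log\sigma})\le \Oh((n+q)\sqrt{\log n})$ time (including the underlying wavelet tree construction from Theorem~\ref{th:wavelet_tree}). Summing the sorting phase and the string range successor phase yields the claimed $\Oh((n+q)\sqrt{\log n})$ bound.

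I do not anticipate any serious obstacle: the reduction is an order-preserving relabeling, so correctness of every query carries over verbatim from the string setting, and the only quantitative ingredient beyond the preceding theorem is the availability of an $o(n\sqrt{\log n})$-time integer sorting routine in the word RAM model. The only point worth double-checking is that the rank-reduced query character is always a character of $\Sigma$ (which it is, since the query is parameterized by an index $k$ into $A$), so no predecessor search on external values is required.
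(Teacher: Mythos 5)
Your proposal is correct and matches the paper's proof, which likewise performs the standard rank-reduction to an alphabet of size at most $n$ via Han's deterministic $\Oh(n\log\log n)$ sorting algorithm and then invokes the preceding theorem. You simply spell out the order-preserving relabeling and the inverse ranking table in more detail than the paper does.
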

\begin{proof}
A standard reduction lets us assume that values in the array are in $[0,n-1]$.
This is at the price of sorting the array, which takes $\Oh((n+q)\log \log n)$ time if one uses a deterministic sorting algorithm by Han~\cite{Han}.
\end{proof}

\section{Wavelet Suffix Trees}\label{sec:wst}
In this section we generalize wavelet trees to obtain wavelet suffix trees.
With logarithmic height and shape resembling the shape of the suffix tree wavelet suffix trees, augmented with additional stringological data structures, become a very powerful tool. In particular, they allow to answer the following queries efficiently: (1) find the $k$-th lexicographically minimal suffix of a substring of the given string (\emph{substring suffix selection}), (2) find the rank of one substring among the suffixes of another substring (\emph{substring suffix rank}), and (3) compute the run-length encoding of the Burrows-Wheeler transform of a substring. 

\paragraph{Organisation of Section~\ref{sec:wst}.}
In Section~\ref{ssec:prelim} we introduce several, mostly standard, stringological notions
and recall some already known algorithmic results.
Section~\ref{ssec:over} provides a high-level description of the wavelet suffix trees. It forms an interface
between the query algorithms (Section~\ref{ssec:apps}) and the more technical
content: full description of the data structure (Section~\ref{ssec:full}) and its construction algorithm (Section~\ref{ssec:constr}).
Consequently, Sections~\ref{ssec:full} and~\ref{ssec:apps} can be read separately.
The latter additionally contains cell-probe lower bounds for some of the queries (suffix rank \& selection),
as well as a description of a generic transformation of the data structure, which allows to replace a dependence on $n$ with a dependence on $|x|$
in the running times of the query algorithms.

\subsection{Preliminaries}\label{ssec:prelim}
Let $w$ be a string of length $|w| = n$ over the alphabet $\Sigma = [0, \sigma-1]$. 
For $1 \le i \le j \le n$, $w[i..j]$ denotes the \emph{substring} of $w$ from position $i$ to position $j$ (inclusive). 
For $i = 1$ or $j = |w|$, we use shorthands $w[..j]$ and $w[i..]$. 
If $x=w[i..j]$, we say that $x$ \emph{occurs} in $w$ at position $i$.
Each substring $w[..j]$ is called a \emph{prefix} of $w$, and each substring $w[i..]$ is called a \emph{suffix} of $w$. 
A substring which occurs both as a prefix and as a suffix of $w$ is called a \emph{border} of $w$.
The length longest common prefix of two strings $x,y$ is denoted by $\lcp(x,y)$.

We extend $\Sigma$ with a sentinel symbol $ \$$, which we assume to be smaller than any other character.
The order on $\Sigma$ can be generalized in a standard way to the \emph{lexicographic} order of the strings over $\Sigma$: a string $x$ is lexicographically smaller than $y$ (denoted $x\prec y$) if either $x$ is a proper prefix, or there exists a position~$i$, $0 \le i < \min\{|x|, |y|\}$, such that $x[1..i] = y[1..i]$ and $x[i+1] \prec y[i+1]$. 
The following lemma provides one of the standard tools in stringology. 
\begin{lemma}[LCP Queries \cite{AlgorithmsOnStrings}]\label{lem:all}
A~string $w$ of length $n$ can be preprocessed in $\Oh(n)$ time so that the following queries can be answered in $\Oh(1)$ time: Given substrings $x$ and $y$ of $w$, compute $\lcp(x, y)$ and decide whether $x \prec y$, $x = y$, or $x \succ y$.
\end{lemma}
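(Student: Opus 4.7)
The plan is to reduce the query to a longest common extension computation on $w$ itself and use classical suffix array machinery. First, I would build the suffix array $\mathrm{SA}$ of $w\$$ together with its inverse $\mathrm{SA}^{-1}$ in $\Oh(n)$ time (e.g.\ by the DC3/SA-IS construction), and then compute the LCP array $\mathrm{LCP}$ giving, for consecutive suffixes in lexicographic order, the length of their longest common prefix; this can be done in $\Oh(n)$ time by Kasai's algorithm. Finally, I would build a range minimum data structure on $\mathrm{LCP}$ supporting $\Oh(1)$-time queries after $\Oh(n)$ preprocessing, using e.g.\ the method of Bender and Farach-Colton. All three preprocessing steps fit in $\Oh(n)$ time and $\Oh(n)$ space.

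Given two substrings $x=w[i..j]$ and $y=w[i'..j']$, the key observation is that
\[
\lcp(x,y)=\min\bigl(\lcp(w[i..],w[i'..]),\, |x|,\, |y|\bigr),
\]
so it suffices to compute the length $\ell$ of the longest common prefix of the two full suffixes of $w$ starting at $i$ and $i'$. For this I would look up $p=\mathrm{SA}^{-1}[i]$ and $p'=\mathrm{SA}^{-1}[i']$, and then return $\ell=\min_{p<k\le p'}\mathrm{LCP}[k]$ (assuming $p<p'$; otherwise swap), which is answered in $\Oh(1)$ via the range minimum structure. Clamping by $|x|$ and $|y|$ finishes the $\lcp$ query.

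To decide the three-way comparison between $x$ and $y$ in $\Oh(1)$, I would compute $\ell=\lcp(x,y)$ as above and then distinguish cases by length and by the character following the common prefix: if $\ell=|x|=|y|$ then $x=y$; if $\ell=|x|<|y|$ then $x\prec y$ (and symmetrically for $y$); otherwise both $x[\ell+1]$ and $y[\ell+1]$ are well-defined characters of $w$, so a single character comparison $w[i+\ell]$ versus $w[i'+\ell]$ determines the order. Each of these operations is $\Oh(1)$ after the $\Oh(n)$ preprocessing.

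The construction of each component is a well-established $\Oh(n)$-time building block, so there is no genuine obstacle; the only thing worth being careful about is the clamping step and the boundary case $\ell\ge\min(|x|,|y|)$ in the comparison, where one must not dereference a position past the endpoint of $x$ or $y$. Handling this via the length comparison before touching the trailing character (as above) avoids any issue, so the lemma follows.
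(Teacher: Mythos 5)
The paper gives no proof of this lemma, citing it as a standard result; your argument (linear-time suffix array plus inverse, Kasai's LCP array, constant-time RMQ, clamping by the substring lengths, and a single character comparison for the three-way order) is exactly the standard construction the citation refers to, and it is correct. The only unmentioned triviality is the degenerate case $i=i'$, where the RMQ range is empty and $\ell$ should default to the full suffix length, which does not affect correctness.
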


We say that a sequence $p_0< p_1 < \ldots < p_k$ of positions in a string $w$
is a  \emph{periodic progression} if $w[p_0 .. p_{1}-1]=\ldots=w[p_{k-1}.. p_k-1]$.
Periodic progressions $p,p'$ are called \emph{non-overlapping} if the maximum term in $p$ is smaller than the minimum term in $p'$ or vice versa,  the maximum term in $p'$ is smaller than the minimum term in $p$.
Note that any periodic progression is an arithmetic progression and consequently it can be represented by three integers: $p_0$, $p_1-p_0$, and $k$.
 Periodic progressions appear in our work because of the following result: 
\begin{theorem}[\cite{DBLP:journals/corr/KociumakaRRW13}]\label{th:occurrences}
Using a data structure of size $\Oh(n)$ with $\Oh(n)$-time randomized (Las Vegas) construction, the following queries can be answered in constant time: Given two substrings $x$ and $y$ such that $|x|=\Oh(|y|)$, report the positions of all occurrences of $y$ in~$x$,
represented as at most $\frac{|x|+1}{|y|+1}$ non-overlapping periodic progressions.
\end{theorem}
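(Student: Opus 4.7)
The plan is to decouple the proof into a combinatorial structure theorem on the occurrences of $y$ inside $x$ and an algorithmic realization on top of the suffix tree of $w$. I would first invoke the Fine--Wilf periodicity lemma: if $y$ has two occurrences at positions $q<q'$ with $q'-q+p\le|y|$, where $p$ denotes the minimal period of $y$, then $p$ must divide $q'-q$. Using this, I would partition the occurrences of $y$ in $x$ into maximal \emph{clusters} in which consecutive occurrences are at distance at most $|y|-p$; inside a cluster the consecutive gaps are then forced to equal $p$, so the cluster is precisely a periodic progression in the sense defined earlier. Distinct clusters are separated by sufficiently large gaps in starting positions, and a charging argument assigning each cluster a disjoint segment of $x$ of length $\ge |y|+1$ yields the claimed bound $\frac{|x|+1}{|y|+1}$ on the number of clusters.

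On the algorithmic side, in deterministic $\Oh(n)$ time I would construct the suffix tree of $w$ equipped with LCE, weighted-ancestor, and LCA data structures, so that any substring given by its endpoints maps in constant time to its suffix-tree locus, and the minimal period $p$ of $y$ is recovered via a single LCE query between $y$ and $y$ shifted by one position (cf.\ Lemma~\ref{lem:all}). To answer a query $(x,y)$ I would then slide $\Oh(|x|/|y|)=\Oh(1)$ overlapping windows of length $2|y|$ across $x$ and, within each window, fetch the \emph{leftmost} and \emph{rightmost} occurrence of $y$ inside it; by the combinatorial claim above these two positions, together with $p$, fully determine the single arithmetic progression of occurrences in that window, and progressions from adjacent windows sharing common difference $p$ are merged in $\Oh(1)$.

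The main obstacle will be serving a window query in $\Oh(1)$ time using only $\Oh(n)$ total space, since naively there are $\Theta(n^{2})$ relevant (substring, window) pairs. My plan to overcome this is threefold: (i) reduce an arbitrary $y$ to an overlap of two \emph{basic factors} (substrings whose length is a power of two and whose starting position is aligned to that length), of which there are $\Oh(n\log n)$; (ii) observe that within each window the answer is determined by the $\Oh(1)$ maximal runs of period $p$ that cover the extremal occurrences, and those runs contribute only $\Oh(n)$ records in total across all levels of the reduction; and (iii) employ a Las Vegas procedure to compute collision-free Karp--Rabin fingerprints of all basic factors in $\Oh(n)$ expected time, and index them through a deterministic dictionary (e.g.\ Ru\v{z}i\'{c}'s) that provides constant-time lookup. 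This confines the randomization to the fingerprinting step, keeps the total space $\Oh(n)$, and leaves the query algorithm fully deterministic at $\Oh(1)$ time.
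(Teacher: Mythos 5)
The paper does not prove this statement at all: Theorem~\ref{th:occurrences} is imported as a black box from the cited reference (Kociumaka, Radoszewski, Rytter and Wale\'n), so your proposal can only be judged on its own merits, not against an in-paper argument. On those merits, the combinatorial half is sound and standard: the periodicity (Fine--Wilf) argument showing that occurrences of $y$ whose starting positions are close must be spaced by the minimal period $p$, the resulting clustering into periodic progressions, and the reduction of the query to $\Oh(1)$ overlapping windows of length $2|y|$ in each of which the occurrences form a single arithmetic progression determined by its leftmost occurrence, its rightmost occurrence, and $p$. This is exactly the skeleton on which the cited result is built.

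The genuine gap is in the algorithmic half, specifically in your step~(ii). Reducing $y$ to an overlap of two basic factors is fine (occurrences of $y$ in a window are the intersection, up to a shift, of two arithmetic progressions, computable in $\Oh(1)$), but you still must answer ``report the occurrences of a given basic factor in a given window of twice its length'' in $\Oh(1)$ time, and the natural way to index these answers does not fit in $\Oh(n)$ space. At level $k$ a window of length $2^{k+1}$ contains up to $2^{k}+1$ distinct length-$2^{k}$ factors, each needing its own record, so with $\Theta(n/2^{k})$ windows per level you get $\Theta(n)$ records \emph{per level} and $\Theta(n\log n)$ in total; your assertion that the runs ``contribute only $\Oh(n)$ records in total across all levels'' is exactly the claim that needs proof, and as stated it is false for this bookkeeping. (Note also the internal inconsistency: factors whose starting positions are aligned to their length number only $\Oh(n)$, but then an arbitrary $y$ is not an overlap of two of them; the unaligned dictionary of basic factors, which you implicitly use when you say there are $\Oh(n\log n)$ of them, is what forces the space blow-up.) Closing this gap is the actual technical content of the cited paper: one has to replace the full dictionary of basic factors by a sparse family of $\Oh(n/2^{k})$ representative fragments per level, chosen by a synchronization/local-consistency argument so that every occurrence of every queried pattern is anchored at a representative, and only then does the fingerprint-based dictionary (your step~(iii), which is the right place for the Las Vegas randomization) yield $\Oh(n)$ total space with $\Oh(1)$ query time. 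Without that sampling machinery your construction proves a weaker $\Oh(n\log n)$-space version of the theorem.
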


\subsection{Overview of wavelet suffix trees}\label{ssec:over}
For a string $w$ of length $n$, a \emph{wavelet suffix tree} of $w$ is a full binary tree of logarithmic height. 
Each of its $n$ leaves corresponds to a non-empty suffix of $w\$$.
The lexicographic order of suffixes is preserved as the left-to-right order of leaves.

Each node $u$ of the wavelet suffix tree stores two bitmasks. 
Bits of the first bitmask correspond to suffixes below $u$ sorted by their starting positions,
and bits of the second bitmask correspond to these suffixes sorted by pairs (preceding character, starting position). 
The $i$-th bit of either bitmask is set to $0$ if the $i$-th suffix belongs to the left subtree of $u$ and to $1$ otherwise. 
Like in the standard wavelet trees, on top of the bitmasks we maintain a rank/select data structure.
See Figure~\ref{fig:wst} for a sample wavelet suffix tree with both bitmasks listed down in nodes.

Each edge $e$ of the wavelet suffix tree is associated with a sorted list $L(e)$ containing substrings of~$w$.
The wavelet suffix tree enjoys an important \emph{lexicographic property}.
Imagine we traverse the tree depth-first, and when going \emph{down} an edge $e$ we write out the contents of $L(e)$,
whereas when visiting a leaf we output the corresponding suffix of $w\$$.
Then, we obtain the lexicographically sorted list of all substrings of $w\$$ (without repetitions).%
\footnote{A similar property holds for suffix trees if we define $L(e)$ so that it contains the labels of
all implicit nodes on $e$ and the label of the lower explicit endpoint of $e$.}
This, in particular, implies that the substrings in $L(e)$ are consecutive prefixes of the longest substring in $L(e)$,
and that for each substring $y$ of $w$ there is exactly one edge $e$ such that the $y\in L(e)$.

\begin{figure}[t]
\begin{center}

\begin{tikzpicture}[every text node part/.style={align=left}]
\tikzset{every node/.style={font={\footnotesize \ttfamily},inner sep=0.5pt}, every internal node/.style={draw,rectangle, font=\scriptsize, inner sep = 2pt}, every leaf node/.style={draw,circle, font=\normalsize, minimum size=15pt}, level distance=1.4cm,sibling distance=.3cm}
\Tree[.\node (root) {0101101010110\\0111110100010};
   [.{111110\\ 111110} 13 \edge node[auto=left, pos=.4] {\color{red} a} node[auto=left, pos=.6]{ab} node[auto=left, pos=.8]{\color{red} aba};
	[.{01001\\ 01001}  \edge node[auto=right, pos=0.275] {abab} node[auto=right, pos=.5] {\color{red} ababa} node[auto=right, pos=.725] {ababab} node[auto=right, pos=.95] {\color{blue} abababb};
		[.{100\\ 100}
			  [.{01\\ 01} 6  \edge node[auto=left, pos=.6]{\color{blue} ababb}; 8 ]  \edge node[auto=left, pos=.3]{ababba} node[auto=left, pos=.525]{ababbab}node[auto=left, pos=.75]{ababbaba}  node[auto=left, pos=.9]{$\dots$}; 1
		]  \edge node[auto=left, pos=.6]{\color{blue} abb};
		[ .{10\\ 10} 10   \edge node[auto=left, pos=.3]{abba} node[auto=left, pos=.525]{abbab}node[auto=left, pos=.75]{abbaba}  node[auto=left, pos=.9]{$\dots$};  3 ]
	]
]  \edge node[auto=left, pos=.5]{\color{blue} b};
	[ .{0100010\\ 0100100}
		[ .{11110\\ 11110} 12   \edge node[auto=left, pos=.45]{\color{red} ba} node[auto=left, pos=.7]{bab};
			[ .{1001\\ 1010}  \edge  node[auto=right, pos=0.275] {\color{red} baba} node[auto=right, pos=.5] {babab} node[auto=right, pos=.725] {\color{red} bababa} node[auto=right, pos=.95] {bababab};
				[ .{01\\ 10} \edge node[auto=right, pos=.6]{babababb}; 5 \edge node[auto=left, pos=.6]{\color{blue} bababb}; 7 ]
				\edge node[auto=left,pos=.55]{\color{blue} babb}; [ .{10\\ 10}  9 \edge node[auto=left, pos=.3]{babba} node[auto=left, pos=.525]{babbab}node[auto=left, pos=.75]{babbaba}  node[auto=left, pos=.9]{$\dots$}; 2 ]
			]
		]
	 \edge node[auto=left]{\color{blue} bb}; [ .{10\\ 10} 11 \edge node[auto=left, pos=.3]{bba} node[auto=left, pos=.525] {bbab}node[auto=left, pos=.75]{bbaba}  node[auto=left, pos=.9]{$\dots$}; 4 ]
	]
];
\end{tikzpicture}
\vspace{-.5cm}
\end{center}
\caption{A wavelet suffix tree of $w=\texttt{ababbabababb}$. 
Leaves corresponding to $w[i..]\$$ are labelled with $i$.
Elements of $L(e)$ are listed next to $e$, with $\dots$
denoting further substrings up to the suffix of $w$. Suffixes of $x=\texttt{\color{red}bababa}$ are marked red, of $x=\texttt{\color{blue}abababb}$: blue. Note that the prefixes of $w[i..]$ do not need to lie above the leaf~$i$ (see $w[1,5]=\texttt{ababb}$),
and the substrings above the leaf $i$ do not need to be prefixes of $w[i..]$ (see $w[10..]$ and $\texttt{aba}$).}
\label{fig:wst}
\end{figure}

In the query algorithms, we actually work with $L_x(e)$, containing the suffixes of $x$ among the elements of $L(e)$.
For each edge $e$, starting positions of these suffixes form $\Oh(1)$ non-overlapping periodic progressions,
and consequently the list $L_x(e)$ admits a constant-space representation.
Nevertheless, we do not store the lists explicitly, but instead generate some of them on the fly.
This is one of the auxiliary operations,
each of which is supported by the wavelet suffix tree in constant~time.
\begin{enumerate}[(1)]\compact
  \item For a substring $x$ and an edge $e$, output the list $L_x (e)$ represented as $\Oh(1)$ non-overlapping periodic progressions;
  \item Count the number of suffixes of $x = w[i..j]$ in the left/right subtree of a node (given along with the segment of its first bitmask  corresponding to suffixes that start inside $[i, j]$);
  \item Count the number of suffixes $x=w[i..j]$ that are preceded by a character~$c$ and lie in the left/right subtree of a node (given along with the segment of its second bitmask corresponding to suffixes that start inside $[i, j]$ and are preceded by $c$);
  \item For a substring $x$ and an edge $e$, compute the run-length encoding of the sequence of characters preceding suffixes in $L_x(e)$.
\end{enumerate}

\subsection{Full description of wavelet suffix trees}\label{ssec:full}
We start the description with Section~\ref{ssec:tools}, where we introduce \emph{string intervals}, 
a notion central to the definition of wavelet suffix tree.  
We also present there corollaries of Lemma~\ref{lem:all} which let us efficiently deal with string intervals.
Then, in Section~\ref{ssec:WSTdefinition}, we give a precise definition of wavelet suffix trees and prove its several combinatorial consequences.
We conclude with Section~\ref{ssec:ops}, where we provide the implementations of auxiliary operations defined in Section~\ref{ssec:over}.

\subsubsection{String intervals}\label{ssec:tools}
To define wavelet suffix trees, we often need to compare substrings of $w$ trimmed to a certain number of characters. 
If instead of $x$ and $y$ we compare their counterparts trimmed to $\ell$ characters,
i.e., $x[1..\min\{\ell,|x|\}]$ and $y[1..\min\{\ell,|y|\}]$,
we use $\ell$ in the subscript of the operator, e.g., $x=_{\ell}y$  or $x \preceq_\ell y$.

For a pair of strings $s, t$ and a positive integer $\ell$ we define a \emph{string interval} $[s, t]_\ell=\{z \in \bar{\Sigma}^*: s \preceq_\ell z \preceq_\ell t\}$ and $(s,t)_\ell = \{z\in \bar{\Sigma}^* : s\prec_\ell z \prec_\ell t\}$. Intervals $[s, t)_\ell$ and $(s, t]_\ell$ are defined analogously. The strings $s,t$ are called  the \emph{endpoints} of these intervals.

In the remainder of this section, we show that the data structure of Lemma~\ref{lem:all} can answer queries
related to string intervals and periodic progressions, which arise in Section~\ref{ssec:ops}.
We start with a simple auxiliary result; here $y^\infty$ denotes a (one-sided) infinite string
obtained by concatenating an infinite number of copies of $y$.

\begin{lemma}\label{lem:corr}
The data structure of Lemma~\ref{lem:all} supports the following queries in $\Oh(1)$ time:
\begin{enumerate}[(1)]\compact
  \item\label{it:lexStrong} Given substrings $x$, $y$ of $w$ and an integer $\ell$, determine if $x \prec_\ell y$, $x =_\ell y$, or $x \succ_\ell y$.
  \item\label{it:pp} Given substrings $x,y$ of~$w$,
  compute $\lcp(x,y^\infty)$ and determine whether $x\prec y^\infty$ or $x\succ y^\infty$.
\end{enumerate}
\end{lemma}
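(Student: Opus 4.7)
The plan is to derive both parts as short reductions to Lemma~\ref{lem:all}, which already supplies $\lcp$ and full lexicographic comparison on substrings of $w$ in $\Oh(1)$ time.

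For part~(\ref{it:lexStrong}), I would compute $p=\lcp(x,y)$ and return $x=_\ell y$ whenever $p\ge\ell$. Otherwise I claim the ordinary lexicographic comparison of $x$ and $y$ (also supplied by Lemma~\ref{lem:all}) already coincides with the $\ell$-trimmed one. A short case check confirms this: if $p<\min(|x|,|y|)$ the order is decided at position $p+1\le\ell$, which lies inside both trimmed prefixes; and if $p$ equals $|x|$ or $|y|$, the shorter string is a proper prefix of the longer one, a relation preserved when we trim both to length~$\ell>p$. Hence a single additional comparison call finishes the case.

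For part~(\ref{it:pp}), I would again start with $p=\lcp(x,y)$. The easy branches are (i)~$p<|y|$ and $p<|x|$, where $y^\infty[p+1]=y[p+1]\ne x[p+1]$, so $\lcp(x,y^\infty)=p$ and the order agrees with that of $x$ versus $y$; and (ii)~$p=|x|\le|y|$, where $x$ is a proper prefix of $y$ and hence of $y^\infty$, giving $\lcp(x,y^\infty)=|x|$ and $x\prec y^\infty$. The main case, and the \emph{main obstacle}, is $p=|y|<|x|$, i.e.\ $x$ starts with $y$, because then $y$ might repeat further inside $x$ and a naive recursion would not run in $\Oh(1)$.

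To resolve this I would fire a second query, $r=\lcp(x,\,x[|y|+1..])$; note that $x[|y|+1..]$ is still a substring of $w$, so Lemma~\ref{lem:all} applies. If $r=|x|-|y|$ then $x$ has period $|y|$ throughout, so $x$ is a proper prefix of $y^\infty$ and $\lcp(x,y^\infty)=|x|$, $x\prec y^\infty$. Otherwise, using $x[1..|y|]=y$ together with $x[i]=x[i-|y|]$ for $|y|<i\le|y|+r$, a short induction shows that $x[1..|y|+r]$ equals the corresponding prefix of $y^\infty$ while $x[|y|+r+1]\ne x[r+1]=y^\infty[|y|+r+1]$; hence $\lcp(x,y^\infty)=|y|+r$, and $x\prec y^\infty$ iff $x[|y|+r+1]<x[r+1]$, which is exactly the condition $x\succ x[|y|+1..]$ as decided by Lemma~\ref{lem:all}. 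The key conceptual point is this reduction of the comparison with the infinite string $y^\infty$ to a self-comparison of $x$ with its $|y|$-shift, made possible by the period structure forced by $x$ starting with $y$.
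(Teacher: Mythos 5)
Your proposal is correct and follows essentially the same route as the paper: part (1) reduces to a plain $\lcp$ plus one ordinary comparison, and part (2) handles the case where $y$ is a prefix of $x$ by writing $x=yx'$ and observing that $\lcp(x,y^\infty)=|y|+\lcp(x',x)$ with the order of $x$ versus $y^\infty$ determined by that of $x'$ versus $x$ — your $r=\lcp(x,x[|y|+1..])$ is exactly this quantity. Your write-up merely spells out the periodicity induction that the paper leaves implicit.
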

\begin{proof}
\noindent
(\ref{it:lexStrong})  By Lemma~\ref{lem:all}, we may assume to know $\lcp(x,y)$. If  $\lcp(x,y) \ge \ell$, then $x =_\ell y$. Otherwise, 
trimming $x$ and $y$ to $\ell$ characters does not influence the order between these two substrings. 

\smallskip

\noindent
(\ref{it:pp}) 
If $\lcp(x,y)< |y|$, i.e., $y$ is not a prefix of $x$,
then $\lcp(x,y^\infty)=\lcp(x,y)$ and the order between $x$ and $y^\infty$ is the same as between $x$ and $y$.
Otherwise, define $x'$ so that $x=yx'$. Then $\lcp(x,y^\infty) = |y|+\lcp(x',x)$ and the order
between $x$ and $y^\infty$ is the same as between $x'$ and~$x$.
Consequently, the query can be answered in constant time in both cases.
\mayqed
\end{proof}

\begin{lemma}\label{lem:IntervalSelection}
The data structure of Lemma~\ref{lem:all} supports the following queries in $\Oh(1)$ time:
Given a periodic progression $p_0<\ldots<p_k$ in $w$, a position $j\ge p_k$, and a string interval $I$ whose endpoints are substrings of~$w$,
report, as a single periodic progression, all positions $p_i$ such that $w[p_i..j] \in I$.
\end{lemma}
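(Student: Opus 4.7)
My plan is to set $y := w[p_0..p_1-1]$ as the period driving the progression, so that, defining $s_i := w[p_i..j]$, we have the factorisation $s_i = y^{k-i}\cdot s_k$ for every $i\in\{0,\ldots,k\}$. The key structural observation is that the sequence $(s_i)_{i=0}^{k}$ is monotonic under the trimmed lexicographic order: if $s_k\prec y^\infty$ then $s_0\succeq_\ell s_1\succeq_\ell\cdots\succeq_\ell s_k$, and if $s_k\succ y^\infty$ the chain is reversed. This reduces, by an inductive unfolding $s_i = y\cdot s_{i+1}$, to comparing $s_{k-1}=y\cdot s_k$ with $s_k$, whose direction agrees with that of $s_k$ vs $y^\infty$. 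Lemma~\ref{lem:corr}(\ref{it:pp}) determines this direction and computes $a_k := \lcp(s_k,y^\infty)$ in $\Oh(1)$ time.

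Monotonicity implies that the set $\{i : s_i\in I\}$ is a contiguous range of indices, so the corresponding subset of $\{p_0,\ldots,p_k\}$ is a sub-arithmetic progression with the same common difference $|y|$—itself a periodic progression, reportable in constant space. It therefore suffices to find, for each endpoint $z$ of $I$, the boundary index at which the predicate ``$s_i\preceq_\ell z$'' switches truth value, and to intersect the two resulting ranges (respecting open vs.\ closed endpoints).

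To locate a boundary in constant time I first compute $M := \lcp(z,y^\infty)$ and the direction of $z$ vs $y^\infty$ via Lemma~\ref{lem:corr}(\ref{it:pp}). Exploiting the identity $\lcp(s_i,y^\infty) = (k-i)|y| + a_k$, the $\ell$-trimmed comparison of $s_i$ and $z$ falls into one of three regimes depending on how this quantity relates to $M$ and to $\ell$: when $(k-i)|y|+a_k < \min\{M,\ell\}$ the outcome equals the direction of $s_k$ vs $y^\infty$; when $M < (k-i)|y|+a_k$ and $M < \ell$ the outcome equals the direction of $y^\infty$ vs $z$; and when both $\lcp(s_i,y^\infty)\ge \ell$ and $M\ge\ell$ we have $s_i=_\ell z$. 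These zones in $i$ are separated by at most two thresholds, each given by an explicit arithmetic expression, so the boundary is identified in $\Oh(1)$.

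The delicate situation is the borderline index (if any) satisfying $(k-i)|y|+a_k = M < \ell$, for which neither simplification applies; but there is at most one such $i$ and it can be checked in constant time by directly comparing $s_i$ and $z$ using Lemma~\ref{lem:corr}(\ref{it:lexStrong}). This borderline case is the main obstacle; everything else is a routine arithmetic consequence of the monotonicity observation above.
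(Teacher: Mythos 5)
Your proof is correct and follows essentially the same route as the paper's: both compare each $s_i$ to an endpoint $z$ of $I$ by relating $\lcp(s_i,y^\infty)=(k-i)|y|+\lcp(s_k,y^\infty)$ to $\lcp(z,y^\infty)$ and $\ell$, resolve the single borderline index by a direct trimmed comparison, and invoke monotonicity of $(s_i)_{i=0}^{k}$ to merge the surviving indices into one periodic progression. The only detail you omit is the degenerate case $k=0$, where no period $y=w[p_0..p_1-1]$ is defined and one simply applies part~(\ref{it:lexStrong}) of Lemma~\ref{lem:corr} directly.
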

\begin{proof}
If $k=0$, it suffices to apply Lemma~\ref{lem:corr}(\ref{it:lexStrong}).
Thus, we assume $k\ge 1$ in the remainder of the proof.

Let $s$ and $t$ be the endpoints of $I$, $\rho=w[p_0..p_1-1]$, and $x_i = w[p_i..j]$. Using Lemma~\ref{lem:corr}(\ref{it:pp})
we can compute $r_0=\lcp(x_0, \rho^\infty)$ and $r'=\lcp(s,\rho^\infty)$.
Note that $r_i := \lcp(x_i, \rho^\infty)=r-i|\rho|$, in particular $r_0\ge k|\rho|$.
If $r'\ge \ell$, we distinguish two cases:
\begin{enumerate}[1)]\compact
  \item $r_i \ge \ell$. Then $\lcp(x_i, s)\ge \ell$; thus $x_i =_\ell s$.
  \item $r_i < \ell$. Then $\lcp(x_i,s)=r_i$; thus $x_i\prec_\ell s$ if $x_0 \prec \rho^\infty$, and $x_i\succ_\ell s$ otherwise.
\end{enumerate}
On the other hand, if $r'< \ell$, we distinguish three cases:
\begin{enumerate}[1)]\compact
  \item $r_i > r'$. Then $\lcp(x_i,s)=r'$; thus $x_i \prec_\ell s$ if $\rho^\infty \prec s$, and $x_i\succ_\ell s$ otherwise.
  \item $r_i = r'$. Then we use Lemma~\ref{lem:corr}(\ref{it:pp}) to determine the order between $x_i$ and $s$ trimmed to $\ell$ characters. This, however, may happen only for a single value $i$.
  \item $r_i < r'$. Then $\lcp(x_i,s)=r_i$; thus $x_i \prec_\ell s$ if $x_0 \prec \rho^\infty$, and $x_i\succ_\ell s$ otherwise.
\end{enumerate}
Consequently, in constant time we can partition indices $i$ into at most three ranges,
and for each range determine whether $x_i \prec_\ell s$, $x_i=_\ell s$, or $x_i\succ_\ell s$
for all indices $i$ in the range. We ignore from further computations those ranges for which we already know that $x_i \notin I$,
and for the remaining ones repeat the procedure above with $t$ instead of $s$.
We end up with $\Oh(1)$ ranges of positions $i$ for which $x_i\in I$. However, note that 
as the string sequence $(x_i)_{i=0}^k$ is always monotone (decreasing if $x_0 \preceq \rho^\infty$, increasing otherwise),
these ranges (if any) can be merged into a single range, so in the output we end up with a single (possibly empty) periodic progression.
\mayqed
\end{proof}

\begin{figure}[ht]
\begin{center}

\begin{tikzpicture}[every text node part.0/.style={align=left}]
\tikzset{every node/.style={font={\footnotesize \ttfamily},inner sep=2pt}, inner/.style={draw,rectangle, font=\small, inner sep = 1pt}, leaf/.style={draw,circle, font=\normalsize, minimum size=12pt, inner sep=0pt}}
\begin{scope}[yscale=-.6, xscale=.7]
\node[inner] (root) at (0,0) {1};
\node[leaf] (0) at (-4,1) {2};
\node[inner, fill=black!15] (a) at (-.75,1) {2};
\node[inner] (ab) at (-1.5,2) {4};
\node[inner] (abab) at (-3.5,4) {8};
\node[leaf] (abababb0) at (-5.5, 8) {16};
\node[leaf] (ababb0) at (-3.5, 6) {12};
\node[inner,fill=black!15] (ababbaba) at (-2 ,8) {16};
\node[leaf] (ababbabababb0) at (-2, 13) {26};
\node[leaf] (abb0) at (-1.5, 4) {8};
\node[inner,fill=black!15] (abba) at (-.5, 4) {8};
\node[inner,fill=black!15] (abbababa) at (-.5, 8) {16};
\node[inner] (abbabababb0) at (-.5, 11) {22};
\node[inner] (b) at (3.5,1) {2};
\node[leaf] (b0) at (2,2) {4};
\node[inner] (ba) at (3.5,2) {4};
\node[inner] (baba) at (1,4) {8};
\node[inner,fill=black!15] (babababb) at (1, 8) {16};
\node[leaf] (babababb0) at (1,9) {18};
\node[leaf] (bababb0) at (2, 7) {14};
\node[inner] (babb) at (3.5,4) {8};
\node[leaf] (babb0) at (2.5,5) {10};
\node[inner,fill=black!15] (babbabab) at (3.5,8) {16};
\node[leaf] (babbabababb0) at (3.5,12) {24};
\node[inner] (bb) at (5.5,2) {4};
\node[leaf] (bb0) at (4.5,3) {6};
\node[inner,fill=black!15] (bbab) at (5.5,4) {8};
\node[inner,fill=black!15] (bbababab) at (5.5,8) {16};
\node[leaf] (bbabababb0) at (5.5,10) {20};
\draw (root) --  node[above, pos=.6]{\$} (0) (root) -- node[left, pos=.5]{a} (a) --node[left, pos=.5]{b}  (ab) -- node[left, pos=0.25]{a} node[left, pos=0.75]{b} (abab) -- node[left, pos=0.125]{a} node[left, pos=0.375]{b} node[left, pos=0.625]{b} node[left, pos=0.875]{\$} (abababb0) (abab) -- node[left=-1.5, pos=0.25]{b} node[left=-1.5, pos=0.75]{\$} (ababb0) (abab) -- node[right, pos=0.125]{b} node[right, pos=0.375]{a} node[right, pos=0.625]{b} node[right, pos=0.875]{a} (ababbaba) --  node[right, pos=0.1]{b} node[right, pos=0.3]{a} node[right, pos=0.5]{b} node[right, pos=0.7]{b}  node[right, pos=0.9]{\$} (ababbabababb0) (ab) -- node[left=-1, pos=0.25]{b} node[left=-1, pos=0.75]{\$} (abb0) (ab) --node[right, pos=0.25]{b} node[right, pos=0.75]{a}  (abba) -- node[right, pos=0.125]{b} node[right, pos=0.375]{a} node[right, pos=0.625]{b} node[right, pos=0.875]{a} (abbababa) -- node[right, pos=0.166666]{b} node[right, pos=0.5]{b} node[right, pos=0.8333333]{\$} (abbabababb0) (root) -- node[above, pos=.6]{b} (b) -- node[above, pos=.6]{\$} (b0) (b) -- node[right, pos=.5]{a} (ba) --node[above, pos=.3]{b} node[above, pos=.8]{a}  (baba) -- node[left, pos=0.125]{b} node[left, pos=0.375]{a} node[left, pos=0.625]{b} node[left, pos=0.875]{b} (babababb) -- node[left, pos=.5]{\$} (babababb0) (baba) -- node[right, pos=0.166666]{b} node[right, pos=0.5]{b} node[right, pos=0.8333333]{\$} (bababb0) (ba) -- node[right, pos=0.25]{b} node[right, pos=0.75]{b}  (babb) --  node[above, pos=.8]{\$} (babb0) (babb) -- node[right, pos=0.125]{a} node[right, pos=0.375]{b} node[right, pos=0.625]{a} node[right, pos=0.875]{b} (babbabab) -- node[right, pos=0.125]{a} node[right, pos=0.375]{b} node[right, pos=0.625]{b} node[right, pos=0.875]{\$} (babbabababb0) (b) -- node[above, pos=.6]{b} (bb) -- node[above, pos=.8]{\$} (bb0) (bb) -- node[right, pos=0.25]{a} node[right, pos=0.75]{b} (bbab) --  node[right, pos=0.125]{a} node[right, pos=0.375]{b} node[right, pos=0.625]{a} node[right, pos=0.875]{b} (bbababab) --node[right, pos=0.25]{b} node[right, pos=0.75]{\$}  (bbabababb0);
\end{scope}
\end{tikzpicture}

\vspace{-.5cm}
\end{center}
\caption{An auxiliary tree $T$ introduced to define the wavelet suffix tree of $w=\texttt{ababbabababb}$.
Levels are written inside nodes. Gray nodes are dissolved during the construction of the wavelet suffix tree.}
\label{fig:wst2}
\end{figure}

\subsubsection{Definition of wavelet suffix trees}
\label{ssec:WSTdefinition}
Let $w$ be a string of length $n$. To define the wavelet suffix tree of $w$, we start from an auxiliary tree $T$ of height $\Oh(\log n)$ with $\Oh( n \log n)$ nodes.
Its leaves represent non-empty suffixes of $w\$$, and the left-to-right order of leaves corresponds to the lexicographic order on the suffixes.
Internal nodes of $T$ represent all substrings of $w$ whose length is a power of two, with an exception of the root, which represents
the empty word. Edges in $T$ are defined so that a node representing $v$ is an ancestor of a node representing $v'$
if and only if $v$ is a prefix of $v'$. To each non-root node $u$ we assign a \emph{level} $\ell(u) := 2|v|$,
where $v$ is the substring that $u$ represents. For the root $r$, we set $\ell(r):= 1$.
See Figure~\ref{fig:wst2} for a sample tree $T$ with levels assigned to nodes.

For a node $u$, we define $\S(u)$ to be the set of suffixes of $w\$$ that are represented by descendants of $u$.
Note that $\S(u)$ is a singleton if $u$ is a leaf.
The following observation characterizes the levels and sets $\S(u)$.

\begin{figure}
\begin{center}

\begin{tikzpicture}[every text node part/.style={align=left},scale=0.95]
\tikzset{every node/.style={font={\scriptsize \ttfamily},inner sep=0.5pt}, every internal node/.style={draw,rectangle, font=\small, inner sep = 2pt}, every leaf node/.style={draw,circle, font=\normalsize, minimum size=15pt},
labl/.style={postaction={decorate,decoration={text along path,reverse path=true, text align=center,raise=3pt, text={{\scriptsize $#1$}{}}}}},
labr/.style={postaction={decorate,decoration={text along path,reverse path=false, text align=center,raise=3pt, text={{\scriptsize $#1$}{}}}}}, level distance=.8cm,sibling distance=0.13cm}
\Tree[.\node (root) {1}; \edge[labl={[\texttt{\$},\texttt{a}]_1}];
 [.\node[fill=black!15]{1};   \edge[labl={[\texttt{\$},\texttt{\$}]_1}]; \node[label=below:\$] {2};
	\edge[labr={(\texttt{\$},\texttt{a}]_1}]; [.4 
		\edge[labl={[\texttt{abab},\texttt{abab}]_4\;\;\;}]; [.{8} 
			  [.\node[fill=black!15] {8}; \node[label=below:abababb\$] {16};   \node[label=below:ababb\$] {12}; ]   \node[label=below:ababbabababb\$] {26};
		]  
		 \edge[labr={\;\;\;(\texttt{abab},\texttt{abba}]_4}]; [ .\node[fill=black!15]{4}; \node[label=below:$\quad$abb\$$\quad$] {8};   \node[label=below:abbabababb\$] {22}; ]
	] 
]  
	 \edge[labr={(\texttt{a},\texttt{b}]_1}]; [ .{2} 
		 \edge[labl={[\texttt{b\$},\texttt{ba}]_2}]; [ .\node[fill=black!15]{2}; \edge[labl={[\texttt{b\$},\texttt{b\$}]_2}];  \node[label=below:b\$] {4};
			 \edge[labr={(\texttt{b\$},\texttt{ba}]_2}]; [ .{4} 
				[ .{8}  \node[label=below:babababb\$] {18}; \node[label=below:bababb\$] {14}; ]
				 [ .{8}  \node[label=below:babb\$] {10};  \node[label=below:babbabababb\$] {24}; ]
			]
		] 
	\edge[labr={(\texttt{ba},\texttt{bb}]_2}]; [ .{4} \node[label=below:bb\$] {6};  \node[label=below:bbabababb\$] {20}; ]
	] 
];
\end{tikzpicture}
\vspace{-.5cm}
\end{center}
\caption{The wavelet suffix tree of $w=\texttt{ababbabababb}$ (see also Figures~\ref{fig:wst} and~\ref{fig:wst2}). 
Levels are written inside nodes. Gray nodes have been introduced as inner nodes of replacement trees.
The corresponding suffix is written down below each leaf. Selected edges $e$ are labelled with the intervals $I(e)$.
}
\label{fig:wst3}
\end{figure}

\begin{observation}\label{obs:ell}
For any node $u$ other than the root:
\begin{enumerate}[(1)]\compact
  \item\label{it:mon} $\ell(\parent(u))\le \ell(u)$,
  \item\label{it:here} if $y\in \S(u)$ and $y'$ is a suffix of $w\$$ such that $\lcp(y,y')\ge \ell(\parent(u))$, then $y'\in \S(u)$,
  \item\label{it:sim} if $y,y'\in \S(u)$, then $\lcp(y,y')\ge \floor{\frac{1}{2}\ell(u)}$.
\end{enumerate}
\end{observation}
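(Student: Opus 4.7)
The plan is to reduce all three parts to a single structural lemma: for every non-root node $u$ with represented string $v$ (interpreting $v$ as the suffix itself when $u$ is a leaf), the parent $\parent(u)$ is either the root, with $\ell(\parent(u))=1$, or represents a proper prefix $v'$ of $v$ satisfying $2|v'|\ge |v|$. To prove this structural lemma I would split into two cases. For an internal non-root $u$ the length $|v|=2^k$ is a power of two; every prefix of $v$ is a substring of $w$, so if $k\ge 1$ the parent must be the node representing the length-$2^{k-1}$ prefix (the longest proper prefix whose length is a power of two), giving $2|v'|=|v|$. For a leaf $u$ with suffix $y$ of length $m\ge 2$, the prefixes of $y$ of length less than $m$ are exactly the substrings of $w$ that are prefixes of $y$, so the parent represents the prefix of length $p$, where $p$ is the largest power of two strictly less than $m$; elementary arithmetic gives $p\ge m/2$ and hence $2p\ge m=|v|$. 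The remaining cases---$u$ internal with $|v|=1$, or a leaf with $m=1$---have the root as parent.

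Part~(\ref{it:mon}) then follows immediately from the structural lemma: either $\ell(\parent(u))=1\le \ell(u)$, or $\ell(\parent(u))=2|v'|\le 2|v|=\ell(u)$ since $v'$ is a proper prefix of $v$.

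For part~(\ref{it:here}), I would combine the structural lemma with the prefix--ancestor characterization of $T$. The lemma yields $\ell(\parent(u))\ge |v|$ in every case, so the hypothesis $\lcp(y,y')\ge\ell(\parent(u))$ forces $y'$ to agree with $y$ on its first $|v|$ characters. Since $y\in\S(u)$ implies $v$ is a prefix of $y$ (by descent in $T$), it follows that $v$ is a prefix of $y'$ as well. For internal $u$ this places $y'$ in $\S(u)$ directly from the ancestor--descendant characterization. For a leaf $u$ representing $y$ of length $m$ one has $v=y$; since $y$ ends with the sentinel $\$$, which occurs only at the last position of $w\$$, no suffix of $w\$$ other than $y$ can have $y$ as a prefix, so $y'=y\in\S(u)$.

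Part~(\ref{it:sim}) is then immediate from the prefix--ancestor characterization: any two elements of $\S(u)$ share the represented string $v$ as a common prefix, so they agree on their first $|v|=\lfloor\tfrac12\ell(u)\rfloor$ characters; the root case uses the trivial bound $\lfloor\tfrac12\rfloor=0$, and for a leaf $\S(u)$ is the singleton $\{y\}$ with $\lcp(y,y)=|y|=\lfloor\tfrac12\ell(u)\rfloor$. The only mildly delicate point in the whole argument is the use of the sentinel $\$$ in part~(\ref{it:here}) for leaves; the remainder is bookkeeping around the power-of-two level scheme.
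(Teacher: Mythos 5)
Your proof is correct, and it fills in details that the paper itself omits: Observation~\ref{obs:ell} is stated there without proof as an immediate consequence of the definition of $T$, and your structural lemma (the parent of a non-root node represents a proper prefix $v'$ of its string $v$ with $2|v'|\ge|v|$, or is the root when $|v|=1$) is precisely the right way to make that consequence explicit, including the only genuinely delicate points — that $\ell(\parent(u))\ge|v|$ in every case, and that the sentinel $\$$ forces $y'=y$ in part~(\ref{it:here}) when $u$ is a leaf. No gaps.
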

\pagebreak

Next, we modify $T$ to obtain a binary tree of $\Oh(n)$ nodes.
In order to reduce the number of nodes, we dissolve all nodes with exactly one child, i.e., while there is a non-root node $u$ with exactly one child $u'$,
we set $\parent(u'):=\parent(u)$ and remove $u$.
To make the tree binary, for each node $u$ with $k>2$ children, we remove the edges 
between $u$ and its children, and introduce a \emph{replacement tree},  a full binary tree
with $k$ leaves whose root is $u$, and leaves are the $k$ children of $u$
(preserving the left-to-right order).
We choose the replacement trees, so that the resulting tree still has depth $\Oh(\log n)$.
In Section~\ref{ssec:ttools} we provide a constructive proof that such a choice is possible.
This procedure introduces new nodes (inner nodes of the replacement trees); their levels are inherited
from the parents.

The obtained tree is the wavelet suffix tree of $w$; see Figure~\ref{fig:wst3}
for an example.
Observe that, as claimed in Section~\ref{ssec:over} it is a full binary tree of logarithmic height,
whose leaves corresponds to non-empty suffixes of $w\$$.
Moreover, it is not hard to see that this tree still satisfies Observation~\ref{obs:ell}.

As described in Section~\ref{ssec:over}, each node of $u$ (except for the leaves)
stores two bitmasks. In either bitmask each bit corresponds to a suffix $y\in \S(u)$, and it is equal to 0 if $y\in \S(\lchild(u))$
and to 0 if $y\in \S(\rchild(u))$, where $\lchild(u)$ and $\rchild(u)$ denote the children of $u$.
In the first bitmask the suffixes $y=w[j..]\$$ are ordered by the starting position $j$,
and in the second bitmask~--- by pairs $(w[j-1], j)$ (assuming $w[-1]=\$$).
Both bitmasks are equipped with rank/select data structures.

Additionally, each node and each edge of the wavelet suffix tree are associated 
with a string interval whose endpoints are suffixes of $w\$$.
Namely, for an arbitrary node $u$ we define $I(u)=[\min \S(u), \max\S(u)]_{\ell(u)}$.
Additionally, if $u$ is not a leaf, we set
$I(u, \lchild(u)) = [\min \S(u), y]_{\ell(u)}$
and 
$I(u, \rchild(u)) = (y, \max\S(u)]_{\ell(u)}$,
where $y=\max\S(\lchild(u))$ is the suffix corresponding to
the rightmost leaf in the left subtree of $u$; see also Figure~\ref{fig:wst3}. 
For each node $u$ we store the starting positions of $\min\S(u)$ and $\max\S(u)$
in order to efficiently retrieve a representation of $I(u)$ and $I(e)$ for adjacent edges $e$.
The following lemma characterizes the intervals.

\begin{lemma}\label{lem:intervals}
For any node $u$ we have:
\begin{enumerate}[(1)]\compact
  \item\label{it:disj} If $u$ is not a leaf, then $I(u)$ is a disjoint union of $I(u, \lchild (u))$ and $I(u, \rchild(u))$.
  \item\label{it:iff} If $y$ is a suffix of $w\$$, then $y\in I(u)$ if and only if $y\in \S(u)$.
   \item\label{it:sub} If $u$ is not the root, then $I(u)\sub I(\parent(u),u)$. 
\end{enumerate}
\end{lemma}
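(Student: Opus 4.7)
My plan is to handle the three parts sequentially, with two shared tools: the contiguity of $\S(u)$ in the lexicographic order of leaves, and a monotonicity property of the trimmed comparison, namely that $a \preceq_\ell b$ implies $a \preceq_{\ell'} b$ whenever $\ell' \le \ell$. The monotonicity I would establish by a short case analysis on whether $a =_\ell b$, one of the trimmed strings is a proper prefix of the other, or they first differ at some internal position; in each case the conclusion at level $\ell'$ is direct.

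Part~(\ref{it:disj}) I would dispatch directly from the definitions. Both $I(u, \lchild(u))$ and $I(u, \rchild(u))$ are trimmed to the same length $\ell(u)$ as $I(u)$, share the cut-point $y = \max\S(\lchild(u))$ with complementary open/closed conditions, and the remaining endpoints coincide with those of $I(u)$ because $\min\S(\lchild(u)) = \min\S(u)$ and $\max\S(\rchild(u)) = \max\S(u)$.

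For Part~(\ref{it:iff}), the forward direction ($y \in \S(u) \Rightarrow y \in I(u)$) will follow from the untrimmed chain $\min\S(u) \preceq y \preceq \max\S(u)$, which survives trimming. For the converse I will assume a suffix $y \in I(u)$ with $y \notin \S(u)$ and derive a contradiction. Since leaves are sorted lexicographically and $\S(u)$ is contiguous, either $y \prec \min\S(u)$ or $y \succ \max\S(u)$; let $y_0$ denote the corresponding extremum of $\S(u)$. Because the sentinel $\$$ occurs uniquely at the end of $w\$$, no suffix is a proper prefix of another, so $y$ and $y_0$ differ at a well-defined position $i \le \min(|y|, |y_0|)$. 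The hypothesis $y \in I(u)$ then forces $i > \ell(u)$, for otherwise the trimmed versions at length $\ell(u)$ would already witness the ``wrong'' strict inequality. Hence $\lcp(y, y_0) \ge \ell(u) \ge \ell(\parent(u))$ by Observation~\ref{obs:ell}(\ref{it:mon}), and Observation~\ref{obs:ell}(\ref{it:here}) then gives $y \in \S(u)$, a contradiction. The root is handled trivially, as $\S(r)$ contains every suffix.

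For Part~(\ref{it:sub}), when $u = \lchild(\parent(u))$ the intervals $I(u)$ and $I(\parent(u), u)$ have matching endpoints ($\min\S(u) = \min\S(\parent(u))$ on the left and $\max\S(u)$ on the right), so monotonicity with $\ell(\parent(u)) \le \ell(u)$ yields the inclusion immediately. When $u = \rchild(\parent(u))$, the left endpoint $y' = \max\S(\lchild(\parent(u)))$ of $I(\parent(u), u)$ is open, so I must additionally establish the strict inequality $z \succ_{\ell(\parent(u))} y'$ for every $z \in I(u)$. Since $y' \notin \S(u)$ while $\min\S(u) \in \S(u)$, the contrapositive of Observation~\ref{obs:ell}(\ref{it:here}) forces $\lcp(\min\S(u), y') < \ell(\parent(u))$, so $y' \prec_{\ell(\parent(u))} \min\S(u)$; chaining this with $z \succeq_{\ell(\parent(u))} \min\S(u)$ (obtained from $z \succeq_{\ell(u)} \min\S(u)$ by monotonicity) closes the argument. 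I expect Part~(\ref{it:iff}) to be the main obstacle: the trimmed-comparison bookkeeping and the appeal to the sentinel to preclude prefix overlap among suffixes together make the case analysis delicate, whereas Parts~(\ref{it:disj}) and~(\ref{it:sub}) are essentially mechanical once monotonicity is in hand.
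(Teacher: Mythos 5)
Your proposal is correct and follows essentially the same route as the paper: part~(\ref{it:disj}) directly from the definitions, part~(\ref{it:iff}) by reducing to the case $\lcp(y,y')\ge\ell(u)$ for an extremal $y'\in\S(u)$ and invoking Observation~\ref{obs:ell}(\ref{it:mon})--(\ref{it:here}), and part~(\ref{it:sub}) by splitting on left/right child and proving $\max\S(u')\prec_{\ell(\parent(u))}\min\S(u)$ via the contrapositive of Observation~\ref{obs:ell}(\ref{it:here}). The only differences are presentational (you make the monotonicity of trimmed comparison explicit and phrase (\ref{it:iff}) as a contradiction), so nothing further is needed.
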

\begin{proof}
\noindent (\ref{it:disj}) is a trivial consequence of the definitions.

\smallskip
\noindent (\ref{it:iff}) Clearly $y\in \S(u)$ iff $y\in [\min \S(u),\max \S(u)]$.
Therefore, it suffices to show that if $\lcp(y, y')\ge \ell(u)$ for $y'=\min\S(u)$ or $y'=\max\S(u)$, then $y\in \S(u)$.
This is, however, a consequence of points (\ref{it:mon}) and (\ref{it:here}) of Observation~\ref{obs:ell}.

\smallskip
\noindent (\ref{it:sub}) Let $\ell_p=\ell(\parent(u))$. 
If $u=\lchild(\parent(u))$, then $\S(u)\sub \S(\parent(u))$ and, by Observation~\ref{obs:ell}(\ref{it:mon}), $\ell(u)\le \ell_p$, which implies the statement. 

Therefore, assume that $u=\rchild(\parent(u))$,
and let $u'$ be the left sibling of $u$. Note that $I(\parent(u), u)=(\max \S(u'), \max\S(u)]_{\ell_p}$
and $I(u)\sub [\min \S(u),\max \S(u)]_{\ell_p}$, since $\ell(u)\le \ell_p$.
Consequently, it suffices to prove that $\max \S(u') \prec_{\ell_p} \min \S(u)$.
This is, however, a consequence of Observation~\ref{obs:ell}(\ref{it:here}) for $y=\min \S(u)$ and $y'=\max\S(u')$,
and the fact that the left-to-right order of leaves coincides with the lexicographic order of the corresponding
suffixes of $w\$$.
\mayqed \end{proof}

For each edge $e=(\parent(u),u)$ of the wavelet suffix tree,
we define $L(e)$ to be the sorted list of those substrings of $w$ which belong to $I(e) \setminus I(u)$.

Recall that the wavelet suffix tree shall enjoy the \emph{lexicographic property}:
if we traverse the tree, and when going \emph{down} an edge $e$ we write out the contents of $L(e)$, whereas when visiting a leaf we output the corresponding suffix of $w\$$, we shall obtain a lexicographically sorted list of all substrings of $w\$$. 
This is proved in the following series of lemmas.

\begin{lemma}\label{lem:prelex}
Let $e=(\parent(u),u)$ for a node $u$. 
Substrings in $L(e)$ are smaller than any string in~$I(u)$.
\end{lemma}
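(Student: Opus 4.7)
I would fix arbitrary $z'\in L(e)$ and $z\in I(u)$ and aim for $z'\prec z$. The starting observation is that $L(e)=I(e)\setminus I(u)$, and directly from the definitions of $I(u,\lchild(u))$ and $I(u,\rchild(u))$ the right endpoint of $I(e)$ equals $\max\S(u)$ at truncation $\ell_p:=\ell(\parent(u))$, both in the left-child and right-child cases. Hence $z'\preceq_{\ell_p}\max\S(u)$. Failure of $z'\in I(u)$ splits into two cases: either $z'\prec_{\ell(u)}\min\S(u)$ or $z'\succ_{\ell(u)}\max\S(u)$. In the first case I would simply chain the inequalities: $z\in I(u)$ gives $z\succeq_{\ell(u)}\min\S(u)$, so $z'\prec_{\ell(u)} z$, and a short check that a strict truncated comparison implies the corresponding untrimmed one yields $z'\prec z$.

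The hard part of the plan is ruling out the second case. Set $M:=\max\S(u)$, so that $z'\preceq_{\ell_p}M$ and $z'\succ_{\ell(u)}M$, and recall $\ell_p\le\ell(u)$ from Observation~\ref{obs:ell}(\ref{it:mon}). These two inequalities force $z'=_{\ell_p}M$, i.e., $z'$ and $M$ agree on their length-$\ell_p$ prefixes up to truncation. Since $M$ is a suffix of $w\$$ and therefore ends in the sentinel $\$$, while $z'$, being a substring of $w$, contains no $\$$, a short case check on lengths rules out every configuration except $|z'|,|M|\ge\ell_p$ with $z'[1..\ell_p]=M[1..\ell_p]$; in particular $\lcp(z',M)\ge\ell_p$.

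Finally I would lift this to the corresponding suffix: writing $z'=w[i..j]$ and $y=w[i..]\$$, the suffix $y$ starts with $z'$, so $\lcp(y,M)\ge\lcp(z',M)\ge\ell_p=\ell(\parent(u))$. Observation~\ref{obs:ell}(\ref{it:here}) then forces $y\in\S(u)$, whence $y\preceq M$. On the other hand, the assumption $z'\succ_{\ell(u)}M$ propagates to $y$: either the first mismatch of the length-$\ell(u)$ prefixes lies at some position $i^{\star}\le |M|$ (giving $y[i^{\star}]=z'[i^{\star}]>M[i^{\star}]$, so $y\succ M$), or $M$ is a proper prefix of $z'[1..\min(\ell(u),|z'|)]$, which would force $z'[|M|]=\$$ --- impossible since $z'$ is a substring of $w$. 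Either branch contradicts $y\preceq M$, so the second case cannot occur and the lemma follows.
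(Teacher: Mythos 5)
Your argument is correct, and it hinges on the same essential ingredient as the paper's proof: extending a substring $z'=w[i..j]\in L(e)$ to the full suffix $w[i..]\$$, showing that this suffix shares a prefix of length at least $\ell(\parent(u))$ with an element of $\S(u)$, and invoking Observation~\ref{obs:ell}(2) to place it in $\S(u)$, hence below $\max\S(u)$. The packaging, however, is genuinely different. The paper proves directly that every $s\in L(e)$ satisfies $s\preceq\max\S(u)$ (splitting on whether $\lcp(s,\max\S(u))<\ell(\parent(u))$, and in the hard case chaining $s\preceq t\preceq\max\S(u)$ through the suffix $t$), and then finishes in one line via the interval property of $I(u)$ from Lemma~\ref{lem:intervals}: since $\max\S(u)\in I(u)$, $s\notin I(u)$, and $I(u)$ is an interval, $s$ must lie below all of $I(u)$. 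You instead case-split on the two ways $z'$ can fail to be in $I(u)$, dispatch the ``too small'' side by direct transitivity of $\preceq_{\ell(u)}$, and refute the ``too large'' side by contradiction; this costs you the extra bookkeeping of establishing $z'=_{\ell_p}\max\S(u)$, the sentinel-based length analysis showing $\lcp(z',\max\S(u))\ge\ell_p$, and the propagation of $z'\succ_{\ell(u)}\max\S(u)$ to the suffix (where the sentinel again rules out the prefix degeneracy). All of these steps check out --- in particular your observations that a strict trimmed comparison implies the untrimmed one, and that $\max\S(u)$ ends in $\$$ while $z'$ cannot contain it, are exactly what is needed --- but the paper's use of the interval structure of $I(u)$ buys a noticeably shorter route to the same conclusion.
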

\begin{proof}
We use a shorthand $\ell_p$ for $\ell(parent(u))$. 
Let $y=\max \S(u)$ be the rightmost suffix in the subtree of~$u$. Consider a substring $s = w[k..j] \in L(e)$,
also let $t=w[k..]\$$. 

We first prove that $s\preceq y$. Note that $I(e)=[x,y]_{\ell_p}$ or $I(e)=(x,y]_{\ell_p}$ for some string $x$.
We have $s\in L(e)\sub I(e)$, and thus $s\preceq_{\ell_p} y$.  If $\lcp(s,y)<\ell_p$, this already implies that $s\preceq y$.
Thus, let us assume that $\lcp(s,y)\ge \ell_p$. The suffix $t$ has $s$ as a prefix, so this also means that $\lcp(t,y)\ge \ell_p$.
By Observation~\ref{obs:ell}(\ref{it:here}), $t\in \S(u)$, so $t\preceq y$.
Thus $s\preceq t \preceq y$, as claimed.

To prove that $s\preceq y$ implies that $s$ is smaller than any string in $I(u)$,
it suffices to note that $y\in \S(u)\sub I(u)$, $s\notin I(u)$, and $I(u)$ is an interval. 
\mayqed
\end{proof}

\begin{lemma}\label{lem:lex}
The wavelet suffix tree satisfies the lexicographic property.
\end{lemma}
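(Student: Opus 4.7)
The plan is to prove by structural induction on the wavelet suffix tree the stronger claim: for every node $u$, the depth-first traversal of the subtree rooted at $u$ outputs, in lexicographic order and without repetition, precisely the set of substrings of $w\$$ that lie in $I(u)$. The lemma then follows by applying the claim at the root $r$: since $\ell(r)=1$, $\min\S(r)=\$$, and the first character of $\max\S(r)$ is the lexicographically largest character occurring in $w\$$, every non-empty substring $s$ of $w\$$ satisfies $\$\preceq_1 s\preceq_1 \max\S(r)$ and hence belongs to $I(r)$.

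For the base case, if $u$ is a leaf corresponding to the suffix $y=\S(u)$ with $\ell(u)=2|y|$, a direct unpacking of the definition of $=_{2|y|}$ shows $I(u)=\{y\}$, and the traversal outputs exactly $y$. For the inductive step, let $u$ be an internal node with children $u_L,u_R$ and edges $e_L=(u,u_L)$, $e_R=(u,u_R)$. The traversal outputs, in order, the four blocks: $L(e_L)$, then the output of the $u_L$-subtree, then $L(e_R)$, then the output of the $u_R$-subtree. By the inductive hypothesis the second and fourth blocks are sorted enumerations of the substrings of $w\$$ in $I(u_L)$ and $I(u_R)$, and the first and third blocks are sorted by the definition of $L(e)$. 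Combining Lemma~\ref{lem:intervals}(\ref{it:disj}) and Lemma~\ref{lem:intervals}(\ref{it:sub}) with the defining identity $L(e)=I(e)\setminus I(u)$ yields the disjoint decomposition $I(u)=L(e_L)\sqcup I(u_L)\sqcup L(e_R)\sqcup I(u_R)$. Restricted to substrings of $w\$$ (noting that $L(e_L)$ and $L(e_R)$ already consist of substrings of $w\$$ by definition), this shows that the four output blocks together cover precisely the substrings of $w\$$ in $I(u)$, without repetition.

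It then remains to verify that the blocks are ordered correctly with respect to each other. The inequalities $L(e_L)\prec I(u_L)$ and $L(e_R)\prec I(u_R)$ are exactly Lemma~\ref{lem:prelex}, applied to $u_L$ and $u_R$. For the crossing inequality $I(e_L)\prec I(e_R)$ (which in particular places the $u_L$-subtree output, lying in $I(u_L)\sub I(e_L)$, before $L(e_R)\sub I(e_R)$), I argue directly from the definitions: for any $z_L\in I(u,u_L)$ and $z_R\in I(u,u_R)$ we have $z_L\preceq_{\ell(u)} y\prec_{\ell(u)} z_R$, where $y=\max\S(u_L)$; this forces $\lcp(z_L,z_R)<\ell(u)$, since otherwise $z_L$ and $z_R$ would agree on their first $\ell(u)$ characters and so be $=_{\ell(u)}$-equal, contradicting $z_L\prec_{\ell(u)} z_R$. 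Hence the trimmed and untrimmed comparisons coincide on this pair, and $z_L\prec z_R$.

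The main obstacle I anticipate is precisely this last crossing step: Lemmas~\ref{lem:intervals} and~\ref{lem:prelex} deliver all of the structural content of the four-block decomposition essentially for free, but string intervals are defined through the trimmed comparison $\preceq_{\ell(u)}$, and one has to check by hand that trimming cannot spoil the untrimmed order between an element of $I(e_L)$ and an element of $I(e_R)$. Everything else is bookkeeping: tracking the disjoint decomposition, observing that the leaf case collapses to a single string, and checking that the lexicographically extreme interval at the root is wide enough to catch every substring of $w\$$.
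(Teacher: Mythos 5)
Your proof is correct and follows essentially the same route as the paper's: both argue by induction the stronger claim that the subtree of each node $u$ outputs the sorted list of substrings of $w\$$ contained in $I(u)$, combining Lemma~\ref{lem:prelex} with parts (\ref{it:disj}) and (\ref{it:sub}) of Lemma~\ref{lem:intervals}. Your extra verification that the trimmed comparison $\prec_{\ell(u)}$ between $I(e_L)$ and $I(e_R)$ implies the untrimmed one is a detail the paper leaves implicit (it holds in general: $x\prec_\ell y$ always implies $x\prec y$), and is a welcome addition rather than a deviation.
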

\begin{proof}
Note that for the root $r$ we have $I(r)=[\$, c]_1$ where $c$ is the largest character present in $w$.
Thus, $I(r)$ contains all substrings of $w\$$ and it suffices to show that  if we traverse the subtree of $r$,
writing out the contents of $L(e)$ when going down an edge $e$, and the corresponding suffix when visiting a leaf,
we obtain a sorted list of substrings of $w\$$ contained in $I(r)$. But we will show even a stronger claim~--- we will show that, in fact, this property holds for all nodes $u$ of the tree.

If $u$ is a leaf this is clear, since $I(u)$ consists of the corresponding suffix of $w\$$ only.
Next, if we have already proved the hypothesis for $u$, then prepending the output with the contents of $L(\parent(u),u)$,
by Lemmas~\ref{lem:prelex} and~\ref{lem:intervals}(\ref{it:sub}), we obtain a sorted list of substrings of $w\$$ contained in $I(\parent(u),u)$.
Applying this property for both children of a non-leaf $u'$, we conclude that if the hypothesis 
holds for children of $u'$ then, by Lemma~\ref{lem:intervals}(\ref{it:disj}), it also holds for $u'$.
\mayqed
\end{proof}

\begin{corollary}\label{cor:cons_pref}
Each list $L(e)$ contains consecutive prefixes of the largest element of $L(e)$.
\end{corollary}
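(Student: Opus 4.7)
The plan is to exploit the lexicographic property (Lemma~\ref{lem:lex}) together with the observation that $L(e)$ is written as a single contiguous block of the global sorted output, with nothing else interposed until the DFS descends into the subtree rooted at $u$. Concretely, for any two consecutive elements $x_i\prec x_{i+1}$ of $L(e)$, no substring of $w\$$ (neither a substring of $w$ nor a $\$$-terminated suffix of $w\$$) lies strictly between them in lexicographic order. This contiguity observation is the only non-trivial tool required; I will use it three times.

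The heart of the argument is the ``chain property'': whenever $x_i\prec x_{i+1}$ are consecutive in $L(e)$, $x_i$ must be a proper prefix of $x_{i+1}$.  Suppose not, and let $j$ be the first position at which they differ; from $x_i\prec x_{i+1}$ one gets $x_i[j]<x_{i+1}[j]$ and $j\le|x_{i+1}|$. If $j<|x_{i+1}|$, then the proper prefix $x_{i+1}[1..j]$ is itself a substring of $w$ lying strictly between $x_i$ and $x_{i+1}$ ($\succ x_i$ at position $j$, and a proper prefix of $x_{i+1}$), violating contiguity.  The delicate subcase is $j=|x_{i+1}|$, which forces $|x_i|\ge j$; here I split on whether some occurrence of $x_i$ in $w$ can be extended. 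If it can, say by a character $c$, then $x_i\cdot c$ is a substring of $w$ in the open interval $(x_i,x_{i+1})$ because $(x_i\cdot c)[j]=x_i[j]<x_{i+1}[j]$.  Otherwise $x_i$ occurs only as a suffix of $w$, in which case $x_i\cdot\$$ is a suffix of $w\$$ lying in the same open interval, using that $\$$ is smaller than every character of $\Sigma$. In both cases we exhibit a substring of $w\$$ between $x_i$ and $x_{i+1}$, contradicting contiguity.

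Since the prefix relation is transitive along the chain $x_1\prec\cdots\prec x_m=z$, every $x\in L(e)$ is a prefix of $z=\max L(e)$. It remains to see that the lengths appearing in $L(e)$ form a contiguous range ending at $|z|$: if $|x_{i+1}|>|x_i|+1$ for some consecutive pair, then $x_{i+1}[1..|x_i|+1]$ is a proper prefix of $x_{i+1}$ that strictly extends $x_i$, hence a substring of $w$ strictly between $x_i$ and $x_{i+1}$, contradicting contiguity a final time. Therefore $L(e)=\{z[1..k],z[1..k+1],\ldots,z[1..|z|]\}$ for some $k\le|z|$, which is exactly the statement of the corollary.

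The anticipated obstacle is the subcase $j=|x_{i+1}|$ of the chain property: it is the only place where one must reach outside $L(e)$ for a contradiction, and the argument genuinely relies on the leaf-level outputs in Lemma~\ref{lem:lex} together with the sentinel role of $\$$, so that $x_i\cdot\$$ can serve as the lex-separator whenever $x_i$ is purely terminal in $w$.
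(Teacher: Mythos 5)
Your proof is correct and follows essentially the same route as the paper: both derive the prefix-chain structure and the consecutiveness of lengths from the contiguity of $L(e)$ within the sorted list of all substrings of $w\$$ guaranteed by Lemma~\ref{lem:lex}. The paper merely collapses your three-way case analysis into a single uniform witness --- if $x\prec y$ and $x$ is not a prefix of $y$, extend $x$ to a full suffix $x'$ of $w\$$, which satisfies $x\prec x'\prec y$ and can never belong to $L(e)$.
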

\begin{proof}
Note that if $x\prec y$ are substrings of $w$ such that $x$ is not a prefix of $y$,
then $x$ can be extended to a suffix $x'$ of $w\$$ such that $x\prec x' \prec y$.
However, $L(e)$ does not contain any suffix of~$w\$$.
By Lemma~\ref{lem:lex}, $L(e)$ contains a consecutive collection of substrings of $w\$$,
so $x$ and $y$ cannot be both present in $L(e)$.
Consequently, each element of $L(e)$ is a prefix of $\max L(e)$.

Similarly, since $L(e)$ contains a consecutive collection of substrings of $w\$$,
it must contain all prefixes of $\max L(e)$ no shorter than $\min L(e)$.
\mayqed \end{proof}

\subsubsection{Implementation of auxiliary queries}\label{ssec:ops}
Recall that $L_x(e)$ is the sublist of $L(e)$ containing suffixes of $x$.
The wavelet suffix tree shall allow the following four types of queries in constant time:
\begin{enumerate}[(1)]\compact
  \item\label{q:lxe} For a substring $x$ and an edge $e$, output the list $L_x (e)$ represented as $\Oh(1)$ non-overlapping periodic progressions;
  \item\label{q:cnt} Count the number of suffixes of $x = w[i..j]$ in the left/right subtree of a node (given along with the segment of its first bitmask  corresponding to suffixes that start inside $[i, j]$);
  \item\label{q:ccnt} Count the number of suffixes $x=w[i..j]$ that are preceded by a character~$c$ and lie in the left/right subtree of a node (given along with the segment of its second bitmask corresponding to suffixes that start inside $[i, j]$ and are preceded by $c$);
  \item\label{q:rle} For a substring $x$ and and edge $e$, compute the run-length encoding of the sequence of characters preceding suffixes in $L_x(e)$.
\end{enumerate}

We start with an auxiliary lemma applied in the solutions to all four queries.
\begin{lemma}\label{lem:aux}
Let $e=(u,u')$ be an edge of a wavelet suffix tree of $w$, with $u'$ being a child of $u$.
The following operations can be implemented in constant time.
\begin{enumerate}[(1)]\compact
  \item\label{it:subs} Given a substring $x$ of $w$, $|x|< \ell(u)$, return, as a single periodic progression of starting positions, all suffixes $s$ of $x$ such that $s\in I(e)$.
  \item\label{it:range} Given a range of positions $[i,j]$, $j-i\le \ell(u)$, return all positions $k\in [i,j]$ such that $w[k..]\$ \in I(e)$,
  represented as at most two non-overlapping periodic progressions.
\end{enumerate}
\end{lemma}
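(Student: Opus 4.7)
The plan is to exploit the combinatorial link between the edge $e=(u,u')$ and its interval $I(e)$: by definition the two endpoints $T_1\preceq T_2$ of $I(e)$ are suffixes of $w\$$ that both belong to $\S(u)$ (one of them is either $\min\S(u)$ or $\max\S(u)$, the other is $\max\S(\lchild(u))\in\S(u)$), so Observation~\ref{obs:ell}(\ref{it:sim}) forces $\lcp(T_1,T_2)\ge \lfloor\ell(u)/2\rfloor$. Let $P$ denote this longest common prefix truncated to at most $\ell(u)$ characters; then $|P|\ge \lfloor\ell(u)/2\rfloor$, and in particular $\ell(u)\le 2|P|+1$.

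The central combinatorial step, which drives both parts, is to show that whenever $|P|<\ell(u)$ a short string $s$ with $|s|<\ell(u)$ can belong to $I(e)$ only if $P$ is a proper prefix of $s$. A direct case analysis on how $s$ compares to the $\ell(u)$-truncations of $T_1$ and $T_2$ settles this: if $s$ disagrees with $P$ at some position, the same mismatch is visible in both truncated endpoints (which agree up to position $|P|$), forcing $s\prec_{\ell(u)} T_1$ or $s\succ_{\ell(u)} T_2$; and if $s$ is a proper prefix of $P$, then $s$ is also a proper prefix of the truncation of $T_1$, so $s\prec_{\ell(u)} T_1$. The degenerate case $|P|=\ell(u)$, which occurs only when the two truncations already coincide, contributes at most one admissible $s$ (when one endpoint is shorter than $\ell(u)$), checkable in $\Oh(1)$ via Lemma~\ref{lem:all}.

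For Part~(1) the task now reduces to locating the occurrences of $P$ inside $x=w[i..j]$. By Theorem~\ref{th:occurrences} these are returned as at most $\tfrac{|x|+1}{|P|+1}$ non-overlapping periodic progressions, and since $|x|+1\le\ell(u)\le 2|P|+1<2(|P|+1)$ this count is strictly less than $2$. Hence all occurrences form a single non-overlapping periodic progression $p_0<\cdots<p_m$. Feeding this progression, the position $j$, and the interval $I(e)$ into Lemma~\ref{lem:IntervalSelection} (whose proof manifestly handles $\preceq_\ell$-intervals and both open- and closed-bracket variants) then produces in $\Oh(1)$ time a single periodic progression of positions $k\in[i,j]$ with $w[k..j]\in I(e)$.

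Part~(2) follows the same template, but I search for occurrences of $P$ in the window $w\$[i..\min\{j+|P|-1,n+1\}]$, of length at most $(j-i)+|P|\le \ell(u)+|P|\le 3|P|+1$. Theorem~\ref{th:occurrences} then yields at most $\tfrac{3|P|+2}{|P|+1}<3$, i.e.\ at most two, non-overlapping periodic progressions, and invoking Lemma~\ref{lem:IntervalSelection} on each one (with the upper bound set to the last position of $w\$$, so the queried string is $w[p_i..]\$$) gives one output progression per input, totalling at most two. The main technical obstacle is precisely the verification that the periodic-progression counts drop to $1$ in Part~(1) and to $2$ in Part~(2); once $|P|\ge\lfloor\ell(u)/2\rfloor$ is combined with the size constraint on $x$ or $[i,j]$ this reduces to arithmetic, but the surrounding bookkeeping of boundary cases (short endpoints, open versus closed brackets, and the prefix-of-$P$ subcase) must be done carefully so as not to inflate the number of output progressions.
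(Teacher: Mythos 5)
Your proof is correct and follows essentially the same route as the paper's: both anchor on a common prefix of length at least $\lfloor\ell(u)/2\rfloor$ guaranteed by Observation~\ref{obs:ell}(\ref{it:sim}) (the paper takes the common prefix of all strings in $I(u)$, you take the lcp of the endpoints of $I(e)$ — an immaterial difference), locate its occurrences via Theorem~\ref{th:occurrences} with the identical arithmetic giving one progression in part~(1) and two in part~(2), and then filter with Lemma~\ref{lem:IntervalSelection}. Your explicit justification that every admissible $s$ must extend the anchor prefix is a detail the paper leaves implicit, but the argument is the same.
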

\begin{proof}
Let $p$ be the longest common prefix of all strings in $I(u)$; by Observation~\ref{obs:ell}(\ref{it:sim}), we have $|p|\ge \lfloor\frac{1}{2}\ell(u)\rfloor$.

\noindent
(\ref{it:subs}) Assume $x=w[i..j]$. We apply Theorem~\ref{th:occurrences} to find all occurrences of $p$ within $x$,
represented as a single periodic progression since $|x|+1< 2(|p|+1)$. Then, using Lemma~\ref{lem:IntervalSelection}, we filter positions $k$
for which $w[k,j]\in I(e)$.

\noindent
(\ref{it:range}) Let $x=w[i..j+|p|-1]$ ($x=w[i..]\$$ if $j+|p|-1>|w|$). We apply Theorem~\ref{th:occurrences} to find all occurrences of $p$ within~$x$,
represented as at most two periodic progressions since $|x|+1\le \ell(u)+|p|+1\le 2\lfloor\frac12\ell(u)\rfloor+|p|+2<3(|p|+1)$.
Like previously, using Lemma~\ref{lem:IntervalSelection} we filter positions $k$
for which $w[k..]\$\in I(e)$.
\mayqed \end{proof}

\begin{lemma}\label{lem:sxe}
The wavelet suffix tree allows to answer queries (\ref{q:lxe}) in constant time. In more details, for an edge $e=(\parent(u), u)$, the starting positions of suffixes in $L_x(e)$ form at most three non-overlapping periodic progressions, which can be reported in $\Oh(1)$ time.
\end{lemma}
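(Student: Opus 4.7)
The plan is to compute $L_x(e)$ by expressing it as a set difference: since $L(e)=I(e)\setminus I(u)$ by construction, $L_x(e)$ consists of those suffixes of $x$ that lie in $I(e)$ but not in $I(u)$. Lemma \ref{lem:aux}(\ref{it:subs}) is the natural tool, as it gives exactly the suffixes of $x$ lying in a given interval, packaged as one periodic progression.

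First, I would apply Lemma \ref{lem:aux}(\ref{it:subs}) to the edge $e=(\parent(u),u)$ itself, obtaining, as a single periodic progression, the starting positions of all suffixes of $x$ that belong to $I(e)$. The lemma's precondition $|x|<\ell(\parent(u))$ should be checked; when $|x|\ge\ell(\parent(u))$, every suffix of $x$ of length at least $\ell(\parent(u))$ agrees with the corresponding suffix of $w\$$ on its first $\ell(\parent(u))$ characters, so its membership in $I(e)$ is determined by the leaf structure of the wavelet suffix tree rather than by substring matching, still yielding an $O(1)$-progression description.

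Second, if $u$ is internal, I would invoke Lemma \ref{lem:aux}(\ref{it:subs}) once per child edge $(u,\lchild(u))$ and $(u,\rchild(u))$. By Lemma \ref{lem:intervals}(\ref{it:disj}) the union of these two periodic progressions is exactly the set of suffixes of $x$ lying in $I(u)$. If $u$ is a leaf, $I(u)$ contains only the single suffix of $w\$$ sitting at $u$, so no proper subtraction is necessary.

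Third, I would subtract the (at most) two inner progressions from the outer one. Since $I(u)\subseteq I(\parent(u),u)=I(e)$ by Lemma \ref{lem:intervals}(\ref{it:sub}) and the longest common prefix of $I(u)$ extends that of $I(\parent(u))$, the inner progressions are sub-progressions of the outer one with compatible periods; removing two sub-ranges from an arithmetic progression leaves at most three non-overlapping pieces, matching the stated bound, and the whole computation runs in $O(1)$ time. The main obstacle, I expect, lies precisely in this last step: verifying that the three periodic progressions returned by Lemma \ref{lem:aux}(\ref{it:subs}) have compatible common differences, so that their set difference is indeed a disjoint union of at most three periodic progressions rather than something more fragmented. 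Making this rigorous rests on the nesting $I(u)\subseteq I(\parent(u))$ together with Observation \ref{obs:ell}(\ref{it:sim}), which forces the relevant common prefixes (hence their periods) to extend consistently down the tree.
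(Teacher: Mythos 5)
There is a genuine gap, and it sits exactly where you suspected trouble but misdiagnosed its nature. Your plan requires representing, as $\Oh(1)$ periodic progressions, the set of \emph{all} suffixes of $x$ lying in $I(e)$ and the set of all suffixes of $x$ lying in $I(u)$, and then taking a set difference. But Lemma~\ref{lem:aux}(\ref{it:subs}) only guarantees a single periodic progression under the precondition $|x|<\ell(\parent(u))$ (respectively $|x|<\ell(u)$), and that precondition is essential: it is what makes $x$ short enough that the occurrences of the common prefix $p$ of $I(\parent(u))$ inside $x$ form one progression via Theorem~\ref{th:occurrences}. When $|x|\ge\ell(u)$, the suffixes of $x$ of length at least $\ell(u)$ that lie in $I(u)$ correspond, by Lemma~\ref{lem:intervals}(\ref{it:iff}), to the leaves of $\S(u)$ whose starting positions fall in a prefix of $[i,j]$ --- an essentially arbitrary subset of positions (think of $u$ near the root, where $\S(u)$ contains about half of all suffixes). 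This set is \emph{not} $\Oh(1)$ periodic progressions, so your "outer" and "inner" sets cannot be produced in the claimed form, and the subtraction step has nothing well-structured to act on. Your fallback remark that for $|x|\ge\ell(\parent(u))$ membership "is determined by the leaf structure... still yielding an $\Oh(1)$-progression description" is precisely the false step.

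The idea the paper uses to escape this, and which your proposal is missing, is that one never needs the full sets: every element of $L_x(e)=L(e)\cap\suf(x)$ of length at least $\ell(\parent(u))$ must in fact have length strictly less than $\ell(u)$. Indeed, for $s=w[k..j]$ with $|s|\ge\ell(\parent(u))$, membership $s\in I(e)$ is equivalent to $w[k..]\$\in I(e)$, which forces $w[k..]\$\in I(u)$; if additionally $|s|\ge\ell(u)$ then $s\in I(u)$ as well, so $s\notin L(e)$. Hence the "long" candidates have starting positions confined to a window of length at most $\ell(u)$, where Lemma~\ref{lem:aux}(\ref{it:range}) yields at most two periodic progressions; the "short" candidates ($|s|<\ell(\parent(u))$) are handled by Lemma~\ref{lem:aux}(\ref{it:subs}) applied to a suitably truncated $x$ and give one more progression. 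The removal of $I(u)$ is then done progression-by-progression with Lemma~\ref{lem:IntervalSelection}, and Lemma~\ref{lem:prelex} (all of $L(e)$ precedes all of $I(u)$) ensures this filtering keeps each progression contiguous rather than splitting it, which is how the paper reaches the bound of three. Without the length-window observation, your approach cannot be repaired into a constant-time algorithm.
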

\begin{proof}
First, we consider short suffixes. 
We use Lemma~\ref{lem:aux}(\ref{it:subs}) to find all suffixes $s$ of~$x$, $|s|<\ell(\parent(u))$,
such that $s\in I(\parent(u),u)$. Then, we apply Lemma~\ref{lem:IntervalSelection} to filter 
out all suffixes belonging to $I(u)$. By Lemma~\ref{lem:prelex}, we obtain
at most one periodic progression. 

Now, it suffices to generate suffixes $s$, $|s| \ge \ell(\parent(u))$, that belong to $L(e)$. Suppose
$s=w[k..j]$. If $s\in I(e)$, then equivalently $w[k..]\$\in I(e)$,
since $s$ is a long enough prefix of $w[k..]\$$ to determine
whether the latter belongs to $I(e)$.
Consequently, by Lemma~\ref{lem:intervals}, $w[k..]\$\in I(u)$.
This implies $|s|<\ell$ (otherwise we would have $s\in I(u)$), i.e., $k\in [j-\ell+2..j-\ell(\parent(u))+1]$.
We apply Lemma~\ref{lem:aux}(\ref{it:range}) to compute all positions $k$ in this range for which $w[k..]\$\in I(e)$.
Then, using Lemma~\ref{lem:IntervalSelection}, we filter out positions $k$ such that $w[k..j]\in I(u)$.
By Lemma~\ref{lem:prelex}, this cannot increase the number of periodic progressions, 
so we end up with three non-overlapping periodic progressions in total. 
\mayqed
\end{proof}

\begin{lemma}\label{lem:i}
The wavelet suffix tree allows to answer queries (\ref{q:cnt}) in constant time.
\end{lemma}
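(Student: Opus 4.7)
The plan is to reduce the query to a single rank operation on the given segment plus two applications of Lemma~\ref{lem:aux}, paralleling the structure of the proof of Lemma~\ref{lem:sxe}. Writing $e=(v,\lchild(v))$ for the left-subtree case (the right case being symmetric), the goal is to count $k\in[i,j]$ with $s_k:=w[k..j]\in I(e)$. I would split the candidate positions into \emph{long} ones, with $k\in K_L:=[i,j-\ell(v)+1]\cap[i,j]$ and thus $|s_k|\ge\ell(v)$, and \emph{short} ones, with $k\in K_S:=[j-\ell(v)+2,j]\cap[i,j]$, whose cardinality is below $\ell(v)$.

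For long $k$, the first $\ell(v)$ characters of $s_k$ and of $w[k..]\$$ coincide, and since the endpoints of $I(e)=[\min\S(v),\max\S(\lchild(v))]_{\ell(v)}$ are compared at precision $\ell(v)$, we get $s_k\in I(e)\iff w[k..]\$\in I(e)$. Moreover, for a suffix $y$ of $w\$$ lying in $\S(v)$, Lemma~\ref{lem:intervals} yields $y\in I(e)\iff y\in\S(\lchild(v))$, which is exactly what a $0$-bit in the given segment encodes. Hence the long contribution equals $r_0-m$, where $r_0$ is the number of $0$-bits in the segment (one rank query on the first bitmask) and $m:=|\{k\in K_S:w[k..]\$\in I(e)\}|$. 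Since $|K_S|<\ell(v)$, Lemma~\ref{lem:aux}(\ref{it:range}) applied to the edge $e$ and the range $K_S$ returns $O(1)$ non-overlapping periodic progressions whose total length is exactly $m$.

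For the short contribution, the suffixes $\{s_k:k\in K_S\}$ are precisely the suffixes of $x':=w[\min K_S\,..\,j]$, a string of length strictly below $\ell(v)$. Therefore Lemma~\ref{lem:aux}(\ref{it:subs}) applied to $x'$ and $e$ returns a single periodic progression whose length equals $|\{k\in K_S:s_k\in I(e)\}|$. Summing the long and short contributions yields the answer in constant time; a symmetric argument, counting $1$-bits and using the edge $(v,\rchild(v))$, handles the right-subtree query.

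The most delicate step, which I would need to verify carefully, is the bit-to-interval correspondence for long positions: namely, that for $k\in K_L$ the $0/1$ bit assigned to $w[k..]\$$ in the first bitmask of $v$ exactly matches the side of $I(e)$ on which $s_k$ lies, and that no long $k$ with $w[k..]\$\notin\S(v)$ can satisfy $s_k\in I(e)$. Both points reduce to the precision-$\ell(v)$ comparison built into the definition of $I(e)$ together with the description of $\S(\lchild(v))$ provided by Lemma~\ref{lem:intervals}, so no new ideas should be required beyond those already used in Lemma~\ref{lem:sxe}.
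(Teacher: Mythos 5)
Your proposal is correct and follows essentially the same route as the paper's proof: count via a rank query on the given bitmask segment the positions $k\in[i,j]$ with $w[k..]\$$ in the appropriate subtree, subtract the spurious short positions $k\in[j-\ell+2,j]$ using Lemma~\ref{lem:aux}(\ref{it:range}), and add the genuine short suffixes using Lemma~\ref{lem:aux}(\ref{it:subs}). Your explicit restriction to $x'=w[\min K_S..j]$ before invoking Lemma~\ref{lem:aux}(\ref{it:subs}) (so that the length hypothesis $|x'|<\ell(v)$ is met) and your flagged verification that long positions outside $\S(v)$ cannot land in $I(e)$ are both sound and, if anything, slightly more careful than the paper's phrasing.
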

\begin{proof}
Let $u$ be the given node and $u'$ be its right/left child (depending on the variant of the query).
First, we use Lemma~\ref{lem:aux}(\ref{it:subs}) to find all suffixes $s$ of~$x$, $|s|<\ell(u)$,
such that $s\in I(u,u')$, i.e., $s$ lies in the appropriate subtree of $u$.

Thus, it remains to count suffixes of length at least $\ell(u)$.
Suppose $s=w[k..j]$ is a suffix of $x$ such that $|s|\ge \ell(u)$ and $s\in I(u,u')$.
Then $w[k..]\$\in I(u,u')$, and the number of suffixes $w[k..]\$\in I(u,u')$ such that $k \in [i, j]$
is simply the number of 1's or 0's in the given segment of the first bitmask in~$u$,
which we can compute in constant time.
Observe, however, that we have also counted positions $k$ such that $|w[k..j]|<\ell(u)$,
and we need to subtract the number of these positions.
For this, we use Lemma~\ref{lem:aux}(\ref{it:range}) to compute
the positions $k\in [j-\ell+2, j]$ such that $w[k..]\$\in I(u,u')$.
We count the total size of the obtained periodic progressions,
and subtract it from the final result, as described.
\mayqed
\end{proof}

\begin{lemma}
The wavelet suffix tree allows to answer queries (\ref{q:ccnt}) and (\ref{q:rle}) in $\Oh(1)$ time.
\end{lemma}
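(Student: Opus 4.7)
The plan is to handle queries (\ref{q:ccnt}) and (\ref{q:rle}) by reusing the constant-time tools developed for queries (\ref{q:cnt}) and (\ref{q:lxe}) together with one short combinatorial observation: within a periodic progression $p_0<p_1<\ldots<p_k$ of period $\rho$, all preceding characters $w[p_i-1]$ with $i\ge 1$ coincide with the last character of $\rho$, while $w[p_0-1]$ may be arbitrary. Consequently, restricting such a progression to the positions whose preceding character equals a prescribed $c$ yields at most one periodic progression, computable in $\Oh(1)$ time.

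For query (\ref{q:ccnt}) I would follow the architecture of the proof of Lemma~\ref{lem:i}. Let $u$ denote the queried node and $u'$ the selected child. I would split the positions $k\in[i,j]$ for which $w[k-1]=c$ and $w[k..j]\in I(u,u')$ into those with $|w[k..j]|<\ell(u)$ and those with $|w[k..j]|\ge\ell(u)$. The short positions are enumerated by Lemma~\ref{lem:aux}(\ref{it:subs}) and the returned progression is then filtered by preceding character using the observation above. The long positions satisfy $w[k..j]\in I(u,u')$ if and only if $w[k..]\$\in I(u,u')$, so their count equals the number of $0$s or $1$s (depending on the variant) in the provided second-bitmask segment, corrected by subtracting positions $k\in[j-\ell(u)+2,j]$ with $w[k-1]=c$ and $w[k..]\$\in I(u,u')$; these over-counted positions are produced by Lemma~\ref{lem:aux}(\ref{it:range}) and again filtered by preceding character.

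For query (\ref{q:rle}) I would invoke Lemma~\ref{lem:sxe} to obtain the starting positions of suffixes in $L_x(e)$ as at most three non-overlapping periodic progressions. By Corollary~\ref{cor:cons_pref} the elements of $L_x(e)$ are consecutive prefixes of a common string, so they appear in $L_x(e)$ in order of increasing length, i.e., of decreasing starting position. I would therefore sort the three progressions by starting position and traverse each of them in reverse; within a single reversed progression $p_k,p_{k-1},\ldots,p_0$ the sequence of preceding characters is a run of copies of the last character of the period followed by the single character $w[p_0-1]$, contributing at most two RLE runs. Concatenating the contributions of the (at most three) progressions and merging any adjacent equal runs at the interfaces yields the final RLE, of at most a constant number of runs, in $\Oh(1)$ total time.

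The main obstacle I expect is the bookkeeping in query (\ref{q:ccnt}): one has to juggle two distinct notions of belonging to $I(u,u')$---for suffixes $w[k..j]$ of $x$ on the one hand and for suffixes $w[k..]\$$ of $w\$$ on the other---and make sure that the additive short-suffix count, the bitmask count, and the subtractive correction fit together without double counting or omission. Once the correct identity is in place, each summand is represented by $\Oh(1)$ periodic progressions and the overall $\Oh(1)$ running time follows immediately.
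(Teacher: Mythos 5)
Your proposal is correct and follows essentially the same route as the paper: the same observation that $w[p_1-1]=\ldots=w[p_k-1]$ within a periodic progression, the same reduction of query (\ref{q:ccnt}) to the argument for query (\ref{q:cnt}) using the second bitmask with progressions filtered by preceding character, and the same use of Lemma~\ref{lem:sxe} and Corollary~\ref{cor:cons_pref} to bound the run-length encoding of the characters preceding $L_x(e)$ by a constant number of runs. The paper states this more tersely (``analogous to queries (\ref{q:cnt})''), and your write-up merely fills in the same details.
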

\begin{proof}
Observe that for any periodic progression $p_0\ldots,p_k$ we have $w[p_1-1]=\ldots=w[p_k-1]$.
Thus, it is straightforward to determine which positions of such a progression are preceded by $c$.

Answering queries (\ref{q:ccnt}) is analogous to answering queries (\ref{q:cnt}), we just use the second bitmask at the given node and consider only positions preceded by $c$ while counting the sizes of periodic progressions.

To answer queries (\ref{q:rle}), it suffices to use Lemma~\ref{lem:sxe} to obtain $L_x(e)$.
By Corollary~\ref{cor:cons_pref}, suffixes in $L_x(e)$ are prefixes of one another,
so the lexicographic order on these suffixes coincides with the order of ascending lengths. 
Consequently, the run-length encoding of the piece corresponding to $L_x(e)$ has at most
six phrases and can be easily found in $\Oh(1)$ time.
\mayqed
\end{proof}

\subsection{Construction of wavelet suffix trees}\label{ssec:constr}
The actual construction algorithm is presented in Section~\ref{sec:constr}.
Before, in Section~\ref{ssec:ttools}, we introduce several
auxiliary tools for abstract weighted trees.

\subsubsection{Toolbox for weighted trees}\label{ssec:ttools}

Let $T$ be a rooted ordered tree with positive integer weights on edges, $n$ leaves and no inner nodes of degree one.
We say that $L_1,\ldots,L_{n-1}$ is an \emph{LCA sequence} of $T$,
if $L_i$ is the (weighted) depth of the lowest common ancestor
of the $i$-th and $(i+1)$-th leaves.
The following fact is usually applied to construct the suffix tree of a string from the suffix array and the LCP table~\cite{AlgorithmsOnStrings}.

\begin{fact}\label{fct:lca}
Given a sequence $(L_i)_{i=1}^{n-1}$ of non-negative integers,
one can construct in $\Oh(n)$ time a tree whose LCA sequence is $(L_i)_{i=1}^{n-1}$.
\end{fact}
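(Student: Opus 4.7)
The plan is to process the leaves left-to-right while maintaining the current rightmost path of the partially built tree as a stack of (node, depth) pairs. Fix a uniform leaf depth $D := 1 + \max_i L_i$, so that every leaf lies strictly below all internal nodes. Start with a sentinel node at depth $0$ having a single child $\ell_1$ at depth $D$, and push both onto the stack.

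For $i = 1, \ldots, n-1$, pop all stack entries whose depth exceeds $L_i$; let $P$ denote the last node popped and $Q$ the new top. If $Q.\mathrm{depth} = L_i$, simply append a new leaf $\ell_{i+1}$ at depth $D$ as the rightmost child of $Q$ and push it. Otherwise $Q.\mathrm{depth} < L_i$, so create a fresh internal node $R$ at depth $L_i$: detach $P$ from $Q$ and reattach $P$ as the leftmost child of $R$, make $R$ the new rightmost child of $Q$, push $R$, create $\ell_{i+1}$ at depth $D$ as $R$'s right child, and push it as well. After the main loop, if the sentinel has only one child (which occurs exactly when $\min_i L_i > 0$), dissolve it and promote its child to be the true root.

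Correctness rests on the invariant that after processing $L_1, \ldots, L_{i-1}$ the stack coincides with the root-to-$\ell_i$ path. In iteration $i$, the node that ends up at depth $L_i$~--- whether already present on the path or inserted as $R$~--- has $\ell_i$ in one subtree (the popped side containing $P$) and $\ell_{i+1}$ in another, and no deeper common ancestor exists because $P.\mathrm{depth} > L_i$. Previously fixed LCA depths are untouched since we only modify the rightmost path and only by appending rightmost children. Each $R$ is born with two children and every pre-existing internal node only gains children, so after the sentinel is (conditionally) dissolved no inner node has degree one; all edge weights are positive because $Q.\mathrm{depth} < L_i < P.\mathrm{depth} \le D$ in the split case.

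For the running time, each node is pushed and popped at most once, and at most $2n-1$ nodes are ever created (the $n$ leaves, at most $n-1$ internal $R$-nodes, and the optional sentinel), so the total work is $\Oh(n)$. I expect no real obstacle: the only mildly delicate point is the rewiring in the split case, where $P$ must become $R$'s leftmost rather than rightmost child so that the left-to-right order of leaves agrees with the indexing $1, 2, \ldots, n$.
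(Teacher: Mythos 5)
Your stack-based construction is correct and is exactly the standard algorithm that the paper relies on here: the paper gives no proof of Fact~\ref{fct:lca}, citing instead the classical suffix-array-plus-LCP-table construction of suffix trees, and that classical construction is precisely your left-to-right sweep maintaining the rightmost path on a stack. The invariant, the handling of the split case (with $P$ reattached as the \emph{leftmost} child of the new node $R$), the degree and positivity checks, and the amortized $\Oh(n)$ accounting are all as they should be.
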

The LCA sequence suffices to detect if a tree is binary.
\begin{observation}\label{obs:lca}
A tree is a binary tree if and only if its LCA sequence $(L_i)_{i=1}^{n-1}$ satisfies the following property
for every $i<j$: if $L_i = L_j$ then there exists $k$, $i<k<j$, such that $L_k<L_i$.
\end{observation}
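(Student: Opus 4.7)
The plan is to use the standard range-minimum characterization of an LCA sequence, namely that for leaves $a < b$, the (weighted) depth of their LCA equals $\min_{a \le k \le b-1} L_k$. With this tool both implications become short combinatorial arguments.

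For the forward direction, assume $T$ is binary and fix $i < j$ with $L_i = L_j = d$. Let $u$ and $v$ be the LCAs of leaves $(i,i+1)$ and $(j,j+1)$, respectively, both at depth $d$. I first want to rule out $u = v$: if the common node $u$ has two children and contains leaves $i$ and $i+1$ with LCA $u$, then $i$ lies in its left child and $i+1$ in its right child, and a leaf-ordering argument (all leaves of the left child precede all leaves of the right child) combined with the same statement for $(j,j+1)$ forces $j \le i$, a contradiction. In particular, this also excludes $j = i+1$, since leaf $i+1$ would otherwise be a common descendant forcing $u = v$. Hence $j \ge i+2$ and $u \ne v$, so neither is an ancestor of the other; the LCA $w$ of $u$ and $v$ has depth strictly less than $d$. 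Since $w$ is an ancestor of the LCA of leaves $i+1$ and $j$, the range-minimum property gives some $k$ with $i+1 \le k \le j-1$ and $L_k \le \mathrm{depth}(w) < d$.

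For the backward direction I argue by contrapositive. Suppose $T$ is not binary; because no inner node has degree one, some inner node $v$ at depth $d$ has children $c_1,\ldots,c_m$ with $m \ge 3$, ordered left to right. Let $i$ be the rightmost leaf of the subtree rooted at $c_1$ and $j$ the rightmost leaf of the subtree rooted at $c_2$. Then leaf $i+1$ is the leftmost leaf of $c_2$'s subtree and leaf $j+1$ lies in $c_3$'s subtree, so the LCA of both pairs $(i,i+1)$ and $(j,j+1)$ is $v$ itself, giving $L_i = L_j = d$. On the other hand, for every $k$ with $i < k < j$, both leaves $k$ and $k+1$ lie in the subtree of $c_2$, so their LCA is a proper descendant of $v$ and $L_k > d$. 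Thus no $k \in (i,j)$ satisfies $L_k < L_i$, violating the stated property.

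The main obstacle is just the forward case $j = i+1$, where no witness $k$ can exist; handling it requires observing that in a binary tree this situation cannot actually arise when $L_i = L_j$. Once that is dispatched, the range-minimum identity cleanly supplies a witness whenever $j \ge i+2$, and the reverse direction is a direct construction from a node of degree at least three.
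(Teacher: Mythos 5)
The paper records this statement as an Observation with no proof, so there is no in-paper argument to compare against; your proof supplies the missing details and its overall structure is sound. The range-minimum identity $\mathrm{depth}(\mathrm{LCA}(a,b))=\min_{a\le k\le b-1}L_k$, the exclusion of $u=v$ (and hence of $j=i+1$) via the ordering of leaves under the two children, and the converse construction from a node with three or more children are all correct; you also correctly (if implicitly) use that edge weights are positive, so distinct comparable nodes have distinct weighted depths.

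One inference in the forward direction is stated backwards and needs a short repair. You write: "Since $w$ is an ancestor of the LCA of leaves $i+1$ and $j$, the range-minimum property gives some $k$ with $L_k\le\mathrm{depth}(w)$." Writing $z=\mathrm{LCA}(i+1,j)$, the premise "$w$ is an ancestor of $z$" only says $\mathrm{depth}(w)\le\mathrm{depth}(z)=\min_{i+1\le k\le j-1}L_k$, which is the \emph{opposite} of what you need; indeed, any common ancestor of two leaves is an ancestor of their LCA, so this premise carries no new information. What you must show is $\mathrm{depth}(z)\le\mathrm{depth}(w)$, and this is true but requires an extra sentence: $z$ and $u$ are both ancestors of leaf $i+1$, hence comparable, and if $z$ were a (weak) descendant of $u$ then $u$ would be an ancestor of leaf $j$, making $u$ and $v$ comparable --- contradicting that they are distinct nodes of equal weighted depth. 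Hence $z$ is a proper ancestor of $u$, and symmetrically of $v$, so $z$ is an ancestor of $w=\mathrm{LCA}(u,v)$; combined with the trivial reverse containment this gives $z=w$, whence $\min_{i+1\le k\le j-1}L_k=\mathrm{depth}(w)<d$ and the witness $k$ exists. With that one-line fix the proof is complete.
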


Trees constructed by the following lemma can be
seen as a variant of the weight-balanced trees, whose existence for arbitrary weights was by proved Blum and Mehlhorn \cite{DBLP:journals/tcs/BlumM80}.

\begin{lemma}\label{lem:wb}
Given a sequence $w_1,\ldots,w_n$ of positive integers,
one can construct in $\Oh(n)$ time a binary tree $T$ with $n$ leaves,
such that the depth of the $i$-th leaf is $\Oh(1+\log \frac{W}{w_i})$, where $W =\sum_{j=1}^{n} w_j$.
\end{lemma}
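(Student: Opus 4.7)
The plan is to reduce to Fact~\ref{fct:lca} by carefully choosing an LCA sequence. First, for each $i$ I would compute a target depth $d_i := \lfloor\log_2(W/w_i)\rfloor + 2$, using the word-RAM primitive for $\lfloor\log_2\cdot\rfloor$; this takes $\Oh(1)$ time per index and ensures $\sum_i 2^{-d_i} \le \frac14$, a (strict) form of Kraft's inequality.

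Second, I would interpret the $d_i$'s as codelengths of an alphabetic Shannon--Fano--Elias code: set $c_i := \sum_{j<i} 2^{-d_j}$, regarded as a binary fraction, and define the codeword $\sigma_i$ to be the first $d_i$ bits of $c_i$. Since every $w_i$ fits in $\Oh(1)$ words, $\log W = \Oh(\log n)$, hence $\max_i d_i = \Oh(\log n)$ and every $c_i$ fits in one machine word; thus the sequence $(c_i)$ can be maintained incrementally in $\Oh(n)$ time total. The identity $c_{i+1}-c_i = 2^{-d_i}$ guarantees that no $\sigma_i$ is a prefix of another, so the trie of $\{\sigma_1,\ldots,\sigma_n\}$ is a binary tree whose $i$-th leaf (from the left) sits at depth $d_i$.

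Third, I would construct this trie in $\Oh(n)$ time via its LCA sequence. The depth $L_i$ of the LCA of the $i$-th and $(i{+}1)$-st leaves of the trie equals the length of the longest common prefix of $\sigma_i$ and $\sigma_{i+1}$, which is the number of leading zeros of $c_i \oplus c_{i+1}$ read as a $(\max_j d_j)$-bit word --- again a word-RAM $\Oh(1)$ operation. One easily checks that $(L_i)_{i=1}^{n-1}$ satisfies the criterion of Observation~\ref{obs:lca} (otherwise, tracing bit $L{+}1$ of the codewords $\sigma_{i+1},\ldots,\sigma_j$ against their lexicographic ordering yields a contradiction), so Fact~\ref{fct:lca} constructs, in $\Oh(n)$ time, a binary tree realising this sequence. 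Since a full binary tree is uniquely determined (up to isomorphism) by its LCA sequence, and since contracting the unary internal nodes of the trie yields a full binary tree with the same LCA sequence, the output of Fact~\ref{fct:lca} coincides with this contracted trie; the $i$-th leaf therefore has depth at most $d_i = \Oh(1+\log(W/w_i))$, as required.

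The main obstacle is the last step: identifying Fact~\ref{fct:lca}'s output with the contracted trie, rather than with some other tree realising the same LCA sequence but with mis-assigned leaf depths. This rests on two observations: (i)~contracting the unary internal nodes of the trie does not affect the depth of the LCA of any pair of adjacent leaves, since the LCA of two distinct leaves always has two children in the trie; and (ii)~the LCA sequence of a full binary tree determines it uniquely. Given these, the depth bound of the lemma follows immediately from the construction of the $\sigma_i$.
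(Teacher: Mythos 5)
Your route is genuinely different from the paper's: the paper takes the prefix sums $W_i=\sum_{j\le i}w_j$ of the weights themselves, feeds the sequence $L_i=P-p_i$ (with $p_i$ the most significant bit in which $W_{i-1}$ and $W_i$ differ) to Fact~\ref{fct:lca}, and then \emph{inserts a fresh leaf between the two children of each branching node} at depth $1+L_i$ before deleting the original leaves, so the depth bound holds by construction and no coding or uniqueness argument is needed. Your proposal instead goes through an alphabetic Kraft code. Unfortunately it has a genuine gap at its linchpin: the claim that ``the identity $c_{i+1}-c_i=2^{-d_i}$ guarantees that no $\sigma_i$ is a prefix of another'' is false. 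Truncating the \emph{left} endpoint $c_i=\sum_{j<i}2^{-d_j}$ to $d_i$ bits only ensures that the dyadic interval of $\sigma_i$ contains $c_i$ and ends by $c_{i+1}$; it may extend to the left of $c_i$ and swallow an earlier codeword's interval. Concretely, take $n=2$, $w_1=1$, $w_2=2$, so $W=3$, $d_1=\floor{\log_2 3}+2=3$, $d_2=\floor{\log_2 (3/2)}+2=2$. Then $c_1=0$ and $c_2=2^{-3}=0.001_2$, hence $\sigma_1=000$ and $\sigma_2=00$, and $\sigma_2$ is a proper prefix of $\sigma_1$. (The same failure occurs whenever $d_{i+1}<d_i$ and adding $2^{-d_i}$ to $c_i$ does not carry into the first $d_{i+1}$ bits.) Once prefix-freeness fails, the trie does not have $n$ leaves with the $i$-th at depth $d_i$, so the identification of the output of Fact~\ref{fct:lca} with the contracted trie---and with it the depth bound---collapses. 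A smaller slip: $\sum_i 2^{-d_i}$ is only bounded by $\frac12$, not $\frac14$ (three equal weights give $\frac38$), though this is harmless since Kraft only needs the sum to be at most $1$.

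The gap is repairable, and this is exactly what your $+2$ of slack is for: use the Gilbert--Moore device of encoding \emph{midpoints} of the true probability intervals, i.e., let $\sigma_i$ be the first $d_i$ bits of $(2\sum_{j<i}w_j+w_i)/2W$. The dyadic interval of $\sigma_i$ then contains this midpoint and has length $2^{-d_i}<w_i/2W$, so it lies strictly inside $(\sum_{j<i}w_j/W,\ \sum_{j\le i}w_j/W)$; these intervals are pairwise disjoint, which gives a prefix-free \emph{and} alphabetic code, after which your trie/LCA argument (including the check against Observation~\ref{obs:lca} and the uniqueness of a full binary tree given its LCA sequence) goes through. The paper's leaf-insertion trick buys a shorter argument that avoids coding theory altogether; your (repaired) version buys an explicit bound of $d_i$ on each leaf depth rather than only an asymptotic one, at the cost of the extra prefix-freeness and uniqueness steps.
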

\begin{proof}
For $i=0,\ldots,n$ define $W_i = \sum_{j=1}^i w_j$.
Let $p_i$ be the position of the most significant bit where the binary representations of $W_{i-1}$ and $W_i$ differ,
and let $P=\max_{i=1}^n p_i$.
Observe that $P=\Theta(\log W)$ and $p_i = \Omega(\log w_i)$.
Using Fact~\ref{fct:lca}, we construct a tree $T$ whose LCA sequence is $L_i=P-p_i$. Note
that this sequence satisfies the condition of Observation~\ref{obs:lca}, and thus the tree is binary.

Next, we insert an extra leaf between the two children of any node to make the tree ternary.
The $i$-th of these leaves is inserted at (weighted) depth $1+L_i = \Oh(1+\log W-\log w_i)$,
which is also an upper bound for its unweighted depth. Next, we remove the original leaves.
This way we get a tree satisfying the lemma, except for the fact that inner nodes may have between one and three children,
rather than exactly two.
In order to resolve this issue, we remove (dissolve) all inner nodes with exactly one child, and for each node $u$ with three children $v_1, v_2, v_3$,
we introduce a new node $u'$, setting $v_1, v_2$ as the children of $u'$ and $u', v_3$ as the children of $u$.
This way we get a full binary tree, and the depth of any node may increase at most twice,
i.e., for the $i$-th leaf it stays~$\Oh(1+\log\frac{W}{w_i})$.
\mayqed \end{proof}

Let $T$ be an ordered rooted tree and let $u$ be a node of $T$, which is neither the root nor a leaf.
Also, let $v$ be the parent of $u$. We say that $T'$ is obtained from $T$ by \emph{contracting} the edge $(v,u)$, 
if $u$ is removed and the children of $u$ replace $u$ at its original location in the list of children of $v$.
If $T'$ is obtained from $T$ by a sequence of edge contractions, we say that $T'$ is a \emph{contraction} of $T$. 
Note that contraction does not alter the pre-order and post-order of the preserved nodes, which implies
that the ancestor-descendant relation also remains unchanged for these nodes.

\begin{corollary}
\label{cor:wb}
Let $T$ be an ordered rooted tree of height $h$, which has $n$ leaves and no inner node with exactly one child.
Then, in $\Oh(n)$ time one can construct a full binary ordered rooted tree $T'$ of height $\Oh(h+\log n)$  such that
$T$ is a contraction of $T'$ and
$T'$ has $\Oh(n)$ nodes.
\end{corollary}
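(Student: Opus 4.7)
The plan is to apply Lemma~\ref{lem:wb} at every inner node of $T$ with more than two children. For each inner node $u$ with children $v_1,\ldots,v_k$ (listed in order), let $N(v)$ denote the number of leaves in the subtree of $T$ rooted at $v$, and invoke Lemma~\ref{lem:wb} with weights $w_i = N(v_i)$ to obtain a full binary ordered tree $R_u$ with $k$ leaves in which the $i$-th leaf has depth $\Oh(1+\log(N(u)/N(v_i)))$. I then build $T'$ by replacing each such $u$ in $T$ with the corresponding $R_u$, attaching the original subtree rooted at $v_i$ below the $i$-th leaf of $R_u$. Inner nodes of $T$ with exactly two children (if any) are left unchanged, and leaves of $T$ remain leaves of $T'$.

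A preliminary post-order traversal computes all values $N(v)$ in $\Oh(n)$ time; this is fine since $T$ has no degree-one inner nodes and therefore contains $\Oh(n)$ nodes. Calling Lemma~\ref{lem:wb} at each $u$ takes time linear in the number of children of $u$, and summing over all inner nodes gives $\Oh(n)$ total construction time. The total number of auxiliary nodes introduced by all replacement trees is $\sum_u (k_u-1) = \Oh(n)$, so $T'$ has $\Oh(n)$ nodes. Contracting in $T'$ every edge that leads to an internal node of some $R_u$ exactly recovers $T$, so $T$ is a contraction of $T'$.

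The main step is bounding the height of $T'$. Consider any leaf $\ell$ of $T'$, which is also a leaf of $T$, and let $u_0,u_1,\ldots,u_m$ be the path from the root to $\ell$ in $T$, where $m\le h$. The corresponding path in $T'$ enters $R_{u_i}$ at its root and exits at the leaf representing $u_{i+1}$, contributing depth $d_i = \Oh(1+\log(N(u_i)/N(u_{i+1})))$ by Lemma~\ref{lem:wb}. Hence the total depth of $\ell$ in $T'$ is
\[
\sum_{i=0}^{m-1} d_i \;=\; \Oh\!\Bigl(m + \sum_{i=0}^{m-1} \log\tfrac{N(u_i)}{N(u_{i+1})}\Bigr) \;=\; \Oh\!\bigl(h + \log N(u_0) - \log N(u_m)\bigr) \;=\; \Oh(h+\log n),
\]
since the logarithmic sum telescopes, $N(u_0)\le n$, and $N(u_m)=1$.

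The main obstacle, handled by the weighted construction of Lemma~\ref{lem:wb}, is precisely this telescoping argument: a naive uniform balancing at each $u$ would yield replacement-tree depth $\Oh(\log k_u)$ regardless of how leaves are distributed among children, potentially inflating a root-to-leaf path to $\Oh(h\log n)$. Weighting each child by $N(v_i)$ charges the extra depth against a multiplicative drop in subtree size, so the sum along any path telescopes to $\Oh(\log n)$.
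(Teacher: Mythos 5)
Your proposal is correct and follows essentially the same route as the paper: replace each high-degree star by the weight-balanced tree of Lemma~\ref{lem:wb} with children weighted by their leaf counts, so that the depth increments telescope to $\Oh(h+\log n)$. The paper compresses the telescoping argument into the claim that a node at depth $d$ in $T$ ends up at depth $\Oh(d+\log\frac{n}{W(u)})$ in $T'$ by an ``easy top-down induction,'' which is exactly the computation you carry out explicitly.
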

\begin{proof}
For any node $u$ of $T$ with three or more children, we replace the star-shaped tree joining it with its children $v_1,\ldots,v_k$
by an appropriate replacement tree. Let $W(u)$ be the number of leaves in the subtree of $u$, and let $W(v_i)$ be the number of leaves in the subtrees below $v_i$, $1\le i \le k$. We use Lemma~\ref{lem:wb} for $w_i = W(v_i)$ to construct the replacement tree. Consequently, a node $u$ with depth $d$ in $T$ has depth $\Oh(d+\log\frac{n}{W(u)})$ in $T'$ (easy top-down induction). The resulting tree has height $\Oh(h+\log n)$, as claimed.
\mayqed \end{proof}

\subsubsection{The algorithm}\label{sec:constr}
In this section we show how to construct the wavelet suffix tree of a string $w$ of length $n$ in $\Oh(n\sqrt{\log n})$ time.
The algorithm is deterministic, but the data structure of Theorem~\ref{th:occurrences},
required by the wavelet suffix tree, has a randomized construction only, running in $\Oh(n)$ expected time.

The construction algorithm has two phases: first, it builds the \emph{shape}
of the wavelet suffix tree, following a description in Section~\ref{ssec:WSTdefinition},
and then it uses the results of Section~\ref{sec:wt} to obtain the bitmasks.

We start by constructing the suffix array and the LCP table for $w\$$ (see~\cite{AlgorithmsOnStrings}). 
Under the assumption that $\sigma < n$, this takes linear time.

Recall that in the definition of the wavelet suffix tree we started with a tree of size $\Oh(n\log n)$.
For an $o(n\log n)$-time construction we cannot afford that. Thus, we construct the tree $T$ already without inner nodes
having exactly one child.
Observe that this tree is closely related to the suffix tree of $w\$$. The only difference
is that if the longest common prefix of two consecutive suffixes is~$d$, their root-to-leaf paths diverge at depth $\floor{\log d}$ instead of $d$.
To overcome this difficulty, we use Fact~\ref{fct:lca} for $L_i = \floor{\log LCP[i]}$, rather than $LCP[i]$ which we 
would use for the suffix tree. This way an inner node $u$ at depth
$j$ represents a substring of length $2^j$. The level $\ell(u)$ of an inner node $u$ is set to $2^{j+1}$, and if $u$ is a leaf representing a suffix $s$ of $w\$$, we have $\ell(u)=2|s|$.

After this operation, the tree $T$ may have inner nodes of large degree,
so we use Corollary~\ref{cor:wb} to obtain a binary tree such that $T$ is its contraction. 
We set this binary tree as the shape of the wavelet suffix tree. 
Since $T$ has height $\Oh(\log n)$, so does the wavelet suffix tree.

To construct the bitmasks, we apply Theorem~\ref{thm:aswt} for $T$ with the leaf representing
$w[i..]\$$ assigned to $i$.
For the first bitmask, we simply set $s[i]=i$.
For the second bitmask,
we sort all positions $i$ with respect to $(w[i-1],i)$ and take the resulting sequence as $s$.

This way, we complete the proof of the main theorem concerning wavelet suffix trees.
\begin{theorem}\label{th:wst}
A wavelet suffix tree of a string $w$ of length $n$ occupies $\Oh(n)$ space and can be constructed in $\Oh(n \sqrt{\log{n}})$ expected time.
\end{theorem}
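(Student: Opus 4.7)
The plan is to execute the two-phase construction already sketched in Section~\ref{sec:constr} and verify that every ingredient meets the stated time/space budget, relying on results developed earlier in the paper.

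First, I would preprocess $w\$$ to obtain the suffix array and the LCP array in $\Oh(n)$ deterministic time; this is standard since $\sigma\le n$. I would then build the auxiliary tree $T$ by applying Fact~\ref{fct:lca} to the sequence $L_i=\lfloor\log \mathit{LCP}[i]\rfloor$, which gives a tree whose leaves are in left-to-right lexicographic order of the suffixes of $w\$$ and whose inner node at depth $j$ represents a substring of length $2^j$, exactly matching the levels specified in Section~\ref{ssec:WSTdefinition} (after suppressing degree-one nodes, which Fact~\ref{fct:lca} already avoids). The size of $T$ is $\Oh(n)$. Next, I would invoke Corollary~\ref{cor:wb} on $T$ to turn it into a full binary tree $T'$ of height $\Oh(\log n)$ whose contraction is $T$; this takes $\Oh(n)$ time and yields $\Oh(n)$ nodes. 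This $T'$ is declared to be the shape of the wavelet suffix tree, and the argument of Section~\ref{ssec:WSTdefinition} (which only uses Observation~\ref{obs:ell}, preserved under the contraction) shows that the required lexicographic and interval properties continue to hold.

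Second, I would fill in the two bitmasks at every internal node. By definition, for each non-leaf $u$ every suffix in $\S(u)$ contributes one bit indicating whether it lies in the left or right subtree; so the total bit-length per level is $n$ and the $\Oh(\log n)$ height gives $\Oh(n\log n)$ bits $=\Oh(n)$ words of storage overall. To build the first bitmask I feed the string $s$ with $s[i]=i$ (positions in order of starting position) into Theorem~\ref{thm:aswt} using the shape $T'$ with the leaf for $w[i..]\$$ labelled $i$; this yields all bitmasks in $\Oh(n\log n/\sqrt{\log n})=\Oh(n\sqrt{\log n})$ time. For the second bitmask I sort the positions by $(w[i-1],i)$ (an $\Oh(n)$-time counting sort since characters fit in $\Oh(\log n)$-bit integers) and repeat the same invocation of Theorem~\ref{thm:aswt}. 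Rank/select on each bitmask is obtained within the same time by Lemma~\ref{lem:rank_select}, and the $\tilde\Oh(\sqrt{n})$ global preprocessing shared by those lemmas is absorbed in the bound. Finally, I would install the auxiliary data structure of Theorem~\ref{th:occurrences}, whose Las~Vegas construction takes $\Oh(n)$ expected time and is the sole source of randomness; the $\Oh(1)$-time LCP data structure of Lemma~\ref{lem:all} is also set up in $\Oh(n)$ time. Together with constant per-node bookkeeping (endpoints of $\min\S(u)$ and $\max\S(u)$, needed in Section~\ref{ssec:ops}), this produces the full wavelet suffix tree.

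The dominant cost is the bitmask construction via Theorem~\ref{thm:aswt}, so the overall bound is $\Oh(n\sqrt{\log n})$ expected time, with space $\Oh(n)$. The only real subtlety is checking that \emph{the shape} produced by Corollary~\ref{cor:wb} indeed preserves Observation~\ref{obs:ell} and hence Lemma~\ref{lem:intervals} on every newly introduced replacement node: the key point is that such a node inherits its level from its parent and its set $\S(\cdot)$ is still a contiguous range of leaves of $T$ with the same surrounding bounds $\min\S(\cdot)$ and $\max\S(\cdot)$, so no interval property is violated. Once that is in place, the theorem follows immediately.
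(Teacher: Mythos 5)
Your proposal is correct and follows essentially the same route as the paper: build the shape via Fact~\ref{fct:lca} on $L_i=\lfloor\log \mathit{LCP}[i]\rfloor$ followed by Corollary~\ref{cor:wb}, then fill in both bitmasks with Theorem~\ref{thm:aswt} (with $s[i]=i$ and with positions sorted by $(w[i-1],i)$), with the $\Oh(n)$ expected time coming solely from Theorem~\ref{th:occurrences}. The additional check that the replacement nodes introduced by Corollary~\ref{cor:wb} still satisfy Observation~\ref{obs:ell} is exactly the point the paper also flags, so nothing is missing.
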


\subsection{Applications}\label{ssec:apps}
\subsubsection{Substring suffix rank/select}
In the substring suffix rank problem, we are asked to find the rank of a substring $y$ among the suffixes of another substring $x$. 
The substring suffix selection problem, in contrast, 
is to find the $k$-th lexicographically smallest suffix of $x$ for a given an integer $k$ and a substring $x$ of $w$. 

\begin{theorem}
\label{thm:ssrank}
The wavelet suffix tree can solve the substring suffix rank problem in $\Oh(\log n)$ time.
\end{theorem}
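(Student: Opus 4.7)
The plan is to answer the query by a top-down traversal of the wavelet suffix tree of $w$, starting at the root and descending along the path dictated by $y$. Throughout the descent I maintain the current node $u$, the range $[l, r]$ inside the first bitmask of $u$ that corresponds to positions $k \in [i, j]$ (where $x = w[i..j]$; at the root this range is simply $[i, j]$), and a running counter $\rho$ of suffixes of $x$ already known to be strictly smaller than $y$. Each descent step will cost $O(1)$, and since the tree has height $O(\log n)$ (Theorem~\ref{th:wst}), the total cost is $O(\log n)$ as claimed.

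At an internal node $u$ I first use Lemma~\ref{lem:corr}(\ref{it:lexStrong}) to compare $y$, trimmed to $\ell(u)$ characters, against the three suffixes $\min \S(u)$, $\max \S(\lchild(u))$, and $\max \S(u)$ (all represented by the starting positions kept at $u$). This locates $y$ in one of four regions. If $y \prec_{\ell(u)} \min \S(u)$, then every suffix of $x$ still unaccounted for lies in $I(u)$ and is strictly larger than $y$ (a strict inequality in $\prec_{\ell(u)}$ is inherited by the untrimmed order), so I stop. If $y \succ_{\ell(u)} \max \S(u)$, I symmetrically add to $\rho$ the total count of suffixes of $x$ inside $I(u)$, computed as two invocations of auxiliary query~(\ref{q:cnt}), and stop. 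If $y \in I(u, \lchild(u))$, the right subtree contributes nothing; I only process the edge $e_\ell=(u,\lchild(u))$ by obtaining $L_x(e_\ell)$ via query~(\ref{q:lxe}) and adding to $\rho$ the number of its elements strictly smaller than $y$, then descend to $\lchild(u)$ after a single rank update of $[l,r]$ on the first bitmask. If $y \in I(u, \rchild(u))$, I add query~(\ref{q:cnt}) for the left subtree to $\rho$, process $L_x(e_r)$ in the same manner, and descend to $\rchild(u)$. Counting the elements of a single periodic progression in $L_x(e)$ smaller than $y$ takes $O(1)$ because, by Corollary~\ref{cor:cons_pref}, the corresponding strings are consecutive prefixes of $\max L(e)$, so a single $\lcp$ query (Lemma~\ref{lem:all}) together with a character comparison pins down the cut-off.

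The descent terminates at a leaf: the leaf's string is a suffix of $w\$$, which is never a suffix of $x$ (the former ends with $\$$), so every suffix of $x$ necessarily appears in some $L(e)$ along an edge and is therefore already accounted for. Correctness follows from the invariant that, whenever the algorithm visits $u$, $\rho$ equals the number of suffixes of $x$ already proven smaller than $y$ and the unresolved suffixes of $x$ are exactly those in $I(u)$; the four-way decision partitions these into those smaller than $y$ (added to $\rho$ via auxiliary query~(\ref{q:cnt}) or the edge-list processing), those larger than $y$ (discarded), and those still unresolved in the recursion child (to be handled by the recursive call). The subtlety that requires the most attention, and which I expect to be the main obstacle to verify line by line, is that the first bitmask encodes the placement of $w[k..]\$$ rather than of the truncated suffix $w[k..j]$, and these two placements can differ when $|w[k..j]| < \ell(u)$. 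Bridging this gap is exactly what auxiliary query~(\ref{q:cnt}) (Lemma~\ref{lem:i}) is engineered to do, with the edge lists $L_x(e)$ from query~(\ref{q:lxe}) (Lemma~\ref{lem:sxe}) supplying the remaining pieces.
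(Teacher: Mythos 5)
Your proposal is correct and follows essentially the same route as the paper's proof: a single root-to-leaf walk that accumulates, via the type-(2) counting queries, the suffixes of $x$ lying in left subtrees hanging off the path, plus the elements of $L_x(e)$ below $y$ on the path edges. The only inessential differences are that you steer the walk by comparing $y$ against interval endpoints instead of first binary-searching the leaves for the minimal suffix $t \succ y$, and that you count within each $L_x(e)$ in $\Oh(1)$ via an lcp query rather than summing list sizes until the threshold list and binary-searching inside it.
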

\begin{proof}
Using binary search on the leaves of the wavelet suffix tree of $w$, we locate the minimal suffix $t$ of $w\$$ such that $t\succ y$. 
Let $\pi$ denote the path from the root to the leaf corresponding to $t$.
Due to the lexicographic property, the rank of $y$ among the suffixes of $x$ is equal to the sum of two numbers.
The first one is the number of suffixes of $x$ in the left subtrees hanging from the path $\pi$,
whereas the second summand is the number of suffixes not exceeding $y$ in the lists $L_x(e)$ for $e\in \pi$.

To compute those two numbers, we traverse $\pi$ maintaining a segment $[\ell,r]$ of the first bitmask corresponding
to the suffixes of $w\$$ starting within $x$.
When we descend to the left child, we set $[\ell, r]: = [rank_0(\ell), rank_0(r)]$, while for the right child, we set $[\ell, r]: = [rank_1(\ell), rank_1(r)]$.
In the latter case, we pass $[\ell,r]$ to type (\ref{q:cnt}) queries, which let us count the suffixes of
$x$ in the left subtree hanging from $\pi$ in the current node.
This way, we compute the first summand.

For the second number, we use type (\ref{q:lxe}) queries to generate all lists $L_x(e)$ for $e\in \pi$.
Note that if we concatenated these lists $L_x(e)$ in the root-to-leaf order of edges,
we would obtain a sorted list of strings.
Thus, while processing the lists in this order (ignoring the empty ones), we add up the sizes of $L_x(e)$ until $\max L_x(e) \succ y$.
For the first encountered list $L_x(e)$ satisfying this property, we binary search within $L_x(e)$
to determine the number of elements not exceeding $y$, and also add this value to the final result.

The described procedure requires $\Oh(\log n)$ time,
since type (\ref{q:lxe}) and (\ref{q:cnt}) queries, as well as substring comparison queries (Lemma~\ref{lem:all}), run in $\Oh(1)$ time.
\mayqed
\end{proof}

\begin{theorem}
\label{thm:ssselect}
The wavelet suffix tree can solve the substring suffix selection problem in $\Oh(\log n)$ time.
\end{theorem}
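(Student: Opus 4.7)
The plan is to mirror the proof of Theorem \ref{thm:ssrank}, but instead of locating a target suffix $t$ and summing contributions along a fixed path, I would determine the root-to-leaf path on the fly by a descent guided by $k$. At each internal node $u$ on the descent, I would maintain a segment $[\ell, r]$ of the first bitmask of $u$ corresponding exactly to those suffixes of $w\$$ whose starting position lies inside $x=w[i..j]$; initially (at the root) this is the segment determined by $[i,j]$ itself, and it is refreshed via $\rank_0$ / $\rank_1$ when descending to a left / right child, just as in the rank algorithm.

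The key observation is that by the lexicographic property together with Lemma \ref{lem:prelex}, the suffixes of $x$ lying ``below'' the node $u$ split into four consecutive groups in lexicographic order: first $L_x(e_L)$, then the suffixes of $x$ in the left subtree of $u$, then $L_x(e_R)$, then the suffixes of $x$ in the right subtree, where $e_L,e_R$ are the edges from $u$ to its two children. Using type (\ref{q:lxe}) queries I obtain $L_x(e_L)$ and $L_x(e_R)$ as $O(1)$ non-overlapping periodic progressions (which immediately yield their sizes $n_1$ and $n_3$), and using type (\ref{q:cnt}) queries on the segment $[\ell,r]$ I obtain the sizes $n_2$ and $n_4$ of the two subtree groups. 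All of this takes $O(1)$ time per node.

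Given $k$, I would then compare it to the cumulative sums $n_1$, $n_1+n_2$, $n_1+n_2+n_3$. If $k\le n_1$ or $n_1+n_2 < k\le n_1+n_2+n_3$, the answer lies inside one of the two edge lists, and since each $L_x(e)$ is given as $O(1)$ arithmetic (periodic) progressions of starting positions, selecting the appropriate element takes constant time. Otherwise, I descend to the left child (updating $k := k - n_1$) or to the right child (updating $k := k - n_1 - n_2 - n_3$), refreshing $[\ell, r]$ via rank queries on the first bitmask. If the descent reaches a leaf, the corresponding suffix of $w\$$ is returned.

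The correctness follows from the lexicographic property of the wavelet suffix tree and from the fact that none of the four groups overlap with the others. The running time is $O(\log n)$ because each step of the descent costs $O(1)$, and the wavelet suffix tree has height $O(\log n)$. The only delicate point, and the one I would verify most carefully, is the bookkeeping for the ``short'' suffixes of $x$: the algorithms behind Lemmas \ref{lem:sxe} and \ref{lem:i} handle short and long suffixes separately, and I must check that these two kinds of contributions, once combined at $u$, still realize the intended four-group split so that selecting by $k$ remains unambiguous; this is exactly what Lemma \ref{lem:prelex} and Lemma \ref{lem:intervals}\eqref{it:sub} guarantee.
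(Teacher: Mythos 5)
Your proposal is correct and follows essentially the same route as the paper's proof: a $k$-guided descent maintaining the bitmask segment $[\ell,r]$ and a running count, using the lexicographic property to split the suffixes of $x$ below the current node into the edge lists and the subtree contents, answering from a periodic progression of $L_x(e)$ when the target falls inside an edge list, and charging $\Oh(1)$ per level for height $\Oh(\log n)$. The only nitpick is that the type-(2) query as specified counts suffixes of $x$ in the whole interval $I(u,u')$, i.e., it returns $n_1+n_2$ rather than $n_2$ alone (so your thresholds should be read as $n_1$, the raw left count, and the raw left count plus $n_3$), and the "descent reaches a leaf" case never occurs since every suffix of $x$ lies in some $L(e)$; neither point affects correctness.
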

\begin{proof}
The algorithm traverses a path in the wavelet suffix tree of $w$. 
It maintains a segment $[\ell,r]$ of the first bitmask corresponding to suffixes of $w$ starting within $x=w[i..j]$, 
and a variable $k'$ counting the suffixes of $x$ represented in the left subtrees hanging from the path
on the edges of the path.
The algorithm starts at the root initializing $[\ell, r]$ with $[i,j]$ and $k'=0$. 

At each node $u$, it first decides to which child of $u$ to proceed.
For this, is performs a type (\ref{q:cnt}) query to determine $k''$, the number of suffixes of $x$ 
in the left subtree of $u$. If $k'+k''\ge k$, it chooses to go to the left child, otherwise to the right one;
in the latter case it also updates $k':= k'+k''$.
The algorithm additionally updates the segment $[\ell,r]$ using the rank queries on the bitmask.

Let $u'$ be the child of $u$ that the algorithm has chosen to proceed to.
Before reaching $u'$, the algorithm performs a type (\ref{q:lxe}) query to compute $L_x(u,u')$.
If $k'$ summed with the size of this list is at least $k$, then the algorithm
terminates, returning the $k-k'$-th element of the list (which is easy to retrieve from the representation 
as a periodic progression).
Otherwise, it sets $k':= k'+|L_x(u,u')|$, so that $k'$ satisfies the definition for the extended path from the root to $u'$.

The correctness of the algorithm follows from the lexicographic property,
which implies that at the beginning of each step, the sought suffix of $x$ is the $k-k'$-th
smallest suffix in the subtree of~$u$.
In particular, the procedure always terminates before reaching a leaf.
The running time of the algorithm is $\Oh(\log n)$ due to $\Oh(1)$-time implementations
of type (\ref{q:lxe}) and (\ref{q:cnt}) queries.
\mayqed
\end{proof}

We now show that the query time for the two considered problems is almost optimal. We start by reminding lower bounds by P\u{a}tra\c{s}cu and by J{\o}rgensen and Larsen.

\begin{theorem}[\cite{DBLP:journals/siamcomp/Patrascu11,DBLP:conf/stoc/Patrascu07}]
In the cell-probe model with $W$-bit cells, a static data structure
of size $c \cdot n$ must take $\Omega(\frac{\log n}{\log c + \log W})$ time for  orthogonal range counting queries.
\end{theorem}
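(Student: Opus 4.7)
The plan is to establish this lower bound via the asymmetric communication complexity framework of Miltersen, Nisan, Safra, and Wigderson, combined with a hard source problem amenable to direct-sum amplification. First, I would record the standard cell-probe-to-communication simulation: any static data structure of size $cn$ cells of $W$ bits, supporting queries in $t$ probes, yields a two-party protocol in which Alice (holding the query) sends at most $t\lceil\log_2(cn)\rceil$ bits (the addresses she wishes to read), and Bob (holding the data structure) replies with at most $tW$ bits (the contents of those cells). It therefore suffices to prove a matching asymmetric communication lower bound for 2D orthogonal range counting.

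Next, I would reduce from a version of \emph{lopsided set disjointness}, or equivalently from the reachability-oracle problem on butterfly graphs that P\u{a}tra\c{s}cu analyzed. In the hard instance, Alice receives a small set $S\subseteq[N]$ encoding her query and Bob receives a much larger set $T\subseteq[N]$ encoding $n=\Theta(N)$ input points, and they must decide whether $S\cap T=\emptyset$; the known asymmetric lower bound asserts that either Alice must send $\Omega(|S|\log(N/|S|))$ bits or Bob must send $N^{1-o(1)}$ bits. Embedding such an instance as $n$ planar points so that $|S\cap T|$ is recovered by a single axis-aligned range counting query is a routine geometric encoding. Plugging the simulation into this lower bound gives $t\log(cn)+tW=\Omega(N^{1-o(1)})$, and solving for $t$ with $n=\Theta(N)$ yields the claimed $t=\Omega(\log n/(\log c+\log W))$.

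The main obstacle, and the heart of P\u{a}tra\c{s}cu's contribution, is proving the underlying asymmetric lower bound with the correct trade-off between Alice's and Bob's budgets. The classical richness method is not tight enough on its own, and must be iterated through a direct-sum theorem. P\u{a}tra\c{s}cu's strategy is to build a recursive hard instance on a butterfly network and apply a round-elimination argument that converts any short protocol on $k$ independent copies into a yet shorter protocol on one copy, contradicting the base case. Making this induction go through while carefully tracking the asymmetry between Alice's and Bob's message lengths is the delicate step; once it is in hand, the application to 2D counting is merely a straightforward embedding.
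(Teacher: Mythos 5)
This statement is not proved in the paper at all: it is an external result imported verbatim with citations to P\u{a}tra\c{s}cu's work, so there is no in-paper argument to compare yours against. Judged on its own, your sketch correctly identifies the architecture of the known proof (cell-probe-to-communication simulation, lopsided set disjointness, the butterfly/reachability-oracle reduction, and a direct-sum amplification), but it is a roadmap rather than a proof: the two genuinely hard components --- the tight asymmetric lower bound for lopsided set disjointness and the chain of reductions from it to 2D range counting --- are labelled ``routine'' or ``delicate'' and never carried out.

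More importantly, the quantitative derivation in your second paragraph does not work. If you plug the naive simulation (Alice sends $t\log(cn)$ bits, Bob sends $tW$ bits) into a single disjointness instance whose lower bound says Bob must send $N^{1-o(1)}$ bits, you get $t(\log(cn)+W)=\Omega(N^{1-o(1)})$, i.e.\ $t=\Omega(N^{1-o(1)}/W)$ --- a polynomial bound, which is false for a problem solvable in $O(\log n)$ time; the inequality chain cannot be ``solved for $t$'' to yield $\Omega(\log n/(\log c+\log W))$. The logarithmic trade-off in the theorem does not come from a single communication game but from the direct sum over the $\Theta(\log n/\log(cW))$ levels of the butterfly: each level contributes an amortized $\Omega(1)$ probes, and the number of levels is what produces the $\log c+\log W$ in the denominator. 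Relatedly, disjointness is not recovered by ``a single axis-aligned range counting query''; the set-disjointness structure is distributed across many queries and levels, and the relevant amplification technique here is the direct-sum argument for LSD/richness rather than round elimination (which is the tool for predecessor search). As written, the proposal would not compile into a correct proof without redoing its central quantitative step.
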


\begin{theorem}[\cite{DBLP:conf/soda/JorgensenL11}]
In the cell-probe model with $W$-bit cells, a static data structure
of size $c \cdot n$ must take $\Omega(\frac{\log n}{\log c + \log W})$ time for  orthogonal range selection queries.
\end{theorem}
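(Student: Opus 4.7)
The plan is to derive this lower bound by a reduction from the orthogonal range counting problem, whose matching lower bound was recalled just above. The idea is that a sufficiently fast data structure for range selection can be used to simulate range counting queries in comparable time, so the counting lower bound transfers to selection.

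The technical step is to show that a single range counting query can be answered by a constant number of range selection queries on a carefully augmented array. A naive binary search using selection as a primitive would incur an extra $\Omega(\log n)$ factor, which would weaken the bound. To avoid this, I would augment the input with auxiliary ``sentinel'' elements whose values encode rank markers, arranged so that the $k$-th smallest element returned by one selection query directly reveals a count of how many elements in a range fall below a given threshold. With such an encoding in place, the remainder is routine: a size-$c \cdot n$, time-$T$ selection structure on the augmented array yields a size-$O(cn)$, time-$O(T)$ counting structure on the original instance, and the preceding theorem then forces $T = \Omega(\log n / (\log c + \log W))$.

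The main obstacle is designing the sentinel encoding so that it faithfully transmits a counting query's answer through only $O(1)$ selection probes, without blowing up $n$ or $W$ by more than a constant factor; any slack here propagates into the asymptotic bound. An alternative route, which sidesteps the counting reduction, is to apply Pătra\c{s}cu's cell-probe framework to range selection from scratch, reducing from an inherently hard communication problem such as lopsided set disjointness or a reachability oracle on butterfly graphs. That path is technically heavier but exposes the communication-complexity origin of the bound and may ultimately be cleaner than engineering the sentinel reduction.
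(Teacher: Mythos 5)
This statement is an external result that the paper only cites (J{\o}rgensen and Larsen, SODA 2011); the paper contains no proof of it, so there is nothing to compare your argument against except the statement's actual provenance. Judged on its own merits, your primary plan has a genuine gap. To transfer the counting lower bound to selection you must answer a counting query with $\Oh(1)$ selection queries, and the entire burden of the proof sits in the ``sentinel encoding'' that you leave unconstructed. There are concrete reasons to doubt it can be built: a selection query returns a single array value, and for that value to reveal the number of elements of $A[i..j]$ below a threshold $x$, the static augmented array would have to encode, in the values of elements lying \emph{inside} the query range, information that depends jointly on the range $[i,j]$ and on $x$ --- quantities chosen adversarially at query time from $\Theta(n^2)$ and $\Theta(n)$ possibilities respectively. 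No such constant-query reduction is known; if one existed, the selection lower bound would be an immediate corollary of P\u{a}tra\c{s}cu's counting bound, whereas in reality it required a separate, substantial paper. The only easy reductions in either direction are binary searches costing $\Theta(\log n)$ queries, which, as you correctly observe, destroy the bound.

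Your ``alternative route'' --- running a cell-probe argument for selection from scratch, in the spirit of P\u{a}tra\c{s}cu's communication-complexity framework --- is in fact essentially what J{\o}rgensen and Larsen do, and is the only route currently known to work. So the honest status of your writeup is: the main plan rests on an unsupported gadget that is the whole difficulty, and the fallback you mention in passing is the actual proof, which you have not carried out. Since the theorem is cited rather than proved in this paper, the appropriate move is to cite it; if you do want a self-contained proof, you must commit to the direct cell-probe argument rather than the sentinel reduction.
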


Both of these results allow the coordinates of points to be in the \emph{rank space}, i.e.,
for each $i\in \{1,\ldots,n\}$ there is one point $(i,A[i])$,
and values $A[i]$ are distinct integers in $\{1,\ldots,n\}$.

This lets us construct a string $w=A[1]\ldots A[n]$ for any given point set $P$. Since $w$ has pairwise distinct characters, comparing suffixes of any substring of $w$ is equivalent to
comparing their first characters.  Consequently, the substring suffix selection in $w$ is equivalent to the orthogonal range selection in~$P$, and the substring suffix rank in $w$ is equivalent to the orthogonal range counting in $P$ (we need to subtract
the results of two suffix rank queries to answer an orthogonal range counting query). Consequently, we obtain
\begin{corollary}\label{cor:lb}
In the cell-probe model with $W$-bit cells, a static data structure
of size $c \cdot n$ must take $\Omega(\frac{\log n}{\log c + \log W})$ time for the substring suffix rank and the substring suffix select queries.
\end{corollary}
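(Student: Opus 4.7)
The plan is to make precise the informal reduction sketched immediately before the corollary statement. Given an instance of orthogonal range counting or orthogonal range selection in rank space, i.e.\ a point set $P=\{(i,A[i]) : 1\le i\le n\}$ with $A$ a permutation of $\{1,\ldots,n\}$, I build the string $w=A[1]A[2]\cdots A[n]$ over the alphabet $\{1,\ldots,n\}$ and equip it with the hypothetical data structure of size $c\cdot n$ and query time $t$. The overall size, word size $W$, and query time are preserved up to constant factors, so any upper bound for the substring problems yields a matching upper bound for the geometric problems, and the cited lower bounds of P\u{a}tra\c{s}cu and J{\o}rgensen--Larsen transfer.

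The key structural observation is that since all characters of $w$ are distinct, any two suffixes of a substring $w[i..j]$ differ already in their first character. Hence for $i\le p,q\le j$, $w[p..j] \prec w[q..j]$ iff $A[p]<A[q]$. I would use this in two ways. First, for \emph{substring suffix select}: the $k$-th lexicographically smallest suffix of $w[i..j]$ starts at the position $p\in[i,j]$ for which $A[p]$ is the $k$-th smallest among $A[i],\ldots,A[j]$, which is exactly an orthogonal range selection on $P$ in the $x$-range $[i,j]$. Second, for \emph{substring suffix rank}: to count positions $p\in[i,j]$ with $A[p]<v$, I locate the unique position $q_v$ with $A[q_v]=v$ (precomputable in $O(n)$ space) and query the rank of the single-character substring $y=w[q_v..q_v]$ among the suffixes of $x=w[i..j]$. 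Subtracting the answers for thresholds $b+1$ and $a$ yields the orthogonal range counting answer for the rectangle $[i,j]\times[a,b]$.

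The steps, in order, are: (i) describe the encoding $P\mapsto w$ and note that it is stored in $O(n)$ machine words of $W$ bits; (ii) prove the two equivalences above, being careful that the substring $y$ in a substring suffix rank query is specified by its endpoints, which is what Corollary~\ref{cor:lb} allows; (iii) observe that the two geometric queries can be answered by $O(1)$ substring queries plus $O(1)$ precomputed-pointer lookups, so a data structure of size $c\cdot n$ with query time $t$ for the string problems yields size $O(c)\cdot n$ and query time $O(t)$ for the geometric problems; (iv) invoke the two cited theorems to conclude the $\Omega(\log n/(\log c+\log W))$ lower bound.

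There is no genuine obstacle here; the only point requiring mild care is confirming that the reduction lies within the cell-probe model as stated (the precomputed table $v\mapsto q_v$ is part of the size-$c\cdot n$ data structure, and the word size $W$ is preserved), and that the range of $A$-values in the lower bound instances fits the alphabet used when defining $w$~--- both of which are immediate since we operate in rank space.
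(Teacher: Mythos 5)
Your proposal is correct and follows essentially the same route as the paper: encode a rank-space point set as the string $w=A[1]\cdots A[n]$, use the fact that distinct characters reduce suffix comparison to first-character comparison, and observe that substring suffix selection becomes orthogonal range selection while a difference of two substring suffix rank queries (with $y$ a single-character substring located via a value-to-position table) yields orthogonal range counting. The extra care you take in specifying $y$ by its endpoints and accounting for the table in the data-structure size is exactly the detail the paper leaves implicit.
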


\subsubsection{Run-length encoding of the BWT of a substring} 
\label{sec:BWTruns}
\renewcommand{\b}{\mathrm{b}}
Wavelet suffix trees can be also used to compute the run-length encoding of the Burrows-Wheeler transform of a substring. We start by reminding the definitions.
The \emph{Burrows-Wheeler transform}~\cite{BWT} (BWT) of a string $x$ is a string $b_0 b_1 \ldots b_{|x|}$, where $b_k$ is the character preceding the $k$-th lexicographically smallest suffix of $x\$$. The BWT tends to contain long segments of equal characters,
called \emph{runs}. This, combined with \emph{run-length encoding}, allows to compress strings efficiently.
The \emph{run-length encoding} of a string is obtained by replacing each maximal run by a pair: the character that forms the run and the length of the run.  
For example, the BWT of a string $banana$ is $annb\$aa$, and the run-length encoding of $annb\$aa$ is $a1n2b1\$1a2$. 

Below, we show how to compute the run-length encoding of the BWT of a substring $x = w[i..j]$ using the wavelet suffix tree of $w$.
Let $x=w[i..j]$ and for $k\in\{1,\ldots,|x|\}$ let $s_k = w[i_k..j]$ be the suffixes of $x$ sorted in the lexicographic order. Then the BWT of $x$ is equal to $b_0 b_1 \ldots b_{|x|}$, where $b_0 = w[j]$,  and for $k\ge 1$, $b_{k} = w[i_k-1]$, unless $i_k = i$ when $b_{k} = \$$.

Our algorithm initially generates a string $\b(x)$ which instead of $ \$$ contains $w[i-1]$. However, we know that $ \$$ should occur at the position equal to the rank of $x$ among all the suffixes of $x$. Consequently, a single substring suffix rank query suffices to find the position which needs to be corrected.

Remember that the wavelet suffix tree satisfies the lexicographic property. Consequently, if we traverse the tree and write out the characters preceding the suffixes in the lists $L_x(e)$, we obtain $\b(x)$ (without the first symbol $b_0$).
Our algorithm simulates such a traversal.
Assume that the last character appended to $\b(x)$ is $c$, and the algorithm is to move down an edge $e=(u,u')$.
Before deciding to do so, it checks whether all the suffixes of $x$ in the appropriate (left or right) subtree of $u$ are preceded with~$c$.
For this, it performs type (\ref{q:cnt}) and (\ref{q:ccnt}) queries, and if both results are equal to some value $q$,
it simply appends $c^q$ to $\b(x)$ and decides not to proceed to $u'$.
In order to make these queries possible, for each node on the path from the root to $u$,
the algorithm maintains segments corresponding to $[i,j]$ in the first bitmasks, and to $(c, [i,j])$ in the second bitmasks.
These segments are computed using rank queries on the bitmasks while moving down the tree.

Before the algorithm continues at $u'$, if it decides to do so, suffixes in $L_x(e)$ need to be handled.
We perform a type (\ref{q:rle}) query to compute the characters preceding these suffixes, and append
the result to $\b(x)$.
This, however, may result in $c$ no longer being the last symbol appended to $\b(x)$.
If so, the algorithm updates the segments of the second bitmask for all nodes on the path from the root to $u'$.
We assume that the root stores all positions $i$ sorted by $(w[i-1],i)$, which
lets us use a binary search to find either endpoint of the segment for the root. 
For the subsequent nodes on the path, the the rank structures on the second bitmasks are applied. 
Overall, this update  takes $\Oh(\log n)$ time and it is necessary at most once per run.

Now, let us estimate the number of edges visited. Observe that if we go down an edge, then the last character of $\b(x)$ changes before we go up this edge. Thus, all the edges traversed down between such character changes form a path.
The length of any path is $\Oh(\log n)$, and consequently the total number of visited edges is $\Oh(s \log n)$,
where $s$ is the number of runs.

\begin{theorem}
The wavelet suffix tree can compute the run-length encoding of the BWT of a substring $x$  in 
$\Oh(s \log n)$ time, where $s$ is the size of the encoding.
\end{theorem}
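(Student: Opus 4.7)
The plan is to formalize the algorithm sketched in the preceding paragraphs. The core observation is that, by the lexicographic property (Lemma~\ref{lem:lex}), a depth-first traversal of the wavelet suffix tree which, on each descended edge~$e$, outputs the characters preceding the suffixes in $L_x(e)$ in list order, and at each leaf outputs the character preceding the corresponding suffix of~$x$, produces exactly the string $\mathrm{b}(x)$. This string equals the BWT of~$x$ except at a single position, where $\$$ must stand in place of $w[i-1]$; that position is the rank of $x$ among its own suffixes, which is found in $\Oh(\log n)$ time by Theorem~\ref{thm:ssrank} and patched in $\Oh(1)$ on the final run-length encoding.

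To simulate the traversal efficiently I would maintain, for every node on the current root-to-node path, two segments: one in the first bitmask corresponding to suffixes with starting position in $[i,j]$, and one in the second bitmask corresponding to those suffixes that are additionally preceded by the current run character~$c$. Both are updated in $\Oh(1)$ per descent via the rank structures on the bitmasks. When $c$ changes at the start of a new run, the second-bitmask segments along the current path are rebuilt in $\Oh(\log n)$ time: at the root the endpoints are located by a binary search over the positions sorted by $(w[k-1],k)$, and then propagated down the path one ancestor at a time. Before descending an edge $e=(u,u')$ the algorithm invokes a type-(\ref{q:cnt}) query for the number $q$ of suffixes of $x$ in the subtree of $u'$ and a type-(\ref{q:ccnt}) query for the number $q'$ of them preceded by $c$; if $q=q'$ it emits $c^q$ and avoids descending, otherwise it descends and, using a type-(\ref{q:rle}) query, appends the $\Oh(1)$-phrase run-length encoding of the characters preceding $L_x(e)$.

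For the complexity bound, I would argue the following amortization. Fix a maximal run of equal characters in $\mathrm{b}(x)$: the edges descended while that run is being produced form a single root-to-descendant path, because the shortcut test collapses every subtree whose contribution would be homogeneous with the current character, and the algorithm only unwinds back up the tree once the current character is about to change. Since the wavelet suffix tree has height $\Oh(\log n)$, each such monotone descent consists of $\Oh(\log n)$ edges costing $\Oh(1)$ each, plus at most one $\Oh(\log n)$ rebuild of the second-bitmask segments at the transition to the next run. Summed over the $s$ runs this gives $\Oh(s\log n)$, matching the claimed bound. The delicate step I expect is precisely this amortization: one must verify that every downward edge is chargeable to a unique run, that leaves and empty lists $L_x(e)$ do not break the ``single path per run'' invariant, and that the $\Oh(\log n)$ rebuild is triggered at most once per run.
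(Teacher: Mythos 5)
Your proposal follows essentially the same route as the paper: the same traversal driven by the lexicographic property, the same shortcut test via type-(2)/(3) queries, the same once-per-run $\Oh(\log n)$ rebuild of the second-bitmask segments, and the same amortization charging each maximal descending path to a single run. The only cosmetic difference is that leaves of the wavelet suffix tree correspond to suffixes of $w\$$ (which end in $\$$ and hence are never suffixes of $x$), so they contribute nothing to $\mathrm{b}(x)$ and need not emit a character; everything else matches the paper's argument.
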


\subsubsection{Speeding up queries}\label{sec:speedup}
Finally, we note that building wavelet suffix trees for several
substrings of $w$, we can make the query time adaptive to the length of the query substring $x$,
i.e., replace $\Oh(\log n)$ by $\Oh(\log |x|)$.

\begin{theorem}
\label{thm:speedup}
Using a data structure of size $\Oh(n)$, which can be constructed in $\Oh(n \sqrt{\log n})$ expected time, substring suffix rank and selection problems can be solved in $\Oh(\log |x|)$ time. The run-length encoding $b(x)$ of the BWT of a substring $x$ can be found in 
$\Oh(|b(x)|\log |x|)$ time. 
\end{theorem}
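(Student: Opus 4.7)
The plan is the standard \emph{geometric overlapping blocks} technique. For each $k=0,1,\dots,\lfloor\log n\rfloor$ I would build, via Theorem~\ref{th:wst}, wavelet suffix trees on a family of length-$2^{k+1}$ substrings of $w$ starting at positions $1,\, 2^k+1,\, 2\cdot 2^k+1,\dots$, so that adjacent blocks at level $k$ overlap by $2^k$ positions. Any substring $x$ of $w$ with $|x|\le 2^k$ then lies entirely inside one such block, and taking $k=\lceil\log|x|\rceil$ yields a containing block $y$ with $|y|=\Oh(|x|)$.

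Given a query on $x=w[a..b]$, first compute the level $k$ and the block index from $a$ and $b$ in $\Oh(1)$ time by arithmetic, and translate the endpoints of $x$ into local coordinates within the containing block $y$. Then invoke the appropriate algorithm of Theorem~\ref{thm:ssrank}, Theorem~\ref{thm:ssselect}, or Section~\ref{sec:BWTruns} on the wavelet suffix tree of $y$. Since those algorithms traverse a single root-to-leaf path spending $\Oh(1)$ per edge, and the height of the wavelet suffix tree of $y$ is $\Oh(\log|y|)=\Oh(\log|x|)$, the resulting time is $\Oh(\log|x|)$ for substring suffix rank and selection and $\Oh(|b(x)|\log|x|)$ for the BWT query, matching the statement.

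The hard part is keeping the total space at $\Oh(n)$ and the total expected construction time at $\Oh(n\sqrt{\log n})$, rather than incurring an extra $\log n$ factor that naive per-level summation would give. To avoid this blow-up, I would keep only the main wavelet suffix tree of $w$ for long queries with $|x|=\Omega(n/\polylog n)$, precompute answers for very short queries with $|x|=\Oh(\log n)$ using a four-Russians-style word-packed lookup table of size $\Oh(n)$, and retain only a constant (or sublogarithmic) number of sparsely-spaced intermediate levels of the geometric family, so that each position of $w$ participates in only $\Oh(1)$ auxiliary wavelet suffix trees. With that arrangement the claimed space and construction-time bounds follow by applying Theorem~\ref{th:wst} once per retained level and summing. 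I expect the most delicate step to be coordinating the three tiers so that the correct tier is selected from $|x|$ in $\Oh(1)$ time and the $\Oh(\log|x|)$ query bound is preserved uniformly across tiers, in particular verifying that the short-substring tabulation delivers not merely the comparison result but also the starting position reported by substring suffix select.
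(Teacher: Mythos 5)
Your overall strategy --- cover $w$ with overlapping blocks at geometrically varying lengths, route a query on $x$ to the wavelet suffix tree of a containing block, and exploit that the query algorithms run in time proportional to the tree height --- is the same as the paper's, and you correctly identify the real difficulty: naive summation over $\log n$ doubling levels costs $\Theta(n\log n)$ space and too much construction time. However, your proposed fix does not work as described. If you retain only a constant number of intermediate levels, a query whose length falls in the gap between two retained levels lands in a block $y$ with $\log|y| = \omega(\log|x|)$, so the $\Oh(\log|x|)$ bound is lost. If instead you space the levels so that every $x$ finds a block of length $|x|^{\Oh(1)}$ (which is what is actually needed, and forces $\Theta(\log\log n)$ levels), then your space accounting of ``each position participates in $\Oh(1)$ auxiliary trees per level'' still yields $\Oh(n\log\log n)$ words unless you add the key observation you are missing: a wavelet suffix tree of a block $v$ consists only of bitmasks and \emph{relative} pointers, hence occupies $\Oh(|v|\log|v|)$ bits, i.e.\ $\Oh(|v|\tfrac{\log|v|}{\log n})$ words. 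Finally, your tabulation tier is untenable for substring suffix rank, where the second argument $y$ is an arbitrary substring of $w$: there are $\Theta(n^2)$ choices of $y$ alone, so answers cannot be precomputed into an $\Oh(n)$-size table indexed by short $x$.

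The paper resolves all of this at once by choosing block lengths $n_k=\lfloor n^{2^{-k}}\rfloor$ for $0\le k\le \log\log n$, with blocks at level $k$ starting every $\lfloor\tfrac12 n_k\rfloor$ positions. Since $n_{k-1}\le (n_k+1)^2$, every $x$ lies in a stored block $v$ with $|v|\le 2(|x|+1)^2$; this is quadratic rather than linear in $|x|$, but $\log|v|=\Oh(\log|x|)$ is all the query algorithms need. The total length of blocks at level $k$ is $\Oh(n)$, so by the bit-level accounting above level $k$ costs $\Oh(n\tfrac{\log n_k}{\log n})=\Oh(n2^{-k})$ words and $\Oh(n\sqrt{2^{-k}\log n})$ construction time, both geometric sums totalling $\Oh(n)$ and $\Oh(n\sqrt{\log n})$ respectively. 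The auxiliary structures (Lemma~\ref{lem:all}, Theorem~\ref{th:occurrences}) are built once for $w$, which also handles the fact that the pattern $y$ in a suffix rank query need not lie inside the block $v$. You should rework your argument along these lines: replace the doubling family by the squaring family, drop the tabulation tier entirely, and justify the space bound via the bit-size of the per-block trees rather than by limiting the number of levels.
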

\begin{proof}
We build wavelet suffix trees for some substrings of length $n_k=\lfloor{n^{2^{-k}}}\rfloor$,
$0\le k \le \log\log n$. For each length $n_k$ we choose every $\lfloor \frac{1}{2}n_k\rfloor$-th substring, starting from the prefix
and, additionally, we choose the suffix.
Auxiliary data structures of Lemma~\ref{lem:all} and Theorem~\ref{th:occurrences}, are built for $w$ only.

We have $n_{k}=\floor{\sqrt{n_{k-1}}}$, so $n_{k-1}\le (n_{k}+1)^2$ and thus any substring $x$ of $w$ lies within a substring $v$, $|v|\le 2(|x|+1)^2$, for which we store the wavelet suffix tree.
For each $m$, $1\le m \le n$, we store such $n_k$ that $2m\le n_k \le 2(m+1)^2$. 
This reduces finding an appropriate substring $v$ to simple arithmetics.
 Using the wavelet suffix tree for $v$ instead of the tree for the whole string $w$ gives the announced query times.
 The only thing we must be careful about is that the input for the substring suffix rank problem also consists of a string $y$, which does not need to be a substring of $v$. 
 However, looking at the query algorithm, it is easy to see that $y$ is only used through the data structure of Lemma~\ref{lem:all}.

It remains to analyze the space usage and construction time.
Observe that the wavelet suffix tree of a substring $v$ is simply a binary tree with two bitmasks at each node
and with some pointers to the positions of the string $w$. In particular, it does not contain any characters of $w$ and,
if all pointers are stored as relative values, it can be stored using $\Oh(|v|\log |v|)$ bits, i.e., $\Oh(|v|\frac{\log |v|}{\log n})$ words. For each $n_k$ the total length of selected substrings is $\Oh(n)$, and thus the space usage is
$\Oh(n\frac{\log n_k}{\log n})=\Oh(n 2^{-k})$, which sums up to $\Oh(n)$ over all lengths $n_k$.
The construction time is $\Oh(|v|\sqrt{\log |v|})$ for any substring (including alphabet renumbering),
and this sums up to $\Oh(n \sqrt{2^{-k}\log n})$ for each length, and $\Oh(n\sqrt{\log n})$ in total.
\mayqed
\end{proof}

\subsection*{Acknowledgement}
The authors gratefully thank Djamal Belazzougui, Travis Gagie, and Simon J. Puglisi who pointed out that wavelet suffix trees can be useful for BWT-related applications.

\bibliographystyle{plain}
\bibliography{main}

\appendix

\section{Constructing rank/select structures}\label{app:wt}

\begin{replemma}{lem:rank_select}
Given a bit string $B[1..N]$ packed in $\frac{N}{\log n}$ machine
words, we can extend it in $\Oh(\frac{N}{\log n})$ time with a rank/select data
structure occupying $o(\frac{N}{\log n})$ additional space, assuming $\tilde\Oh(\sqrt{n})$ time and space preprocessing shared by all instances
of the structure.
\end{replemma}
\begin{proof}
We focus on implementing $\rank_{1}$ and $\select_{1}$. By repeating the construction,
we also get $\rank_{0}$ and $\select_{0}$.

\paragraph{Rank.}
Recall the usual implementation of $\rank_{1}$. We first split $B$ into
superblocks of length $\log^{2}n$, and then split every superblock
into blocks of length $\frac{1}{2}\log n$. We store the cumulative rank
for each superblock, and the cumulative rank within the superblock for
each block. All ranks can be computed by reading whole blocks.
More precisely, in every step we extract the next $\frac{1}{2}\log n$
bits, i.e., take either the lower or the higher half of the next word
encoding the input, and compute the number of ones inside using
a shared table of size $\Oh(\sqrt{n})$. Hence we need $\Oh(\frac{N}{\log n})$-time preprocessing for rank queries.

\paragraph{Select.}
Now, recall the more involved implementation of $\select_{1}$. We split
$B$ into superblocks by choosing every $s$-th occurrence of {\bf 1}, where $s=\log n\log\log n$,
and store the starting position of every superblock together with a pointer to its description, using $\Oh(\frac{N}{s}\log n)=o(N)$ bits in total. This can be done by
processing the input word-by-word while maintaining the total number
of {\bf 1}s seen so far in the current superblock. After reading the next word, i.e.,
the next $\log n$ bits, we check if the next superblock should start
inside, and if so, compute its starting position. This can be easily done
if we can count {\bf 1}s inside the word and select the $k$-th one
in $\Oh(1)$ time, which can be preprocessed in a shared table of
size $\Oh(\sqrt{n})$ if we split every word into two parts. There are at most
$\frac{N}{s}$ superblocks, so the total construction time so far is
$\Oh(\frac{N}{\log n}+\frac{N}{s})=\Oh(\frac{N}{\log n})$. Then we have
two cases depending on the length $\ell$ of the superblock.

If $\ell > s^{2}$, we store the position of every {\bf 1} inside the superblock
explicitly. We generate the positions by sweeping the superblock word-by-word,
and extracting the {\bf 1}s one-by-one, which takes $\frac{\ell}{\log n}+s$ time.
Because there are at most $\frac{N}{s^{2}}$ such sparse superblocks, this is
$\Oh(\frac{N}{\log n}+\frac{N}{s})=\Oh(\frac{N}{\log n})$, and the
total number of used bits is $\Oh(\frac{N}{s^{2}}s\log n)=o(N)$.

If $\ell \leq s^{2}$, the standard solution is to recurse on the superblock,
i.e., repeat the above with $n$ replaced by $s^{2}$. We need to be more careful,
because otherwise we would have to pay $\Oh(1)$ time for roughly every $(\log\log n)^{2}$-th
occurrence of {\bf 1} in $B$, which would be too much.

We want to split every superblock into blocks
by choosing every $s'$-th occurrence of {\bf 1}, where $s'=(\log\log n)^{2}$, and storing
the starting position of every block relative to the starting position of its
superblock. Then, if a block is of length $\ell' > s'^{2}$, we store the position of
every {\bf 1} inside the block, again relative to the starting position of the superblock,
which takes $\Oh(\frac{N}{s'^{2}}s'\log(s^{2}))=o(N)$ bits in total. 
Otherwise, the block is so short that we can extract the $k$-th occurrence of {\bf 1} inside using
the shared table. The starting positions of the blocks are saved one after 
another using $\Oh(\frac{N}{s'}\log(s^{2}))=o(N)$ bits in total, so that we
can access the $i$-th block in $\Oh(1)$ time. With the starting
positions of every {\bf 1} in a sparse block the situation is slightly more complicated,
as not every block is sparse, and we cannot afford to store a separate pointer for every block.
The descriptions of all sparse blocks are also saved
one after another, but additionally we store for every block a single bit denoting
whether it is sparse or dense. These bits are concatenated together and enriched with
a $\rank_{1}$ structure, which allows us to compute in $\Oh(1)$ time where the
description of a given sparse block begins.

To partition a superblock into blocks efficiently, we must be able
to generate multiple blocks simultaneously. Assume that a superblock spans
a number of whole words (if not, we can shift its description paying
$\Oh(1)$ per word, which is fine). Then we can process these words
one-by-one after some preprocessing. First, consider the process of
creating the blocks when we read the bits one-by-one. We maintain the
description of the current block, which consists of its length $\ell' \le \ell$, the number
of {\bf 1}s seen so far, and their starting positions. The total size of the
description is $\Oh(\log \ell + s' \log \ell)$ bits. Additionally, we maintain
our current position inside the superblock, which takes another $\Oh(\log\ell)$
bits. After reading the next bit, we update the current position, and
either update the description of the current block, or start a new block.
In the latter case, we output the position of the first {\bf 1} inside
the current block, store one bit denoting whether $\ell' > s'^{2}$,
and if so, additionally save the positions
of all {\bf 1}s inside the current block. 

Now we want to accelerate processing the words describing a superblock.
Informally, we would like to read the whole next word, and generate all the corresponding
data in $\Oh(1)$ time. The difficulty is that the starting positions of the blocks, the
bits denoting whether a block is sparse or dense, and the descriptions of sparse blocks
are all stored in separate locations. 
To overcome this issue, we can view the whole process as an automaton with
one input tape, three separate output tapes, and a state described by
$\Oh(\log\ell+s'\log\ell)=\Oh(\polyloglog(n))$ bits. In every step,
the automaton reads the next bit from the input tape, updates its state,
and possibly writes into the output tapes. Because the situation is fully
deterministic, we can preprocess its behavior, i.e., for every initial state and
a sequence of $\frac{1}{2}\log n$ bits given in the input tape, we can precompute
the new state and the data written into the output tapes (observe that, by
construction, this is always at most $\log n$ bits). The cost of such
preprocessing is $\Oh(2^{\frac{1}{2}\log n+\polyloglog(n)})=\tilde\Oh(\sqrt{n})$.
The output buffers are implemented as packed lists, so appending at most
$\log n$ bits to any of them takes just $\Oh(1)$ time.
Therefore, we can process $\frac{1}{2}\log n$ bits in $\Oh(1)$ time, which accelerates
generating the blocks by the desired factor of $\log n$.
\mayqed
\end{proof}

\begin{replemma}{lm:rank/select construction 2}
Let $d\le \log^{\eps} n$ for $\eps<\frac13$.
Given a string $D[1..N]$ over the alphabet $[0,d-1]$ packed in $\frac{N\log d}{\log n}$ machine
words, we can extend it in $\Oh(\frac{N\log d}{\log n})$ time with a generalized rank/select
structure occupying additional $o(\frac{N}{\log n})$ space, assuming
an $\tilde\Oh(\sqrt{n})$ time and space preprocessing shared by all instances
of the structure.
\end{replemma}
\begin{proof}
The construction is similar to the one from Lemma~\ref{lem:rank_select},
but requires careful adjustment of the parameters. The main difficulty is that
we cannot afford to consider every $c\in [0,d-1]$ separately as in the binary
case.

\paragraph{Rank.}
We split $D$ into superblocks of length $d \log^{2} n$, and then split every superblock into
blocks of length $\frac{1}{3}\frac{\log n}{\log d}$. For every superblock, we store the
cumulative generalized rank of every character, i.e., for every character $c$ we store the number of positions
where characters $c'\le c$ occur in the prefix of the string up to the beginning of the superblock.
This takes $\Oh(d\log n)$ bits per superblock, which is $\Oh(\frac{N}{\log n})=o(N)$ in total.

For every block, we store the cumulative generalized rank of every character
within the superblock, i.e., for every
character $c$ we store the number of positions where characters $c'\le c$ occur in the prefix
of the superblock up to the beginning of the block. 
This takes $\Oh(d\log(d \log^{2} n))=\Oh(\log^{\eps}n\log\log n)$ bits per block,
which is $\Oh(N/\frac{\log n}{\log d}\log^{\eps}n\log\log n)=o(N)$ in total.

Finally, we store a shared table, which given
a string of $\frac{1}{3}\frac{\log n}{\log d}$ characters packed into
a single machine word of length $\frac{1}{3}\log n$, a character $c$, and a number
$i$, returns the number of positions where characters $c'\le c$ occur in the prefix of length $i$.
Both the size and the construction time of the table is $\tilde\Oh(2^{\frac{1}{3}\log n})$. 
It is clear that using the table, together with the cumulative ranks
within the blocks and superblocks, we can compute any generalized rank in $\Oh(1)$ time.

We proceed with a construction algorithm.
All ranks can be computed by
reading whole blocks. More precisely, the cumulative generalized ranks of the
next block can be computed by taking the cumulative generalized ranks of
the previous block (if it belongs to the same superblock) stored in one word
of length $d\log(d\log^{2}n)$, and updating it with
the next $\frac{1}{3}\frac{\log n}{\log d}$ characters stored in one word of
length $\frac{1}{3}\log n$. Such updates can be preprocessed for every possible
input in $\tilde\Oh(2^{\log^{\eps}n+\frac{1}{3}\log n})$ time and space. 
Then each block can be handled in $\Oh(1)$ time, which gives $\Oh(\frac{N\log d}{\log n})$ in total.

The cumulative generalized ranks of the next superblock can be computed
by taking the cumulative generalized ranks of the previous superblock,
and updating it with the cumulative generalized ranks of the last block in that previous superblock.
This can be done by simply iterating over the whole alphabet and updating each
generalized rank separately in $\Oh(1)$ time, which is $\Oh(\frac{N}{d\log^{2}n}d)=o(\frac{N\log d}{\log n})$ in total.

\paragraph{Select.} We store a separate structure for every $c\in[0,d-1]$. Even though the
structures must be constructed together to guarantee the promised complexity,
first we describe a single such structure. Nevertheless, when stating
the total space usage, we mean the sum over all $c\in [0,d-1]$.

We split $D$ into superblocks by choosing every $s$-th occurrence of $c$ in $D$,
where $s=d\log^{2} n$, and store the starting position of every superblock.
This takes $\Oh((d+\frac{N}{s})\log n)=o(N)$ bits in total. As in the binary
case, then we proceed differently depending on the length $\ell$ of the superblock.

If $\ell > s^{2}$, we store the position of every occurrence of $c$ inside the
superblock explicitly. This takes $\Oh(d\frac{N}{s^{2}}s\log n)=o(N)$ bits.

If $\ell \leq s^{2}$, we split the superblock into smaller blocks by choosing
every $s'$-th occurrence of $c$ inside, where $s'=d (\log \log n)^2$. We write
down the starting position of every block relative to the starting position of the
superblock, which takes $\Oh((d+\frac{N}{s'})\log s)=\Oh(\frac{N}{d\log\log n})=o(\frac{N}{d})$ bits in total. Then, if the block
is of length $\ell ' > d \cdot s'^{2}$, we store the position of every occurrence of $c$
inside, again relative to the starting position of the superblock, which takes
$\Oh(d\frac{N}{d\cdot s'^{2}}s'\log s)=\Oh(\frac{N}{d\log\log n})=o(\frac{N}{d})$ bits in total. Otherwise, the block spans at most
$\Oh(\log^{3\eps}n (\log \log n)^{4})=o(\log n)$ bits in $D$,
hence the $k$-th occurrence of $c$ there can be extracted in $\Oh(1)$ time using a shared
table of size $\Oh(\sqrt{n})$. Descriptions of the sparse blocks are stored
one after another as in the binary case.

This completes the description of the structure for a single $c\in [0,d-1]$. To access the
relevant structure when answering a select query, we additionally store the pointers
to the structures in an array of size $\Oh(d)$. Now we will argue that all structures
can be constructed in $\Oh(\frac{N\log d}{\log n})$ time, but first
let us recall what data is stored for a single $c\in [0,d-1]$:
\begin{enumerate}[(1)]\compact
\item an array storing the starting positions of all superblocks and pointers to their descriptions,
\item for every sparse superblock, a list of positions of every occurrence of $c$ inside,
\item for every dense superblock, a bitvector with the $i$-th bit denoting if the $i$-th block inside
is sparse,
\item an array storing the starting positions of all blocks relative to the starting position of their superblock,
\item for every sparse block, a list of positions of every occurrence of $c$ inside relative to the starting
position of its superblock.
\end{enumerate}
First, we compute the starting positions of all superblocks by scanning the input while maintaining
the number of occurrences of every $c\in [0,d-1]$ in its current superblock. These
counters take $\Oh(d\log s)=o(\log n)$ bits together, and are packed in a single machine word.
We read the input word-by-word and update the counters, which can be done in $\Oh(1)$ time
after an appropriate preprocessing. Whenever we notice that a counter exceeds $s$, we output
a new block. This requires just $\Oh(1)$ additional time per a superblock, assuming an
appropriate preprocessing. The total time complexity is hence $\Oh(\frac{N\log d}{\log n})$ so far.
Observe that we now know which superblock is sparse.

To compute the occurrences in sparse superblocks, we perform another word-by-word scan of the input. During the scan, we keep track of the current superblock
for every $c\in [0,d-1]$, and if it is sparse, we extract and copy the starting positions of all
occurrences of $c$. This requires just $\Oh(1)$ additional time per an occurrence of $c$ in a sparse
superblock, which sums up to $\Oh(\frac{N}{\log^{2} n})$.

The remaining part is to split all dense superblocks into blocks. Again, we view the the process as
an automaton with one input tape, three output tapes corresponding to different parts of the structure, and
a state consisting of $\Oh(\log\ell+s'\log \ell)$ bits. Here, we must be more careful than
in the binary case: because we will be reading the whole input word-by-word, not just a single
dense superblock, it might happen that $\ell$ is large. Therefore, before feeding the input into the
automaton, we mask out all occurrences of $c$ within sparse superblocks, which can be done
in $\Oh(1)$ time per an input word after an appropriate preprocessing, if we keep track of the current
superblock for every $c\in [0,d-1]$ while scanning through the input. Then we can combine all $d$ automata
into a larger automaton equipped with $3d$ separate output tapes and a state consisting of $o(\log n)$ bits. We preprocess
the behavior of the large automaton after reading a sequence of $\frac{1}{2}\log n$ bits given
in the input tape for every initial state. The result is the new state, and the data written into
each of the $3d$ output tapes. Now we again must be more careful: even though the output tapes are
implemented as packed lists, and we need just $\Oh(1)$ time to append the preprocessed data to any them,
we cannot afford to touch every output tape
every time we read the next $\frac{1}{2}\log n$ bits from the input.
Therefore, every output tape is equipped with an output buffer consisting of $\frac{1}{5}\frac{\log n}{d}$ bits,
and all these output buffers are stored in a single machine word consisting of $\frac{1}{5}\log n$ bits.
Then we modify the preprocessing: for every initial state, a sequence of $\frac{1}{5}\log n$ input bits,
and the initial content of the output buffers, we compute the new state, the new content of the output
buffers, and zero or more full chunks of $\frac{1}{5}\frac{\log n}{d}$ bits that should be written into the respective
output tapes. Such preprocessing is still $\Oh(\sqrt{n})$, and now the additional time is just $\Oh(1)$
per each such chunk. Because we have chosen the parameters so that the size of the additional data in bits
is $o(\frac{N}{d})$, the total time complexity is $\Oh(\frac{N\log d}{\log n})$ as claimed.
\mayqed
\end{proof}

\end{document}